\definecolor{ForestGreen}{rgb}{0.1333,0.5451,0.1333}
\definecolor{DarkRed}{rgb}{0.65,0,0}
\definecolor{Red}{rgb}{1,0,0}
\newcommand{\eat}[1]{}
\def\danupon#1{{\sf\textcolor{pink}{DN: #1}}}
\def\aaron#1{{\sf\textcolor{orange}{AB: #1}}}
\def\christian#1{{\sf\textcolor{blue}{CW: #1}}}
\def\danupon#1{}
\def\aaron#1{}
\def\christian#1{}
\declaretheorem[numberwithin=section]{theorem}
\declaretheorem[numberlike=theorem]{lemma}
\declaretheorem[numberlike=theorem]{corollary}
\declaretheorem[numberlike=theorem]{claim}
\declaretheorem[numberlike=theorem]{observation}
\declaretheorem[numberlike=theorem]{invariant}
\declaretheorem[numberlike=theorem]{remark}
\crefname{algorithm}{Algorithm}{Algorithms}
\Crefname{algorithm}{Algorithm}{Algorithms}
\theoremstyle{definition}
\declaretheorem[numberlike=theorem]{definition}
\theoremstyle{definition}
\declaretheorem[numberlike=theorem]{assumption}
\NewDocumentCommand{\cutsize}{O{\delta} g g e{_}}{%
  #1
  \IfNoValueF{#3}{_{#3}}%
  \IfNoValueF{#4}{_{#4}}%
  \IfNoValueF{#2}{\parof{#2}}
}%
\newcommand{\ignore}[1]{}
\renewcommand{\paragraph}[1]{\medskip\noindent{\bf #1}\xspace}
\newcommand{\ElimNeg}{\operatorname{ElimNeg}}
\newcommand{\SPaverage}{\ElimNeg}
\renewcommand{\sin}{s_{in}}
\newcommand{\gin}{\Gin}
\newcommand{\win}{w_{in}}
\newcommand{\tspmain}{\mathcal{T}_{spmain}}
\newcommand{\dist}{\operatorname{dist}}
\newcommand{\Otil}{\tilde{O}}
\newcommand{\cA}{\mathcal{A}}
\newcommand{\Dijkstra}{\operatorname{Dijkstra}}
\newcommand{\SPmain}{\operatorname{SPmain}} 
\newcommand{\ScaleDown}{\operatorname{ScaleDown}}
\newcommand{\SCCDecomposition}{\operatorname{LowDiamDecomposition}}
\newcommand{\LowDiamDecomposition}{\SCCDecomposition}
\newcommand{\FixAlmostDag}{\operatorname{FixDAGEdges}}
\newcommand{\LDD}{\operatorname{LDD}}
\newcommand{\eneg}{E^{neg}}
\newcommand{\esep}{E^{rem}}
\newcommand{\erecurse}{E^{recurse}}
\newcommand{\eboundary}{E^{boundary}}
\newcommand{\disthat}{\widehat{\dist}}
\newcommand{\wstar}{w^*}
\newcommand{\Gstar}{G^*}
\newcommand{\Gin}{G_{in}}
\newcommand{\Ein}{E_{in}}
\newcommand{\Ghat}{\hat{G}}
\newcommand{\Gbar}{\bar{G}}
\newcommand{\what}{\hat{w}}
\newcommand{\wbar}{\bar{w}}
\newcommand{\GB}{G^{B}}
\newcommand{\wB}{w^{B}}
\newcommand{\HB}{H^B}
\DeclareMathOperator{\Geo}{Geo}
\newcommand{\SPLasVegas}{\operatorname{SPLasVegas}}
\newcommand{\SPMonteCarlo}{\operatorname{SPMonteCarlo}}
\newcommand{\FindThresh}{\operatorname{FindThresh}}
\newcommand{\ball}{\operatorname{Ball}}
\newcommand{\ingball}{\operatorname{Ball_G^{in}}}
\newcommand{\outgball}{\operatorname{Ball_G^{out}}}
\newcommand{\stargball}{\operatorname{Ball_G^{*}}}
\newcommand{\ingzball}{\operatorname{Ball_{G_0}^{in}}}
\newcommand{\outgzball}{\operatorname{Ball_{G_0}^{out}}}
\newcommand{\stargzball}{\operatorname{Ball_{G_0}^{*}}}
\newcommand{\boundary}{\partial}
\title{Negative-Weight Single-Source Shortest Paths in Near-linear Time}
\author{Aaron Bernstein\thanks{Rutgers University} \and Danupon Nanongkai\thanks{Max Planck Institute for Informatics and KTH. Work done while at BARC, University of Copenhagen.} \and Christian Wulff-Nilsen\thanks{Work done while at BARC, University of Copenhagen.}}
\date{}
\begin{document}

\maketitle

\pagenumbering{gobble}

\begin{abstract}
We present a randomized algorithm that computes single-source shortest paths (SSSP) in $O(m\log^8(n)\log W)$ time when edge weights are integral and can be negative.\footnote{Throughout, $n$ and $m$ denote the number of vertices and edges, respectively, and $W\geq 2$ is such that every edge weight is at least $-W$. $\tilde O$ hides polylogarithmic factors.} This essentially resolves the classic negative-weight SSSP problem.
The previous bounds are $\tilde O((m+n^{1.5})\log W)$ [BLNPSSSW FOCS'20] and $m^{4/3+o(1)}\log W$ [AMV FOCS'20]. Near-linear time algorithms were known previously only for the special case of planar directed graphs [Fakcharoenphol and Rao FOCS'01].

In contrast to all recent developments that rely on sophisticated continuous optimization methods and dynamic algorithms, our algorithm is simple: it requires only a simple graph decomposition and elementary combinatorial tools. In fact, ours is the first combinatorial algorithm for negative-weight SSSP to break through the classic $\Otil(m\sqrt{n}\log W)$ bound from over three decades ago [Gabow and Tarjan SICOMP'89].

\end{abstract}

\clearpage
\tableofcontents
\clearpage

\pagenumbering{arabic}

\section{Introduction} \label{sec:intro}

We consider the single-source shortest paths (SSSP) problem with (possibly negative) integer weights. Given an $m$-edge $n$-vertex directed weighted graph $G=(V,E, w)$ with integral edge weight $w(e)$ for every edge $e\in E$ and a source vertex $s\in V$, we want to compute the distance from $s$ to $v$, denoted by $\dist_G(s,v)$, for every vertex in $v$. 

Two textbook algorithms for SSSP are Bellman-Ford and Dijkstra's algorithm. Dijkstra's algorithm is near-linear time ($O(m+n\log n)$ time), but restricted to {\em nonnegative} edge weights.\footnote{In the word RAM model, Thorup improved the runtime to $O(m+n\log\log(C))$ when $C$ is the maximal edge weight \cite{Thorup04} and to linear time for {\em undirected} graphs \cite{Thorup99}.} 
With negative weights, we can use the Bellman-Ford algorithm, which only requires that there is no {\em negative-weight cycle} reachable from $s$ in $G$; in particular, the algorithm either returns $\dist_G(s,v)\neq -\infty$ for every vertex $v$ or reports that there is a cycle reachable from $s$ whose total weight is negative. Unfortunately, the runtime of Bellman-Ford is $O(mn)$. 

Designing faster algorithms for SSSP with negative edge weights (denoted negative-weight SSSP) is one of the most fundamental and long-standing problems in graph algorithms, and has witnessed waves of exciting improvements every few decades since the 50s. Early works in the 50s, due to Shimbel~\cite{shimbel55}, Ford~\cite{ford56}, Bellman ~\cite{bellman58}, and Moore~\cite{moore59} resulted in the $O(mn)$ runtime. 
In the 80s and 90s, the scaling technique led to a wave of improvements (Gabow~\cite{Gabow85}, Gabow and Tarjan~\cite{GabowT89}, and Goldberg~\cite{Goldberg95}), resulting in runtime $O(m\sqrt{n}\log W)$, where $W\geq 2$ is the minimum integer such that $w(e)\geq -W$ for all $e\in E$.\footnote{
The case when $n$ is big and $W$ is small can be improved by the $O(n^\omega W)$-time algorithms of Sankowski~\cite{s05}, and Yuster and Zwick~\cite{yz05}.}
In the last few years, advances in continuous optimization and dynamic algorithms have led to a new wave of improvements, which achieve faster algorithms for the more general problems of transshipment and min-cost flow, and thus imply the same bounds for negative-weight SSSP (Cohen, Madry, Sankowski, Vladu  \cite{cmsv17}; Axiotis, Madry, Vladu  \cite{amv20}; BLNPSSSW \cite{BrandLNPSSSW20,BrandLLSS0W21-maxflow,blss20}). This line of work resulted in an near-linear runtime ($\tilde O((m+n^{1.5})\log W)$ time) on moderately dense graphs \cite{BrandLNPSSSW20} and $m^{4/3+o(1)}\log W$ runtime on sparse graphs \cite{amv20}.\footnote{$\Otil$-notation hides polylogarithmic factors. The dependencies on $W$ stated in \cite{amv20,BrandLNPSSSW20} are slightly higher than what we state here. These dependencies can be reduced by standard techniques (weight scaling, adding dummy source, and eliminating high-weight edges).}
For the special case of planar directed graphs~\cite{LiptonRT79,HenzingerKRS97,FakcharoenpholR06,KleinMW10,MozesW10}, near-linear time complexities were known since the 2001 breakthrough of Fakcharoenphol and Rao~\cite{FakcharoenpholR06} where the best current bound is $O(n\log^2(n)/\log\log n)$~\cite{MozesW10}. No near-linear time algorithm is known even for a somewhat larger class of graphs such as bounded-genus and minor-free graphs (which still requires $\tilde O(n^{4/3}\log W)$ time \cite{Wulff-Nilsen11}).
This state of the art motivates two natural questions: 
\begin{enumerate}
    \item \noindent {\em Can we get near-linear runtime for all graphs?} 
\item \noindent {\em Can we achieve efficient algorithms without complex machinery?} 
\end{enumerate}

For the second question, note that currently all state-of-the-art results for negative-weight SSSP are based on min-cost flow algorithms, and hence rely on sophisticated continuous optimization methods and a number of complex dynamic algebraic and graph algorithms (e.g. \cite{sw19,nsw17,cglnps20,BernsteinBNPSS20,ns17,w17}). It would be useful to develop  simple efficient algorithms that are specifically tailored to negative-weight SSSP, and thus circumvent the complexity currently inherent in flow algorithms; the best known bound of this kind is still the classic $O(m\sqrt{n}\log(W))$ from over three decades ago \cite{GabowT89,Goldberg95}. A related question is whether it is possible to achieve efficient algorithms for the problem using combinatorial tools, or whether there are fundamental barriers that make continuous optimization necessary.

\subsection{Our Result}
In this paper we resolve both of the above questions for negative-weight SSSP: we present a simple combinatorial algorithm that reduces the running time all the way down to near-linear.

\begin{theorem}
\label{thm:main-REALLY}
\label{thm:result}
There exists a randomized (Las Vegas) algorithm that takes $O(m\log^8(n)\log(W))$ time with high probability (and in expectation) for an $m$-edge input graph $\Gin$ and source $\sin$. It either returns a shortest path tree from $\sin$ or returns a negative-weight cycle. 
\end{theorem}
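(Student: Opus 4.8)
The plan is to reduce the negative-weight SSSP problem to a sequence of calls to a subroutine that works only when weights are ``mildly'' negative, and to control the amount of negativity via a scaling scheme. Concretely, I would first observe that by adding a dummy source $s^*$ with weight-$0$ edges to all vertices, and by scaling weights, it suffices to handle the case where edge weights are integral and at least $-2W$ for a parameter $W$ that we halve in each of $O(\log W)$ outer rounds; each round applies a potential (feasible price) function $\phi$ obtained from the previous round so that all edge weights become $\ge -1$ (after appropriate rescaling), then we must turn a graph with weights $\ge -1$ into one with nonnegative weights in near-linear time, after which Dijkstra finishes. Johnson's reweighting trick shows that finding such a $\phi$ is equivalent to computing $\dist_{G}(s^*,\cdot)$ in the graph where all weights have been rounded so that only $O(1)$ layers of negativity remain; the whole difficulty is concentrated there.

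The core subroutine, which I will call $\ScaleDown$, takes a graph $G$ whose negative edges have total ``negativity'' bounded (say $\sum_{e:w(e)<0} |w(e)| \le$ something like $n$, or equivalently each negative edge has weight $\ge -1$ but there may be many of them) and produces a price function making weights nonnegative, in near-linear time. I would build this recursively. The key structural tool is a low-diameter decomposition ($\LowDiamDecomposition$/$\LDD$): partition $V$ into pieces such that each piece has small weak diameter in the graph restricted to nonnegative edges, while only a small (polylogarithmic) fraction of edges, in expectation, are cut between pieces. Recurse on each piece independently to make weights nonnegative inside pieces; then the only remaining negative edges are the cut edges plus the original negative edges, and because there are few of them one can afford a Bellman-Ford-like fixing step (run $\SPWithFewNegEdges$, i.e., alternate Dijkstra on the nonnegative part with $O(\text{few})$ Bellman-Ford relaxations) to eliminate the remaining negativity. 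Summing the $O(\log n)$ recursion levels, each costing near-linear time, gives the near-linear bound for $\ScaleDown$; wrapping it in the $O(\log W)$ scaling loop gives the claimed $O(m\log^8 n\log W)$.

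To make this a \emph{Las Vegas} algorithm returning either a valid shortest-path tree or a genuine negative-weight cycle, I would run the randomized Monte Carlo procedure above and then \emph{verify}: given the claimed price function $\phi$, check in $O(m)$ time that every reduced weight $w_\phi(e)=w(e)+\phi(u)-\phi(v)\ge 0$; if so, Dijkstra on $w_\phi$ yields a correct shortest-path tree (or detects unreachability). If verification fails, either there is no feasible potential — in which case a negative cycle exists and can be extracted by a single Bellman-Ford pass from the dummy source — or the randomness was unlucky, in which case we restart. Since the decomposition succeeds with constant (indeed high) probability per level, the expected number of restarts is $O(1)$ and we get the high-probability and expectation bounds.

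The main obstacle I anticipate is the $\LowDiamDecomposition$ step on \emph{directed} graphs with a mix of signs: one needs a decomposition that simultaneously has small weak diameter measured via nonnegative-weight shortest paths and cuts few edges \emph{in expectation per edge} (not just few in total), and one must ensure the recursion's negative-edge budget actually shrinks by the right factor at each level so that the ``few negative edges'' hypothesis of $\SPWithFewNegEdges$ is maintained. Getting the parameters of the ball-growing / randomized radius argument to interact correctly with the scaling — so that after $O(\log n)$ levels the total extra negativity introduced is still absorbable — is where the bulk of the technical care, and the polylogarithmic exponent $8$, will come from.
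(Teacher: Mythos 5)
Your sketch of the algorithmic core — scaling down the negative-weight budget, low-diameter decomposition of the nonnegative part, recursion inside the low-diameter pieces, and a Dijkstra/Bellman-Ford hybrid to absorb the few remaining negative edges — is essentially the paper's $\ScaleDown$/$\SPmain$ architecture, modulo some imprecision (the recursion parameter the paper controls is $\eta(G^B)$, the maximum number of negative edges on a shortest path from the dummy source, halved each level; and between the ``inside-SCC'' recursion and the Bellman-Ford-like cleanup there is a separate DAG-reweighting phase for the non-cut inter-SCC edges). That part is fine.

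The genuine gap is in your Las Vegas conversion. You propose: run the Monte Carlo routine, verify $w_\phi(e)\ge 0$ in $O(m)$ time, and if verification fails, either extract a negative cycle ``by a single Bellman-Ford pass from the dummy source'' or restart. This does not work as stated, for two reasons. First, when $\Gin$ actually contains a negative-weight cycle, $\SPmain$ never terminates, so there is nothing to verify; you need a time cutoff, but then you cannot distinguish an unlucky run (restart) from a genuine negative cycle (must return the cycle) — both manifest as a timeout. Second, a \emph{single} Bellman-Ford pass certainly does not locate a negative cycle; detecting one via Bellman-Ford needs $\Theta(n)$ rounds, i.e., $O(mn)$ time, which defeats the whole point. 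The paper's actual reduction is more delicate: it multiplies all weights of $\Gin$ by $n^3$, runs a binary search ($\FindThresh$, itself calling $\SPMonteCarlo$ $O(\log W)$ times) to find the smallest additive shift $B$ for which $G^{+B}$ is free of negative cycles, then uses the resulting shortest-path potentials on $G^{+B}$ so that any cycle of small reweighted weight (at most $n$ per edge) in the reweighted graph is forced, by the $n^3$ quantization of cycle weights and the bound $B\ge n^2$, to be a genuine negative cycle in the original graph; an arbitrary such cycle is found and verified before being returned, and all ``bad'' events trigger a restart that occurs only with negligible probability. Your proposal is missing both this threshold-search machinery and the quantization argument that makes cycle extraction near-linear, so as written the Las Vegas guarantee and the negative-cycle output do not follow.
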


Our algorithm relies only on basic combinatorial tools; the presentation is self-contained and only uses standard black-boxes such as Dijkstra's and Bellman-Ford algorithms. In particular, it is a scaling algorithm enhanced by a simple graph decomposition algorithm called {\em Low Diameter Decomposition} which has been studied since the 80s; our decomposition is obtained in a manner similar to some known algorithms (see Section \ref{sec:decomposition} for a more detailed discussion). Our main technical contribution is showing how low-diameter decomposition---which works only on graphs with non-negative weights---can be used to develop a recursive scaling algorithm for SSSP with negative weights.
As far as we know, all previous applications of this decomposition were used for  parallel/distributed/dynamic settings for problems that do not involve negative weights, and our algorithm is also the first to take advantage of it in the classical sequential setting; we also show that in this setting, there is a simple and efficient algorithm to compute it.

\paragraph{Perspective on Other Problems:} While our result is specific to negative-weight SSSP, we note that question \emph{(2)} above in fact applies to a much wider range of problems. The current landscape of graph algorithms is that for many of the most fundamental problems, including ones taught in undergraduate courses and used regularly in practice, the state-of-the-art solution is a complex algorithm for the more general min-cost flow problem: some examples include negative-weight SSSP, bipartite matching, the assignment problem, edge/vertex-disjoint paths, s-t cut, densest subgraph, max flow, transshipment, and vertex connectivity. This suggests a research agenda of designing simple algorithms for these fundamental problems, and perhaps eventually their generalizations such as min-cost flow. We view our result on negative-weight SSSP as a first step in this direction.

\paragraph{Independent and Follow-Up Results.} Independently from our result, the recent major breakthrough by Chen, Kyng, Liu,  Peng, Probst Gutenberg,
and Sachdeva~\cite{ChenKLPGS22} culminates the line of works based on continuous optimization and dynamic algorithms (e.g. \cite{DaitchS08,Madry13,LeeS14,Madry16,cmsv17,cls19,b20,LiuS20_stoc,amv20,blss20,BrandLNPSSSW20,BrandLLSS0W21-maxflow}) and achieves an almost-linear time bound\footnote{$\tilde O(m^{1+o(1)}\log^2 U)$ time when vertex demands, edge costs, and upper/lower edge capacities are all integral and bounded by $U$ in absolute value.} 
for min-cost flow. 
The authors thus almost match our bounds for negative-weight SSSP as a special case of their result: their runtime is $m^{1+o(1)}\log(W)$ versus our $O(m\cdot \mathrm{polylog}(n)\log(W))$ bound. The two results are entirely different, and as far as we know there is no overlap in techniques. 

The above landmark result essentially resolves the running-time complexity for a wide range of fundamental graph problems, modulo the extra $m^{o(1)}$ factor. We believe that this makes it a natural time to pursue question {\em (2)} for these problems, outlined above.

In this paper, we prioritize simplicity and modularity, and not optimizing the logarithmic factors. The follow-up work by Bringmann, Cassis, and Fischer \cite{BringmannCF23} greatly optimizes our framework to reduce the running time to $O(m\log^2(n)$ $\log(nW)\log\log n).$ There is also follow-up work by Ashvinkumar \emph{et al.} showing how the framework can be applied to the parallel and distributed models of computation \cite{Ashvinkumar2023}.

\subsection{Techniques}\label{sec:decomposition}
Our main contribution is a new recursive scaling algorithm called $\ScaleDown$: see Section \ref{sec:ScaleDown}, including an overview in Section \ref{sec:ScaleDown:overview}. In this subsection, we highlight other techniques that may be of independent interest.

\paragraph{Low-Diameter Decomposition.} One of our key subroutines is an algorithm that decomposes any directed graph with \textit{non-negative} edge weights into strongly-connected components (SCCs) of small diameter. In particular, the algorithm computes a small set of edges $\esep$ such that all SCCs in the graph $G \setminus \esep$ have small weak diameter. Although the lemma below only applies to non-negative weights, we will show that it is in fact extremely useful for our problem.
\begin{restatable}{lemma}{SCCLemma}\label{lem:SCCDecomposition}
There is an algorithm $\LowDiamDecomposition(G,D)$ with the following guarantees:
\begin{itemize}[noitemsep]
    \item INPUT: an $m$-edge, $n$-vertex graph $G = (V,E,w)$ with non-negative integer edge weight function $w$ and a positive integer $D$.
    \item OUTPUT: A set of edges $\esep$ with the following guarantees:
    \begin{itemize}[noitemsep]
        \item each SCC of $G \setminus \esep$ has weak diameter at most $D$; that is, if $u,v$ are in the same SCC, then $\dist_G(u,v) \leq D$ and $\dist_G(v,u) \leq D$.
        \item For every $e \in E$, $\Pr[e \in \esep] = O\left(\frac{w(e) \cdot \log^2 n}{D} + n^{-10}\right)$. These probabilities are not guaranteed to be independent.\footnote{The $10$ in the exponent suffices for our application but can be replaced by an arbitrarily large constant.}
    \end{itemize}
    \item RUNNING TIME: The algorithm has running time $O(m\log^2 n+n\log^3 n)$.
\end{itemize}
\end{restatable}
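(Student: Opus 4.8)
The plan is to implement $\LowDiamDecomposition$ as a recursive ball-growing procedure on the directed graph $G$, in the spirit of the classical low-diameter decompositions of Awerbuch, Bartal, Linial–Saks, and their directed analogues (e.g.\ the padded-decomposition / random-radius technique). Concretely, I would first contract each strongly connected component of $G$ that already has weak diameter at most $D$ (so that within each recursive call we only work on a piece that is ``large''), and then apply the following step to each remaining piece $H$. Pick an arbitrary vertex (or a uniformly random one) $v \in H$ and sample a radius $R$ from a geometric (or truncated exponential) distribution with mean roughly $D/(c\log n)$, capped at $D/2$. Grow the out-ball $\outgball(v,R) = \{u : \dist_H(v,u)\le R\}$; the set of edges leaving this ball goes into $\esep$, and we recurse separately on $\outgball(v,R)$ and on $H\setminus \outgball(v,R)$. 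To also control distances in the reverse direction, I would run the symmetric step with in-balls $\ingball(v,R)$. The standard argument shows that an edge $e=(x,y)$ is cut with probability $O(w(e)\log n / D)$ per ``level'' of recursion: conditioned on the ball boundary reaching the vicinity of $e$, the probability that $R$ lands in the length-$w(e)$ window separating $x$ from $y$ is $O(w(e)/\mathbb{E}[R]) = O(w(e)\log n/D)$ by the memorylessness of the geometric distribution.

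The key steps in order: (1) preprocessing — compute SCCs of $G$ via Tarjan in $O(m)$ time, identify which already satisfy the diameter bound using a single-source computation inside each small component, and contract them; (2) the recursive carving — on each remaining piece, sample a center and a random radius, compute the ball with a truncated Dijkstra (only explore up to distance $D/2$, which bounds the work by the volume of the ball times a log factor), remove boundary edges into $\esep$, and recurse on both sides; (3) termination — argue that after $O(\log n)$ levels every surviving piece has been split enough that each final SCC has weak diameter $\le D$; here the point is that each carving step with radius $\le D/2$ either isolates a ball of ``small enough'' volume or, with constant probability (by a Markov-type argument on the random center), splits the vertex set of the piece by a constant factor, so the recursion depth is $O(\log n)$; (4) the probability bound — union-bound the per-level cut probability over the $O(\log n)$ levels to get $O(w(e)\log^2 n/D)$, and add an $n^{-10}$ slack to absorb the low-probability event that a ball fails to shrink the piece (handled by re-randomizing or by a deterministic fallback that cuts at most $O(\log n)$ extra edges); (5) running-time accounting — each vertex/edge participates in $O(\log n)$ levels, and each level costs $O(m\log n + n\log^2 n)$ for the truncated Dijkstras (the $\log^2 n$ on the $n$-term coming from heap operations plus the per-level overhead), giving the stated $O(m\log^2 n + n\log^3 n)$.

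The main obstacle I expect is step (3): unlike the undirected setting, in a directed graph carving out an out-ball does not obviously make progress toward bounding \emph{in}-distances, and a single ball need not reduce the size of the remaining piece (it could be tiny). The fix is the usual ``either the ball is big enough to recurse on by itself, or it is small — in which case, because we chose the center at random, with constant probability the ball contains a constant fraction of the piece, so in expectation $O(\log n)$ rounds of re-sampling suffice at each level.'' Making this rigorous requires care: I would formalize it by tracking, for each piece, the quantity ``number of vertices,'' and showing via a potential/charging argument that the total recursion tree has $O(n\log n)$ pieces and $O(\log n)$ depth with high probability, so that both the running time and the union bound over levels go through. The remaining work — verifying the weak-diameter guarantee holds for the \emph{final} SCCs using distances in the original $G$ rather than in the current piece (monotonicity: $\dist_G \le \dist_H$ always, but we need the reverse-type bound, which follows because every edge on a short within-piece path is still present in $G$) — is routine but must be stated carefully.
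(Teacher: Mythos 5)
Your proposal shares the basic ball-carving-with-geometric-radius idea, and your per-edge cut-probability analysis via memorylessness is essentially the same as the paper's. But the structural argument for recursion depth has a genuine gap, and the fix you sketch does not close it.

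You propose to recurse on both the ball and its complement, and to control the recursion depth by arguing that ``with constant probability the ball contains a constant fraction of the piece, so re-sampling the center $O(\log n)$ times suffices.'' This fails on graphs where \emph{no} vertex has a large ball at the relevant radius --- consider a directed path (or cycle) with $n$ vertices, unit weights, and $D \ll n$: every in-ball and out-ball of radius $D/4$ has $O(D)$ vertices, a vanishing fraction of the piece, regardless of where you center it. In that regime re-sampling never yields a balanced split, and recursing on the complement simply re-runs the preprocessing on a graph of nearly the same size. The recursion depth becomes $\Omega(n/D)$ and the total time $\Omega(mn/D)$, which is not near-linear. The same example shows why tracking ``number of vertices'' as a potential does not give $O(\log n)$ depth: the complement does not shrink geometrically.

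The paper's algorithm avoids this by a different mechanism you did not identify: a light/heavy dichotomy computed once per recursive call via random sampling. It samples $\Theta(\log n)$ probe vertices and marks $v$ \emph{in-light} (resp.\ \emph{out-light}) if its in-ball (resp.\ out-ball) of radius $D/4$ contains few probes --- certifying w.h.p.\ that the ball has at most $0.7|V(G)|$ vertices --- and \emph{heavy} otherwise, certifying w.h.p.\ that both balls exceed $|V(G)|/2$. It then runs a \emph{while loop} over the remaining light vertices, carving each one's ball, adding its boundary to $\esep$, \emph{recursing only on the carved ball} (which is guaranteed small, so depth is $O(\log n)$), and continuing the loop on what remains without any fresh recursion or re-marking. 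When only heavy vertices remain, the loop stops: any two heavy vertices $x,y$ have $|\inball(x,D/4)| > |V|/2$ and $|\outball(y,D/4)| > |V|/2$ (likewise swapping $x,y$), so their balls intersect and $\dist(x,y),\dist(y,x) \le D/2$; a final check on the surviving piece enforces the weak-diameter guarantee in all cases. This dichotomy is precisely what lets the algorithm make progress on the long-path example: all vertices are light, the loop carves many tiny balls in $O(m\log n)$ total time without redoing any preprocessing, and each tiny ball recurses to depth $O(\log n)$. Without it, ``recurse on ball and complement with re-sampling'' does not yield the claimed running time, and your charging argument in step (3) cannot be made rigorous.

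Two smaller remarks. First, you gesture at contracting SCCs that ``already have small diameter'' as a preprocessing step; the paper does not do this and it is unclear how to check this efficiently without already solving essentially the same problem. Second, your step (4) describes the $n^{-10}$ slack as absorbing failure of the balancing event; in the paper it instead absorbs the probability that the light/heavy sampling mislabels a vertex or the geometric radius exceeds $D/4$, and the algorithm's deterministic final check (returning $\esep = E$ if the surviving piece is too wide) is what makes the diameter guarantee hold with probability 1 rather than only with high probability.
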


The decomposition above is similar to other low-diameter decompositions used in both undirected and directed graphs, though the precise guarantees vary a lot between papers~\cite{Awerbuch85,AwerbuchGLP89,AwerbuchP92,AwerbuchBCP92,LinialS93,Bartal96, BlellochGKMPT14, MillerPX13, PachockiRSTW18, ForsterG19, ChechikZ20, BernsteinGW20,ForsterGV21}. The closest similarity is to the algorithm {\sc Partition} of Bernstein, Probst-Gutenberg, and Wulff-Nilsen \cite{BernsteinGW20}. The main difference is that the algorithm of \cite{BernsteinGW20} needed to work in a dynamic setting, and as a result their algorithm is too slow for our purposes. Our decomposition algorithm follows the general framework of \cite{BernsteinGW20}, but with several key differences to ensure faster running time; our algorithm is also simpler, since it only applies to the static setting. For the reader's convenience, we present the entire algorithm from scratch in Section \ref{sec:low-diam-full}.

\paragraph{No Negative-Weight Cycle Assumption via a Black-Box Reduction.}  Although it is possible to prove \Cref{thm:result} directly, the need to return the negative-weight cycle somewhat complicates the details. For this reason, we focus most of our paper on designing an algorithm that returns correct distances when the input graph contains no negative-weight cycle and guarantees nothing otherwise. 
See the description of subroutine $\SPmain$ in \Cref{thm:spmain} as an example. 
We can focus on the above case because we have a black-box reduction from \Cref{thm:main-REALLY} to the above case that incurs an extra $O(\log^2(n))$ factor in the runtime. 
This reduction is essentially the same as the reduction from the minimum cost-to-profit ratio cycle problem to the negative cycle detection problem \cite{Lawler66,Lawler76}.
We include our reduction in  \Cref{sec:las-vegas} for completeness. Note that the more general cost-to-profit ratio cycle problem can be solved in near-linear time via a similar reduction  \cite{Lawler66,Lawler76}.

\section{Preliminaries}\label{sec:prelim}
Throughout, we only consider graphs with integer weights. 
For any weighted graph $G=(V,E, w)$, define $V(G) = V, E(G) = E$, and \[E^{neg}(G) := \{e \in E \mid w(e) < 0\} .\]
Define $W_G := \max\{2,-\min_{e \in E}\{w(e)\}\}$; that is, $W_G$ is the most negative edge weight in the graph\footnote{We set $W_G \geq 2$ so that we can write $\log(W_G)$ in our runtime}. Given any set of edges $S \subseteq E$ we define $w(S) = \sum_{e \in S} w(e)$. We say that a cycle $C$ in $G$ is a negative-weight cycle if $w(C) < 0$. We define $\dist_G(u,v)$ to be the shortest distance from $u$ to $v$; if there is a negative-weight cycle on some $uv$-path then we define $\dist_G(u,v) = -\infty$.

Consider graph $G = (V,E,w)$ and consider subsets $V' \subseteq V$ and $E' \subseteq E$. 
We define $G[V']$ to be the subgraph of $G$ induced by $V'$.
We slightly abuse notation and denote a subgraph of $G$ as $H = (V',E',w)$ where the weight function $w$ is restricted to edges in $H$. 
We define $G\setminus V'=G[V \setminus V']$ and $G\setminus E'=(V, E\setminus E', w)$; i.e. they are graphs where we remove vertices and edges in $V'$ and $E'$ respectively. 
We sometimes write $G\setminus v$ and $E\setminus e$ instead of $G\setminus \{v\}$ and $G\setminus \{e\}$, respectively, for any $v\in V$ and $e\in E$. 
We say that a subgraph $H$ of $G$ has \emph{weak} diameter $D$ if for any $u,v \in V(H)$ we have that $\dist_G(u,v) \leq D$.
We always let $\gin$ and $\sin$ refer to the main input graph/source of \Cref{thm:main-REALLY}.

\begin{assumption}[Properties of input graph $\gin$; justified by Lemma \ref{lem:assumption}]
\label{assum:main}
\label{assum:small-weights}
We assume throughout the paper that the main input graph $\gin = (V,E,\win)$ satisfies the following properties:
\begin{enumerate}
    \item $\win(e) \geq -1$ for all $e \in E$ (thus, $W_{\gin} = 2$).
    \item Every vertex in $\gin$ has constant out-degree.
\end{enumerate}
\end{assumption}

\begin{lemma} 
\label{lem:assumption}
Say that there is an algorithm as in \Cref{thm:result} for the special case when the graph $\gin$ satisfies the properties of Assumption \ref{assum:main}, with running time $T(m,n)$. Then there is algorithm as in  \Cref{thm:result} for \textit{any} input graph $\gin$ with integral weights that has running time $O(T(m,m)\log({W_{\gin}}))$. 
\end{lemma}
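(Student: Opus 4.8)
The plan is to establish Lemma~\ref{lem:assumption} via two standard reductions, applied in sequence, each paying only a small factor in the running time. Both reductions are classical (weight scaling; out-degree reduction by vertex splitting; adding a dummy source), so the work is in checking that shortest paths, negative cycles, and the problem size are all preserved appropriately.

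First I would handle the \emph{weight-scaling} step, which reduces an arbitrary integral-weight input to the case $\win(e)\ge -1$ (Property~1 of Assumption~\ref{assum:main}). Given $\gin$ with minimum weight $-W_{\gin}$, the idea is \emph{not} to rescale downward — integer weights are already as fine as possible — but rather to run the assumed algorithm on a sequence of $O(\log W_{\gin})$ graphs with progressively refined weights, in the spirit of Gabow's scaling: let $w_i(e) = \lceil w(e)/2^i\rceil$ for $i$ from $\lceil\log W_{\gin}\rceil$ down to $0$. At the top scale all weights are $\ge -1$, so Assumption~\ref{assum:main} applies directly. Having computed a shortest-path tree / price function $\phi_{i+1}$ for scale $i+1$, one forms the reduced weights $w_i(e) + \phi_{i+1}(u) - \phi_{i+1}(v)$ at scale $i$; a short calculation shows these are again $\ge -1$ (up to a constant shift that can be absorbed), so the assumed algorithm applies at scale $i$ as well. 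Each of the $O(\log W_{\gin})$ invocations costs $T(m,n)$, and a negative cycle found at any scale lifts back to a negative cycle in the original graph since a price function never changes the sign of a cycle's total weight; conversely if the final graph has no negative cycle, neither did the original. This is exactly the ``weight scaling'' technique alluded to in the paper's footnote, and is where most of the (elementary) bookkeeping lives.

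Second I would handle the \emph{out-degree reduction} (Property~2). Replace each vertex $v$ of $\gin$ by a directed cycle on $\deg^+(v)$ copies $v_1,\dots,v_{d}$ joined by zero-weight edges $v_j\to v_{j+1}$ (indices mod $d$), route the $j$-th out-edge of $v$ out of $v_j$, and route all in-edges of $v$ into, say, $v_1$. Every vertex in the new graph has out-degree at most $2$; the number of edges becomes $O(m)$ and the number of vertices becomes $O(m)$, which is why the statement has $T(m,m)$ rather than $T(m,n)$. Distances are preserved because the zero-weight cycle gadget lets one move freely among the copies of $v$ at no cost, and it creates no new negative (indeed no new nonzero) cycles. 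A shortest-path tree in the new graph projects to one in $\gin$ by contracting each gadget; a negative cycle in the new graph, after contracting gadgets, yields a negative closed walk in $\gin$ and hence a negative cycle. Finally, if the source $\sin$ is not specified to reach all vertices one adds a dummy source with zero-weight edges to every vertex — this does not create negative cycles and does not affect which cycles are negative — so we may assume the single-source tree is well defined.

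The main obstacle, such as it is, is purely organizational: one must verify that the two reductions \emph{compose} cleanly — i.e. that after out-degree reduction the graph still has integral weights so that the scaling reduction applies, and that the $T(m,m)\log W_{\gin}$ bound survives (the scaling loop runs $O(\log W_{\gin})$ times on a graph of size $O(m)$, giving $O(T(m,m)\log W_{\gin})$ as claimed). There is no deep difficulty; the care needed is in (a) the arithmetic showing reduced weights stay $\ge -1$ across scales, and (b) confirming that the Las Vegas guarantee and the ``return a negative cycle or a shortest-path tree'' dichotomy are preserved under both transformations, which follows from the sign-invariance of cycle weights under price functions and the triviality of the gadget cycles.
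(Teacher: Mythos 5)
Your proposal is correct and matches the paper's approach essentially line for line: the paper also handles Property~1 by citing Goldberg's bit-scaling framework (which is exactly your $w_i(e)=\lceil w(e)/2^i\rceil$ construction) and Property~2 by replacing each vertex with a directed zero-weight cycle, yielding an $O(m)$-vertex, $O(m)$-edge graph and hence the $T(m,m)\log W_{\gin}$ bound. One small precision note on the scaling step you deferred: when passing from scale $i+1$ to scale $i$, the correct reduced weight is $w_i(e)+2\phi_{i+1}(u)-2\phi_{i+1}(v)$ (not $w_i(e)+\phi_{i+1}(u)-\phi_{i+1}(v)$), because $w_i(e)\ge 2w_{i+1}(e)-1$ and doubling $\phi_{i+1}$ is what makes the reduced weights again $\ge -1$; with the undoubled price function the bound does not go through.
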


\begin{proof}
Let us first consider the first assumption, i.e that $\win(e) \geq -1$. The scaling framework of Goldberg~\cite{Goldberg95} shows that an algorithm for this case implies an algorithm for any integer-weighted $G$ at the expense of an extra $\log(W_G)$ factor.\footnote{Quoting \cite{Goldberg95}: ``Note that the basic problem solved at each iteration of the bit scaling method is a special version of the shortest paths problem where the arc lengths are integers greater or equal to $-1$.''} 

For the assumption that every vertex in $\gin$ has constant out-degree, we use a by-now standard technique of creating $\Theta$(out-degree($v$)) copies of each vertex $v$, so that each copy has constant out-degree; the resulting graph was $O(E)$ vertices and $O(E)$ edges.\footnote{One way to do this is to replace every vertex with a directed zero-weight cycle whose size is the in-degree plus out-degree of the vertex and then attach the adjacent edges to this cycle.}
\end{proof}

\paragraph{Dummy Source and Negative Edges.}
The definitions below capture a common transformation we apply to negative weights and also allow us to formalize the number of negative edges on a shortest path. Note that most of our algorithms/definitions will not refer to the input source $\sin$, but instead to a dummy source that has edges of weight $0$ to every vertex. 

\begin{definition}[$G_s$,$w_s$,$G^B$,$w^B$,$G^B_s$,$w^B_s$]
\label{dfn:dummy} \label{def:dummy}
\label{def:GB}
Given any graph $G = (V,E,w)$, we let $G_s=(V\cup \{s\}, E\cup \{(s,v)\}_{v\in V}, w_s)$ refer to the graph $G$ with a dummy source $s$ added, where there is an edge of weight $0$ from $s$ to $v$ for every $v \in V$ and no edges into $s$. Note that $G_s$ has a negative-weight cycle if and only if $G$ does and that $\dist_{G_s}(s,v) = \min_{u \in V} \dist_G(u,v)$.

For any integer $B$, let $G^B=(V, E, w^B)$ denote the graph obtained by adding $B$ to all negative edge weights in $G$, i.e. $w^B(e) = w(e) + B$ for all $e \in \eneg(G)$ and $w^B(e) = w(e)$ for $e \in E \setminus \eneg(G)$. 
Note that $(G^B)_s=(G_s)^B$ so we can simply write $G^B_s=(V\cup \{s\}, E\cup \{(s,v)\}_{v\in V}, w^B_s)$. 
\end{definition}

\begin{definition}
[$\eta_G(v)$, $P_G(v)$]\label{def:eta}\label{def:prelim:the shortest path}\label{def:prelim:delta negative}\label{def:prelim:good vertex}
For any graph $G = (V,E,w)$ and $s \in V$ such that $s$ can reach all vertices in $V$, define 
\[\eta_G(v; s):=
\begin{cases}
\infty & \mbox{if $\dist_{G}(s,v)=-\infty$} \\
\min\{|\eneg(G) \bigcap P| : \mbox{$P$ is a shortest $sv$-path in $G$}\}; & \mbox{otherwise.}
\end{cases}
\]
Let $\eta(G;s)=\max_{v\in V} \eta_G(v; s)$. 
When $\dist_G(s,v)\neq -\infty$, let $P_G(v; s)$ be a shortest $sv$-path on $G$ such that 
\begin{align}
|\eneg(G) \bigcap P_G(v)|=\eta_G(v). \label{eq:def:P_G(v)} 
\end{align}
We omit $s$ when $s$ is the dummy source in $G_s$ defined in \Cref{def:dummy}; i.e. $\eta_G(v)=\eta_{G_s}(v; s)$, $\eta(G)=\eta(G_s;s)$, and $P_G(v)=P_{G_s}(v; s).$ 
Further, the graph $G$ is omitted if the context is clear.  
\end{definition}

\subsection{Price Functions and Equivalence}

Our algorithm heavily relies on price functions---an idea with deep roots, traceable to Dreyfus' 1969 survey \cite{Dreyfus69} and Braess’ work \cite[Page 176]{Braess71}, if not earlier---and which also appeared in the well-known $A^*$ algorithm (referred to as a heuristic function in the AI community).\footnote{We thank Donald E. Knuth for the historical note regarding price functions.}

\begin{definition}[Price Function]\label{def:prelim:price}
Consider a graph $G = (V,E,w)$ and let $\phi$ be any function: $V \rightarrow \mathbb{Z}$, where $\mathbb{Z}$ is the set of integers. Then, we define $w_\phi$ to be the weight function $w_\phi(u,v) = w(u,v) + \phi(u) - \phi(v)$ and we define $G_\phi = (V,E,w_\phi)$. We will refer to $\phi$ as a \emph{price} function on $V$. Note that $(G_{\phi})_{\psi} = G_{\phi + \psi}$. 
\end{definition}

\begin{definition}[Graph Equivalence]
\label{def:equivalent}
We say that two graphs $G = (V,E,w)$ and $G' = (V,E,w')$ are \emph{equivalent} if {\bf (1)} any shortest path in $G$ is also a shortest path in $G'$ and vice-versa and {\bf (2)} $G$ contains a negative-weight cycle if and only if $G'$ does.
\end{definition}

\begin{lemma}[\cite{Johnson77}]
\label{lem:price-equivalent}
Consider any graph $G = (V,E,w)$ and price function $\phi$. For any pair $u, v \in V$ we have $\dist_{G_\phi}(u,v) = \dist_G(u,v) + \phi(u) - \phi(v)$, and for any cycle $C$ we have $w(C) = w_{\phi}(C)$. As a result, $G$ and $G_{\phi}$ are equivalent. Finally, if $G' = (V,E,w)$, $G' = (V,E,w')$ and $w' = c \cdot w(e)$ for some positive $c$, then $G$ and $G'$ are equivalent. 
\end{lemma}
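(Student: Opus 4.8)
\textbf{Proof proposal for Lemma~\ref{lem:price-equivalent}.}
The plan is to prove the three claims in order, since each is essentially a direct computation from the definitions, and the later claims reuse the earlier ones. First I would fix a price function $\phi$ and an arbitrary path $P = (v_0, v_1, \ldots, v_k)$ in $G$, and telescope: $w_\phi(P) = \sum_{i=1}^{k} \big(w(v_{i-1},v_i) + \phi(v_{i-1}) - \phi(v_i)\big) = w(P) + \phi(v_0) - \phi(v_k)$. This identity is the workhorse. Applying it to a $uv$-path shows that the $\phi$-weight of every $uv$-path is shifted by the same additive constant $\phi(u) - \phi(v)$, which does not depend on the path; hence the set of minimizing paths is unchanged and $\dist_{G_\phi}(u,v) = \dist_G(u,v) + \phi(u) - \phi(v)$. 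One should handle the $-\infty$ case separately but it is immediate: a negative cycle on a $uv$-walk in $G$ is a negative cycle on a $uv$-walk in $G_\phi$ by the cycle identity below, so $\dist_G(u,v) = -\infty \iff \dist_{G_\phi}(u,v) = -\infty$.

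Next, for a cycle $C = (v_0, v_1, \ldots, v_k = v_0)$, the same telescoping gives $w_\phi(C) = w(C) + \phi(v_0) - \phi(v_k) = w(C)$ since $v_0 = v_k$; so cycle weights are preserved exactly, and in particular $G$ has a negative-weight cycle iff $G_\phi$ does. Together with the path statement, this verifies both conditions (1) and (2) of Definition~\ref{def:equivalent}, so $G$ and $G_\phi$ are equivalent.

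Finally, for the scaling claim with $w' = c \cdot w$ for $c > 0$: for any path or cycle $Q$, $w'(Q) = c \cdot w(Q)$, so multiplication by the positive constant $c$ preserves the relative order of all path weights between any fixed pair $u,v$ (hence shortest paths are preserved in both directions) and preserves the sign of every cycle weight (hence negative-weight cycles are preserved). Thus $G$ and $G'$ are equivalent. I do not anticipate a genuine obstacle here; the only points requiring a little care are (i) being consistent about walks versus simple paths when $-\infty$ distances are involved, and (ii) noting that ``shortest path'' in Definition~\ref{def:equivalent} should be read so that the equivalence statement is meaningful even when some distances are $-\infty$ — in that regime no finite-weight path is shortest in either graph, and the cycle condition already records the discrepancy.
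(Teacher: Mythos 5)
Your proof is correct and uses the standard telescoping argument; the paper does not include its own proof of this lemma, simply citing Johnson (1977), and your argument is exactly the classical one from that source. The small caveats you flag about $-\infty$ distances and the wording of Definition~\ref{def:equivalent} are reasonable observations but do not reveal a gap.
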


The overall goal of our algorithm will be to compute a price function $\phi$ such that all edge weights in $G_{\phi}$ are non-negative (assuming no negative-weight cycle); we can then run Dijkstra on $G_{\phi}$. The lemma below, originally used by Johnson, will be one of the tools we use.

\begin{lemma}[\cite{Johnson77}]
\label{lem:price-source}\label{lem:preliim:price=distance}
Let $G = (V,E)$ be a directed graph with no negative-weight cycle and let $s$ be any vertex in $G$ that can reach all vertices in $V$. Let $\phi(v) = \dist_{G}(s,v)$ for all $v \in V$. Then, all edge weights in $G_\phi$ are non-negative. (The lemma follows trivially from the fact that $\dist(s,v) \leq \dist(s,u) + w(u,v)$.)
\end{lemma}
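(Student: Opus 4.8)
The plan is to verify non-negativity one edge at a time. First I would confirm that $\phi$ is a legitimate integer price function in the sense of \Cref{def:prelim:price}: since $G$ has no negative-weight cycle and $s$ reaches every vertex of $V$, the quantity $\dist_G(s,v)$ is a well-defined finite integer for each $v \in V$, so $\phi \colon V \to \mathbb{Z}$.

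Next, fix an arbitrary edge $(u,v) \in E$ and expand its reduced weight using the definition of $w_\phi$:
\[
w_\phi(u,v) \;=\; w(u,v) + \phi(u) - \phi(v) \;=\; w(u,v) + \dist_G(s,u) - \dist_G(s,v).
\]
The only ingredient needed is the shortest-path triangle inequality $\dist_G(s,v) \le \dist_G(s,u) + w(u,v)$: a shortest $su$-path followed by the edge $(u,v)$ is an $sv$-walk of weight $\dist_G(s,u)+w(u,v)$, and because there is no negative-weight cycle the minimum walk weight coincides with the shortest-path distance, so $\dist_G(s,v)$ is at most this value. Substituting this inequality into the display gives $w_\phi(u,v) \ge 0$. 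Since $(u,v)$ was arbitrary, every edge weight in $G_\phi$ is non-negative, which is the claim.

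There is no genuine obstacle here — this is Johnson's classical observation, and the parenthetical already recorded in the lemma statement is exactly the inequality being exploited. The one point that deserves a sentence of care is the justification of the triangle inequality in the presence of negative (but not negative-cycle) edge weights: one must invoke the no-negative-cycle hypothesis to argue that "shortest walk" and "shortest path" agree, which is precisely what makes $\dist_G(s,\cdot)$ finite and well-behaved under prepending an edge.
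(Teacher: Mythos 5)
Your proof is correct and matches the paper's approach exactly: the paper's entire justification is the parenthetical triangle inequality $\dist(s,v) \leq \dist(s,u) + w(u,v)$, which you expand edge by edge into $w_\phi(u,v) = w(u,v) + \dist_G(s,u) - \dist_G(s,v) \geq 0$. Your extra remark on why the triangle inequality holds under negative (but not negative-cycle) weights is sound and a reasonable bit of added care, not a departure.
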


\section{The Framework}
\label{sec:algorithm}
\label{sec:framework}

In this section we describe the input/output guarantees of all the subroutines used in our algorithm, as well as some of the algorithms themselves.

\subsection{Basic Subroutines}

\begin{lemma}[$\Dijkstra$]\label{thm:prelim:Dijkstra}
There exists an algorithm $\Dijkstra(G,s)$ that takes as input a graph $G$ with non-negative edge weights and a vertex $s \in V$ and outputs a shortest path tree from $s$ in $G$. The running time is $O(m + n\log(n))$. 
\end{lemma}

It is easy to see that if $G$ is a DAG (Directed Acyclic Graph), computing a price function $\phi$ such that $G_{\phi}$ has non-negative edge weights is straightforward: simply loop over the topological order $v_1, ..., v_n$ and set $\phi(v_i)$ so that all incoming edges have non-negative weight.  
The lemma below generalizes this approach to graphs where only the ``DAG part'' has negative edges.

\begin{lemma}[$\FixAlmostDag$]\label{lem:almostDag}
There exists an algorithm $\FixAlmostDag(G, \cal P)$ that  takes as input a graph $G$ and a partition ${\cal P}:=\{V_1, V_2, \ldots\}$ of vertices of $G$ such that 
\begin{enumerate}[noitemsep,nolistsep]
    \item for every $i$, the induced subgraph $G[V_i]$ contains no negative-weight edges, and
    \item when we contract every $V_i$ into a node, the resulting graph is a DAG (i.e. contains no cycle). 
\end{enumerate}
The algorithm outputs a price function $\phi:V\rightarrow \mathbb{Z}$ such that $w_{\phi}(u,v) \geq 0$ for all $(u,v) \in E$.  
The running time is $O(m + n)$.
\end{lemma}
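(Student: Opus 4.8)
The plan is to reduce to the DAG case mentioned just before the lemma. The key observation I would use is that, because each $G[V_i]$ has only non-negative edges, it suffices to choose $\phi$ to be \emph{constant} on each part $V_i$: intra-part edges are then automatically fine, and the only remaining constraints come from edges running between different parts, which are exactly the arcs of the contracted graph --- and that graph is a DAG by hypothesis.

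First I would build the contracted graph $H$: its nodes are the parts $V_1, V_2, \dots$, with an arc $(V_i, V_j)$ whenever $G$ has an edge from some $u \in V_i$ to some $v \in V_j$ with $i \ne j$. By hypothesis $H$ is a DAG, so I compute a topological order of its nodes (via DFS or Kahn's algorithm), and also record for each vertex $v$ the index $p(v)$ of its part. This is $O(m+n)$ time: $\mathcal P$ is given so $p(\cdot)$ costs $O(n)$, $H$ has $\le n$ nodes and $\le m$ arcs, and topological sort is linear. Next, processing the parts in topological order, I assign each part a value $\psi(V_i) \in \mathbb Z$: if $V_i$ has no incoming arc in $H$, set $\psi(V_i) = 0$; otherwise set
\[
\psi(V_i) \;=\; \min\bigl\{\, \psi(V_{p(u)}) + w(u,v) \;:\; (u,v)\in E,\ v\in V_i,\ p(u)\ne i \,\bigr\}.
\]
Every arc into $V_i$ comes from a part earlier in the order, so the right-hand side is already defined; computing all $\psi$-values is one pass over the edges, $O(m+n)$ time. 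Finally set $\phi(v) := \psi(V_{p(v)})$ for all $v$.

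It then remains to check $w_\phi(u,v) \ge 0$ for every edge. If $u,v$ are in the same part $V_i$, then $w_\phi(u,v) = w(u,v) + \psi(V_i) - \psi(V_i) = w(u,v) \ge 0$ by hypothesis~(1). If $u \in V_i$ and $v \in V_j$ with $i\ne j$, then $V_i$ precedes $V_j$ in the topological order and the defining minimum gives $\psi(V_j) \le \psi(V_{p(u)}) + w(u,v) = \psi(V_i) + w(u,v)$, so $w_\phi(u,v) = w(u,v) + \psi(V_i) - \psi(V_j) \ge 0$. The total running time is $O(m+n)$. I do not expect a real obstacle here; the only things to be careful about are excluding intra-part edges from the minimum (otherwise the recursion for a part whose only in-edges are internal would be circular) and giving source parts of $H$ an explicit base value so the minimum is never taken over an empty set.
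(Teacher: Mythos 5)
Your proposal is correct and takes essentially the same approach as the paper: both fix a topological order of the parts, assign a \emph{constant} price to each part (so intra-part edges keep their non-negative weight), and choose those constants so that every cross-part edge becomes non-negative. The only difference is the specific formula: the paper's $\FixAlmostDag$ sets the price of part $V_j$ to $M_j = M_{j-1} + \mu_j$, where $\mu_j$ is the most-negative edge weight entering $V_j$ (or $0$ if none), while you instead compute the exact DAG-relaxation value $\psi(V_j) = \min\{\psi(V_{p(u)}) + w(u,v)\}$ over incoming cross edges; both are valid, linear-time, and rest on the same key observation that contracting the $V_i$ yields a DAG.
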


\begin{proof}[Proof sketch]
The algorithm is extremely simple: it loops over the SCCs $V_i$ in topological order, and when it reaches $V_i$ it sets the same price $\phi(v)$ for every $v \in V_i$ that ensures there are no non-negative edges entering $V_i$; since all $\phi(v)$ are the same, this does not affect edge-weights inside $V_i$. We leave the pseudocode and analysis for Section \ref{sec:app:fixdag} in the appendix.
\end{proof}

The next subroutine shows that computing shortest paths in a graph $G$ can be done efficiently as long as $\eta(v)$ is small on average (see Definition \ref{def:prelim:delta negative} for $\eta(v)$). Note that this subroutine is the reason we use the assumption that every vertex has constant out-degree (\Cref{assum:main}).

\begin{restatable}{lemma}{spaverageLemma}{\em ($\SPaverage$)}\label{lem:spaverage}
There exists an algorithm $\SPaverage(G,s)$ that takes as input a graph $G = (V,E,w)$ in which all vertices $v \neq s$ have constant out-degree and a source $s\in V$ that can reach all vertices in $V$. The algorithm outputs a price function $\phi$ such that $w_\phi(e) \geq 0$ for all $e \in E$ and has running time $O(\log(n) \cdot (n + \sum_{v \in V} \eta_G(v; s)))$ (Definition \ref{def:prelim:delta negative}); note that if  $G$ contains a negative-weight cycle then  $\sum_{v \in V} \eta_G(v; s) = \infty$ so the algorithm will never terminate and hence not produce any output.
\end{restatable}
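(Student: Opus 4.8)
The plan is to reduce the problem to $\eta(G;s)+1$ rounds of Dijkstra-style relaxation, where each round eliminates one "layer" of negative edges along shortest paths. The key observation is the following decomposition of shortest distances. For a vertex $v$, let $\eta_G(v;s)$ be the minimum number of negative edges on a shortest $sv$-path, as in Definition \ref{def:prelim:delta negative}. I would define, for each integer $j \ge 0$, the quantity $d_j(v)$ to be the length of the shortest $sv$-walk that uses at most $j$ negative edges (with $d_j(v)=\infty$ if no such walk exists). Then $d_0(v)$ is just the shortest-path distance in the graph $G$ with all negative edges deleted (call it $G^{\ge 0}$), which can be computed by a single call to $\Dijkstra(G^{\ge 0},s)$ from Lemma \ref{thm:prelim:Dijkstra}; and in general $d_j = \min\{d_{j-1},\ \text{(one relaxation step over }\eneg(G)\text{ followed by a Dijkstra in }G^{\ge 0})\}$. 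Concretely, to get $d_j$ from $d_{j-1}$: build an auxiliary graph by adding a fresh super-source $s'$ with an edge of weight $d_{j-1}(u)+w(u,v)$ into each head $v$ of each negative edge $(u,v)$ (and weight $d_{j-1}(s')=0$ edges as needed), then run Dijkstra in $G^{\ge 0}$ from $s'$, and take the pointwise min with $d_{j-1}$. Since $d_j(v)=\dist_G(s,v)$ for all $v$ once $j \ge \eta(G;s)$, after $\eta(G;s)+1$ rounds we have recovered the true distances, and then $\phi(v):=\dist_G(s,v)$ is a valid price function by Lemma \ref{lem:price-source}.

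The remaining issue is the running time, and this is where I would be careful. A naive implementation runs $\eta(G;s)+1$ full Dijkstra calls, costing $O((\eta(G;s)+1)\cdot(m+n\log n))$, which is far too slow — we need $O(\log n \cdot (n+\sum_v \eta_G(v;s)))$, i.e. roughly the \emph{sum} of the $\eta$ values, not $\eta(G;s)$ times $n$. The fix is that a vertex $v$ only needs to participate in rounds $0,1,\dots,\eta_G(v;s)$: once round $\eta_G(v;s)$ has been reached, $d_j(v)$ has stabilized to the correct value and $v$ can be "frozen." So instead of re-running Dijkstra on the whole graph each round, I would run it only on the subgraph of not-yet-frozen vertices, charging the work of processing $v$ in round $j$ to the "budget" of $v$ — and $v$ gets charged in at most $\eta_G(v;s)+1$ rounds. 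Because every vertex other than $s$ has constant out-degree (this is exactly where Assumption \ref{assum:main}, item 2, is used), the edge-processing work in a round is proportional to the number of active vertices in that round, so the total is $O(n+\sum_v \eta_G(v;s))$ edge relaxations plus the priority-queue overhead; batching all the per-round Dijkstra calls with a single global priority queue gives the extra $O(\log n)$ factor rather than $O(\log^2 n)$. One must also detect non-termination: if $G$ has a negative-weight cycle reachable from $s$, then some $\eta_G(v;s)=\infty$ and the process never freezes that vertex, matching the stated behavior (the algorithm simply runs forever, as the lemma allows).

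The main obstacle I anticipate is making the amortized "freeze and charge" accounting fully rigorous — in particular, proving the invariant that after round $j$ we have $d_j(v)=\dist_G(s,v)$ for every $v$ with $\eta_G(v;s)\le j$, and that such a $v$ is never reactivated in a later round (so its total processing cost is $O(\eta_G(v;s)+1)$ priority-queue operations and edge scans). The invariant itself follows by induction on $j$: any shortest $sv$-path with $\eta_G(v;s)$ negative edges can be cut at its last negative edge $(u,u')$, where the prefix to $u$ is a shortest walk with $\eta_G(v;s)-1 \le j-1$ negative edges (hence $d_{j-1}(u)=\dist_G(s,u)$ by induction) and the suffix from $u'$ to $v$ uses no negative edges, so the relaxation-then-Dijkstra step in round $j$ recovers $\dist_G(s,v)$. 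Subtleties to handle include: negative-weight cycles through $v$ (then $d_j(v)$ genuinely diverges to $-\infty$, but only over walks of unbounded length, so it is fine to track walks and the $\eta=\infty$ case is vacuous for the output guarantee), and the bookkeeping of the super-source edges, which I would keep linear-sized per round by only adding edges into the $O(|\eneg(G)|)$ heads of negative edges that are still active. Full pseudocode and the amortized analysis I would defer to an appendix, as the paper does for its other basic subroutines.
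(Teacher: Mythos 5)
Your proposal is correct and essentially matches the paper's proof: the paper's $\SPaverage$ (Algorithm \ref{alg:spaverage}) likewise interleaves Dijkstra phases over non-negative edges with Bellman-Ford-style relaxations of negative edges on a single persistent priority queue, proves the invariant that after iteration $j$ every vertex $v$ with $\eta_G(v;s)\le j$ holds its correct distance, and charges each vertex $O(\eta_G(v;s)+1)$ queue operations using the constant out-degree assumption. Your ``super-source gadget to restart Dijkstra each round'' and your ``freeze and charge'' amortization are cosmetic reformulations of exactly that scheme, and your treatment of the negative-cycle/non-termination case also matches.
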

\begin{proof}[Proof sketch]
By Lemma \ref{lem:price-equivalent}, in order to compute the desire price function, it suffices to describe how $\SPaverage(G)$ computes $\dist_{G}(s,v)$ for all $v\in V$, where $s$ is the input source to $\SPaverage(G,s)$.

The algorithm is a straightforward combination of Dijkstra's and Bellman-Ford's algorithms. The algorithm maintains distance estimates $d(v)$ for each vertex $v$. It then proceeds in multiple iterations, where each iteration first runs a Dijkstra Phase that ensures that all non-negative edges are relaxed and then a Bellman-Ford Phase ensuring that all negative edges are relaxed. Consider a vertex $v$ and let $P$ be a shortest path from $s$ to $v$ in $G$ with $\eta_G(v; s)$ edges of $\eneg(G)$. It is easy to see that $\eta_G(v;s) + 1$ iterations suffice to ensure that $d(v) = \dist_{G}(s,v)$. In each of these iterations, $v$ is extracted from and added to the priority queue of the Dijkstra Phase only $O(1)$ times. Furthermore, after the $\eta_G(v;s) + 1$ iterations, $v$ will not be involved in any priority queue operations. Since the bottleneck in the running time is the queue updates, we get the desired time bound of $O(\log(n)\cdot\sum_{v\in V}(\eta_G(v; s) + 1)) = O(\log(n)\cdot (n + \sum_{v\in V}\eta_G(v; s)))$.

This completes the proof sketch. The full proof can be found in Appendix~\ref{sec:app:spaverage}.
\end{proof}

\subsection{The Interface of the Two Main Algorithms}
\label{sec:main-algos}

Our two main algorithms are called $\ScaleDown$ and $\SPmain$. The latter is a relatively simple outer shell. The main technical complexity lies in $\ScaleDown$, which calls itself recursively.

\begin{theorem}[$\SPmain$]
\label{thm:spmain}
There exists an algorithm $\SPmain(\gin,\sin)$ that takes as input a graph $\gin$ and a source $\sin$ satisfying the properties of Assumption \ref{assum:main}. If the algorithm terminates, it outputs a shortest path tree $T$ from $\sin$. The running time guarantees
are as follows:
\begin{itemize}
    \item If the graph $\gin$ contains a negative-weight cycle then the algorithm never terminates.
    \item If the graph $\gin$ does not contain a negative-weight cycle then the algorithm has \textit{expected} running time $\tspmain = O(m\log^5(n))$. 
\end{itemize}
\end{theorem}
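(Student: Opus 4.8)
The plan is to prove Theorem~\ref{thm:spmain} by exhibiting $\SPmain$ as a short outer shell that invokes $\ScaleDown$ a logarithmic number of times in a scaling loop, and then finishes with a single call to $\Dijkstra$. Concretely: by Assumption~\ref{assum:main} we have $W_{\gin}=2$, so after adding the dummy source (Definition~\ref{def:dummy}) we work with $\gin_s$, whose weights are at least $-1$. The strategy of a scaling algorithm is to multiply all weights up by a large factor (say roughly $2n$ or $n$) so that the "slack" we are allowed to lose per edge is large, and then repeatedly halve the most-negative-weight parameter $B$. At each scale we call $\ScaleDown$ on the current graph to obtain a price function $\phi$ such that $G_\phi$ has all weights $\geq -1$ but with the negative-weight threshold $B$ halved; after $O(\log n)$ rounds the graph $G_\phi$ has no negative edges at all (more precisely, the rounding/scaling is arranged so that once $B$ drops below $1$ there are no negative edges), at which point by Lemma~\ref{lem:price-equivalent} the graph $G_\phi$ is equivalent to the original, so a single $\Dijkstra(G_\phi,\sin)$ call (Lemma~\ref{thm:prelim:Dijkstra}) yields a shortest path tree, which pulled back through $\phi$ is a shortest path tree of $\gin$.

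The key steps, in order, are: (1) reduce to $\gin_s$ and rescale weights by a $\poly(n)$ factor, invoking Lemma~\ref{lem:price-equivalent} to argue equivalence is preserved under positive scaling; (2) run the scaling loop, where iteration $i$ calls $\ScaleDown$ to produce a price function that, composed with the prices accumulated so far (using $(G_\phi)_\psi = G_{\phi+\psi}$ from Definition~\ref{def:prelim:price}), reduces the magnitude of negative weights; here I would invoke the (not-yet-stated in this excerpt but presumably forthcoming) input/output contract of $\ScaleDown$ as a black box, just as the statement says "the main technical complexity lies in $\ScaleDown$"; (3) terminate the loop after $O(\log n)$ rounds and run $\Dijkstra$; (4) argue correctness: each $\ScaleDown$ call only applies valid price functions, so equivalence is maintained throughout, and the final graph is non-negatively weighted, so Dijkstra is correct; (5) argue running time: each of the $O(\log n)$ scales costs the $\ScaleDown$ running time, which I expect to be $O(m\log^4 n)$ in expectation (so that $O(\log n)$ scales give $O(m\log^5 n)$), plus a single $O(m+n\log n)$ Dijkstra at the end; (6) argue the non-termination claim --- if $\gin$ has a negative-weight cycle then so does $\gin_s$, and $\ScaleDown$ (by its contract, since $\eta$ becomes infinite, cf.\ Lemma~\ref{lem:spaverage}) never terminates, so $\SPmain$ never terminates.

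The main obstacle is that essentially all the content is hidden in the $\ScaleDown$ contract, which has not yet been stated in this excerpt; so the "proof" of Theorem~\ref{thm:spmain} is really just bookkeeping around that interface. The delicate points to get right are: making sure the scaling factor and the rounding in each step are chosen so that (a) after $O(\log n)$ halvings of $B$ the graph genuinely has no negative edges (one must be careful about off-by-one issues and about rounding introduced to keep weights integral), and (b) the error introduced by rounding, summed over $O(\log n)$ scales, is small enough not to corrupt the shortest path structure --- this is the standard subtlety in Gabow--Tarjan/Goldberg-style bit-scaling and is precisely why the initial $\poly(n)$ blow-up of the weights is needed. I also need to handle the dummy source cleanly: the output must be a shortest path tree from $\sin$, not from the dummy source, but since $\dist_{G_s}(s,v) = \min_{u} \dist_G(u,v)$ and the dummy-source construction does not create negative cycles, the price function obtained works equally well for running Dijkstra from $\sin$ directly in $G_\phi$. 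Everything else --- composition of price functions, equivalence under scaling, correctness of Dijkstra on the final non-negative graph --- follows immediately from Lemmas~\ref{lem:price-equivalent}, \ref{thm:prelim:Dijkstra}, and the arithmetic of Definition~\ref{def:prelim:price}.
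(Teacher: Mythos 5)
Your proposal matches the paper's $\SPmain$ in overall shape: scale all weights up by a factor around $n$ to create slack, run a $\log(n)$-iteration loop calling $\ScaleDown$ with the threshold $B$ halved each time, accumulate price functions via $(G_\phi)_\psi = G_{\phi+\psi}$, and finish with a single Dijkstra. The running-time accounting also matches the paper's.

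However, step (6) of your plan has a genuine gap. You assert that when $\gin$ has a negative cycle, $\ScaleDown$ "by its contract never terminates," citing Lemma~\ref{lem:spaverage}. But $\ScaleDown$'s contract (Theorem~\ref{thm:scaledown}, item 3) only says it \emph{might} not terminate on a negative-cycle input --- it is also permitted to terminate and return a valid price function. (And the appeal to $\eta = \infty$ does not go through: $\ScaleDown$ internally works with $G^B$, which need not inherit a negative cycle from $G$, so $\SPaverage$ can terminate.) The paper instead proves non-termination by contradiction on the \emph{output}: if $\SPmain$ terminated, Claim~\ref{claim:spmain:price} would give a price function $\phi_t$ with $\wbar_{\phi_t}(e) \geq -1$, hence $\wbar_{\phi_t}(C) \geq -|C| \geq -n$ for any cycle $C$; but a negative cycle has $\wbar(C) \leq -2n$ since $\Gbar$'s weights are multiples of $2n$, contradicting that price functions preserve cycle weights (Lemma~\ref{lem:price-equivalent}). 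You would need this argument. A related imprecision: $\ScaleDown$ only drives weights down to $\geq -1$, not to $\geq 0$, so the final graph $\Gstar$ is built by adding $+1$ to every edge of $\Gbar_{\phi_t}$, and a separate argument (Claim~\ref{claim:spmain:output}, again leaning on divisibility by $2n$) is needed to show this additive shift does not change which $\sin$-to-$v$ path is shortest; you flagged the off-by-one subtlety but would need to supply this mechanism.
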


\begin{theorem}[$\ScaleDown$]
\label{thm:scaledown}
There exists the following algorithm $\ScaleDown(G = (V,E,w),\Delta,B)$.
\begin{enumerate}
    \item INPUT REQUIREMENTS: \label{item:ScaleDown input}
    \begin{enumerate}
        \item\label{prop:scaledown:weight} $B$ is positive integer, $w$ is integral, and $w(e) \geq -2B$ for all $e \in E$
        \item\label{prop:scaledown:Delta} 
       If the graph $G$ does not contain a negative-weight cycle then the input must satisfy $\eta(G^{B})\leq \Delta$; that is, for every $v \in V$  
        there is a shortest $sv$-path in $G^{B}_s$ with at most $\Delta$ negative edges
        (\Cref{def:dummy,def:prelim:delta negative})\label{item:ScaleDown input b}
        \item\label{prop:scaledown:degree} All vertices in $G$ have constant out-degree
    \end{enumerate}
    \item \label{item:ScaleDown Output} OUTPUT: If it terminates, the algorithm returns an integral price function $\phi$ such that $w_{\phi}(e) \geq -B$ for all $e \in E$
    \item RUNNING TIME:\label{item:ScaleDown Runtime}  If $G$ does not contain a negative-weight cycle, then the algorithm has expected runtime $O\left(m\log^3(n)\log(\Delta)\right)$. Remark: If $G$ contains a negative-weight cycle, there is no guarantee on the runtime, and the algorithm might not even terminate; but if the algorithm does terminate, it always produces a correct output.
\end{enumerate}
\end{theorem}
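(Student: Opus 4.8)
\emph{Proof proposal.} The plan is to realize $\ScaleDown$ as a recursion on $\Delta$ that halves the parameter at every level, so the recursion tree has depth $O(\log\Delta)$, and to produce the output of one invocation $\ScaleDown(G,\Delta,B)$ as a sum $\phi=\phi_1+\phi_2+\phi_3$ of three auxiliary price functions. The elementary fact that drives the output guarantee is this: for any graph $H$ and any price function $\psi$, if every edge of $(H^B)_\psi$ is non\nobreakdash-negative then every edge of $H_\psi$ has weight $\ge -B$, since $w_{H_\psi}(e)=w_{(H^B)_\psi}(e)-B\cdot[\,w_H(e)<0\,]\ge -B$. So it suffices that $\phi_3$, applied on top of $\phi_1+\phi_2$, makes the $B$\nobreakdash-shift of the current graph non\nobreakdash-negative. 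This is immediate in the base case $\Delta\le 2$: then $\eta(G^B)\le\Delta$ forces $\sum_v\eta_{G^B_s}(v)\le 2n$, so a single call $\SPaverage(G^B_s)$ (\Cref{lem:spaverage}) does it in $O(n\log n)$ time. Note that $\SPaverage$ is the only subroutine below that can fail to terminate, and only when its input has a negative\nobreakdash-weight cycle; this accounts for the stated behaviour under negative cycles, and any price function ever output is valid.

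\emph{The recursive step} ($\Delta>2$) proceeds in three phases. \textbf{Phase 1 (decompose and recurse):} form the non\nobreakdash-negatively weighted graph $\bar G$ on $V$ by rounding the negative weights of $G^B$ up to $0$ (so $\bar w(e)=\max\{w^B(e),0\}$) and run $\LowDiamDecomposition(\bar G,D)$ (\Cref{lem:SCCDecomposition}) with a threshold $D$ tuned to $\Delta$ and $B$; let $V_1,V_2,\dots$ be the SCCs of $\bar G\setminus\esep$, which have weak diameter $\le D$ in $\bar G$. The purpose of $D$ is to ensure the key claim $\eta(G^B[V_i])\le\lceil\Delta/2\rceil$ for every $i$; granting it (the remaining input requirements hold since $G[V_i]$ is an induced subgraph), recursively set $\phi_i:=\ScaleDown(G[V_i],\lceil\Delta/2\rceil,B)$ and glue the $\phi_i$ into one price function $\phi_1$ on $V$, so that every edge of $G_{\phi_1}$ with both endpoints in a single $V_i$ has weight $\ge -B$, i.e.\ is non\nobreakdash-negative in $(G_{\phi_1})^B$. \textbf{Phase 2 (fix the component DAG):} contracting each $V_i$ turns $G\setminus\esep$ into a DAG and all within\nobreakdash-component edges of $(G_{\phi_1})^B$ are non\nobreakdash-negative, so $\FixAlmostDag((G_{\phi_1})^B\setminus\esep,\{V_i\})$ (\Cref{lem:almostDag}) returns $\phi_2$ making every edge of $(G_{\phi_1+\phi_2})^B$ outside $\esep$ non\nobreakdash-negative; hence $\eneg\!\left((G_{\phi_1+\phi_2})^B\right)\subseteq\esep$. \textbf{Phase 3 (clean up the cut edges):} run $\SPaverage\!\left((G_{\phi_1+\phi_2})^B_s\right)$ to get $\phi_3$ and return $\phi=\phi_1+\phi_2+\phi_3$; by the fact in the first paragraph this $\phi$ satisfies $w_\phi(e)\ge -B$ for all $e$, which is the required output.

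\emph{Running time} (assuming $G$ has no negative cycle): the dominant cost of one invocation is the single $\LowDiamDecomposition$ call, $O(m\log^2 n+n\log^3 n)=O(m\log^3 n)$ by the constant\nobreakdash-out\nobreakdash-degree assumption (so $m=O(n)$), while Phase 2 is linear. At a fixed recursion level the subproblems $G[V_i]$ are vertex\nobreakdash-disjoint, so the total work per level is $O(m\log^3 n)$, and over the $O(\log\Delta)$ levels it is $O(m\log^3 n\log\Delta)$, as claimed. For Phase 3, \Cref{lem:spaverage} gives time $O\!\left(\log n\cdot(n+\sum_v\eta_{(G_{\phi_1+\phi_2})^B_s}(v))\right)$; since $\eneg\!\left((G_{\phi_1+\phi_2})^B\right)\subseteq\esep$, the sum is bounded by the expected number of $\esep$\nobreakdash-edges on the relevant shortest paths, and the cutting\nobreakdash-probability guarantee of \Cref{lem:SCCDecomposition}, together with the fact that the choice of $D$ keeps those paths short in $\bar G$, makes this $O(n\log^2 n)$ in expectation, so Phase 3 costs $O(n\log^3 n)$ and the total bound is unchanged. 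If $G$ has a negative cycle, some $\SPaverage$ call meets it and never halts, as permitted.

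\emph{Main obstacle.} The heart of the argument is the key claim of Phase 1: after deleting $\esep$ from a low\nobreakdash-$\bar w$\nobreakdash-diameter decomposition, each surviving component $V_i$ already satisfies $\eta(G^B[V_i])\le\lceil\Delta/2\rceil$, so the recursion is valid and bottoms out in $O(\log\Delta)$ levels. Proving it requires exploiting the precise choice of $\bar G$ (negative $w^B$\nobreakdash-weights rounded to $0$) and of $D$: one takes a shortest $s$\nobreakdash-to\nobreakdash-$v$ path of $G^B_s$ with at most $\Delta$ negative edges (which exists by hypothesis), restricts it to its last stretch lying inside $V_i$, and argues from the weak\nobreakdash-diameter bound that a diameter\nobreakdash-$D$ component has ``room'' for only half as many negative edges on such a path. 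Choosing $D$ so that this halving holds while keeping $\esep$ light enough for the Phase\nobreakdash-3 estimate, and tracking carefully how $\eneg$ (and hence the meaning of the superscript $B$) shifts as $\phi_1,\phi_2$ are applied, is where the real work lies.
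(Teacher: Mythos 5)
Your high-level plan matches the paper's: decompose with $\LowDiamDecomposition$, recurse on the SCCs with $\Delta$ halved, fix the DAG of components with $\FixAlmostDag$, and clean up the cut edges with $\SPaverage$, with the output correctness following from the observation that non-negativity of the $B$-shifted graph gives $w_\phi(e)\ge -B$. The recursion-per-component versus recursion-on-the-union $H=\bigcup_i G[V_i]$ is a cosmetic difference. However, there is a genuine gap in how you compose the price function with the $B$-shift, and it invalidates your Phase~3 running-time bound.

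Throughout you work with $(G_{\phi_1})^B$, $(G_{\phi_1+\phi_2})^B$, i.e.\ you apply the price function \emph{first} and the $B$-shift \emph{second}. The paper works with $(G^B)_{\phi_1}$, $(G^B)_{\phi_2}$, i.e.\ $B$-shift first. These are different graphs: the $B$-shift in your order is applied to the edges that are negative in $G_{\phi_1+\phi_2}$, in the paper's order to the edges negative in $G$, and these sets need not agree. The consequence is that $(G_{\phi_1+\phi_2})^B_s$ is \emph{not} equivalent to $G^B_s$ (price functions do not commute with the $B$-shift), whereas the paper's $(G^B_s)_{\phi_2}$ \emph{is} equivalent to $G^B_s$ by \Cref{lem:price-equivalent}. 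That equivalence is precisely the lever in the paper's Phase~3 analysis: it guarantees that $P_{G^B}(v)$ (the few-negative-edge shortest path in $G^B_s$ on which the input hypothesis $\eta(G^B)\le\Delta$ and the LDD cutting bound can be combined) is still a shortest $sv$-path in the graph handed to $\SPaverage$, which gives $\eta_{(\GB_s)_{\phi_2}}(v;s)\le |P_{\GB}(v)\cap\esep|+1$ and hence the $O(\log^2 n)$ expectation. In your version the graph handed to $\SPaverage$ is $(G_{\phi_1+\phi_2})^B_s$, whose shortest paths are random objects depending on $\phi_1,\phi_2$ and bear no a priori relation to $P_{\GB}(v)$ or to short paths in $\bar G$; your phrase ``the choice of $D$ keeps those paths short in $\bar G$'' is exactly the assertion that has no justification. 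Your output-correctness chain (repeated application of the ``elementary fact'') does go through, so the algorithm you describe would return a valid $\phi$ if it terminates, but the $O(m\log^3 n\log\Delta)$ running time is not established. Notably, this is the same order-of-operations pitfall the authors acknowledge having fallen into in an earlier version; the fix is to keep everything in $G^B$, applying the price function as the outer operation.

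A secondary, smaller point: in your sketch of the Phase~1 halving claim you restrict a shortest path of $G^B_s$ to its last stretch inside $V_i$, but a subpath of a $G^B_s$-shortest path need not be a shortest path in $(G[V_i])^B_s$, which is what $\eta((G[V_i])^B)$ is about. The paper's \Cref{thm:phase 1 works} instead starts directly from the few-negative-edge shortest $sv$-path $P_{H^B}(v)$ in $H^B_s$, lower-bounds $\dist_G(u,v)$ by $-\eta_{H^B}(v)\cdot B$ along it, and combines with the weak-diameter bound $\dist_G(v,u)\le dB$ to derive a negative cycle unless $\eta_{H^B}(v)\le d$; the hypothesis $\eta(G^B)\le\Delta$ is not used in this step. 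Since you only sketch this part, this is more of a misdirection than a fatal error, but the argument you gesture at would not close.
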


\paragraph{Remark: Termination and Negative-Weight Cycles.}
Note that for both algorithms above, if $\Gin$ contains a negative-weight cycle then the algorithm might simply run forever, i.e. not terminate and not produce any output. In fact the algorithm $\SPmain$ \emph{never} terminates if $\Gin$ contains a negative-weight cycle. The algorithm $\ScaleDown$ may or may not terminate in this case: our guarantee is only that if it does terminate, it always produces a correct output.

In short, neither algorithm is required to produce an output in the case where $\Gin$ contains a negative-weight cycle, so we recommend the reader to focus on the case where $\Gin$ does not contain a negative-weight cycle.

\paragraph{Proof sketch of Theorem \ref{thm:result}.}
Algorithm $\SPmain$ is almost what we need for the main result in Theorem \ref{thm:result}, but the guarantees of $\SPmain$ are slightly weaker: it is Monte Carlo (rather than Las Vegas), and it cannot return a negative cycle. We show in Section \ref{sec:las-vegas} that a combination of simple probabilistic bootstrapping and binary search can be used to transform $\SPmain$ into the main result promised in Theorem \ref{thm:result}.

\section{Algorithm $\ScaleDown$ (\Cref{thm:scaledown})}\label{sec:ScaleDown}

We start by describing the algorithm $\ScaleDown$, as this contains our main conceptual contributions; the much simpler algorithm $\SPmain$ is described in the following section. 
Full pseudocode of $\ScaleDown$ is given in \Cref{alg:ScaleDown}. The algorithm mostly works with graph $G^B=(V, E, \wB)$.
For the analysis, the readers may want to familiarize themselves with, e.g., $G^B_s$, $w^B_s$, $P_{G^B}(v)$ and $\eta(G^B)$ from \Cref{def:eta,def:dummy}. In particular, throughout this section,  source $s$ \emph{always} refers to a dummy source that has outgoing edges to every vertex in $V$ and has no incoming edges.

Note that 
$m=\Theta(n)$ 
since the input condition requires constant out-degree for every vertex. So, we use $m$ and $n$ interchangeably in this section. We briefly describe the $\ScaleDown$ algorithm and sketch the main ideas of the analysis in \Cref{sec:ScaleDown:overview}, before showing the full analysis in \Cref{sec:ScaleDown:Analysis}.

\begin{algorithm2e}[t]
\caption{Algorithm for $\ScaleDown(G = (V,E,w),\Delta,B)$}\label{alg:ScaleDown}

\tcp*[h]{\textcolor{blue}{Input/Output: See \Cref{thm:scaledown}}}

\If{$\Delta\leq 2$}{
 \label{line:ScaleDown:BaseCase}
Let $\phi_2=0$ and jump to Phase 3 (\Cref{line:ScaleDown:Phase 3}) 
}

Let $d=\Delta/2$. Let $\GB_{\geq 0}:=(V, E, \wB_{\geq 0})$ where
$\wB_{\geq 0}(e) := \max \{0, \wB(e) \}$ for all $e \in E$\label{line:ScaleDown:def}

\BlankLine

\tcp*[h]{\textcolor{blue}{Phase 0: Decompose $V$ to SCCs $V_1, V_2, \ldots$ with weak diameter $dB$ in $G$}}

$\esep \gets \SCCDecomposition(\GB_{\geq 0}, dB)$ (\Cref{lem:SCCDecomposition})
\label{line:ScaleDown:LowDiamDecomposition}

Compute Strongly Connected Components (SCCs) of $\GB \setminus \esep$, denoted by $V_1, V_2, \ldots$ \label{line:ScaleDown:SCC decomposition} \\ \tcp*[h]{Properties: (\Cref{thm:ScaleDown:Weak Diameter}) For each $u,v\in V_i$, $\dist_G(u,v)\leq dB$.} \\ \tcp*[f]{(\Cref{thm:ScaleDown:expected esep}) If $\eta(\GB)\leq \Delta$, then for every $v\in V_i$, $E[P_{\GB}(v)\cap \esep]=O(\log^2 n)$}

\BlankLine

\tcp*[h]{\textcolor{blue}{Phase 1: Make edges inside the SCCs $\GB[V_i]$ non-negative}}

Let $H=\bigcup_i G[V_i]$, i.e. $H$ only contains edges inside the SCCs. \\ \tcp*[f]{(\Cref{thm:phase 1 works}) If $G$ has no negative-weight cycle, then $\eta(H^B)\leq d=\Delta/2$.}

$\phi_1\gets \ScaleDown(H, \Delta/2, B)$ \label{line:ScaleDown:Recursion} \tcp*[f]{(\Cref{thm:ScaleDown:Phase 1 conclusion}) $w_{H^B_{\phi_1}}(e)\geq 0$ for all $e\in H$}

\tcp*[h]{\textcolor{blue}{Phase 2: Make all edges in $\GB \setminus \esep$ non-negative}}

$\psi \gets \FixAlmostDag(\GB_{\phi_1} \setminus \esep, \{V_1, V_2, \ldots \})$ (\Cref{lem:almostDag}) \label{line:ScaleDown:callFixAlmostDAG}

$\phi_{2} \gets \phi_1 + \psi$  \tcp*[f]{(\Cref{thm:ScaleDown:Phase 2 conclusion}) All edges  in $(\GB \setminus \esep)_{\phi_2}$ are non-negative}

\BlankLine

\tcp*[h]{\textcolor{blue}{Phase 3: Make all edges in $\GB$ non-negative}}

$\psi'\gets \SPaverage((\GB_s)_{\phi_2}, s)$ (\Cref{lem:spaverage}) \label{line:ScaleDown:Phase 3} \tcp*[f]{(\Cref{thm:ScaleDown:main runtime}) expected time $O(m \log^3 m)$. (To define $(\GB_s)_{\phi_2}$ here, we define $\phi_2(s)=0$.)}

$\phi_3=\phi_2+\psi'$  \tcp*[f]{(\Cref{thm:ScaleDown:main output correctness}) All edges in $\GB_{\phi_3}$ are non-negative.}

\BlankLine
\BlankLine

\Return $\phi_3$ \tcp*{Since $\wB_{\phi_3}(e)\geq 0$, we have $w_{\phi_3}(e)\geq -B$}

\end{algorithm2e}

\subsection{Overview}\label{sec:ScaleDown:overview}

The algorithm runs in phases, where in the last phase it calls $\SPaverage((\GB_s)_{\phi_2}, s)$ for some price function $\phi_2$ (\Cref{line:ScaleDown:Phase 3}). Recall (\Cref{lem:spaverage}) that if $\SPaverage$ terminates, it returns price function $\psi'$ 
such that all edge weights in $(\GB_s)_{\phi_2+\psi'}$ are non-negative. Consequently, all edge weights in $\GB_{\phi_3}$ are non-negative for $\phi_3=\phi_2+\psi'$ (since $\GB_{\phi_2}$ is a subgraph of  $(\GB_s)_{\phi_2}$). 
We thus have $w_{{\phi_3}}(e)\geq -B$ for all $e\in E$ as desired (because $\wB(e)\leq w(e)+B$). 
This already proves the output correctness of $\ScaleDown$ (\Cref{item:ScaleDown Output} of \Cref{thm:scaledown}). (See \Cref{thm:ScaleDown:main output correctness} for the detailed proof.) 
Thus it remains to bound the runtime when $G$ contains no negative-weight cycle (\Cref{item:ScaleDown Runtime}).
{\em In the rest of this subsection we assume that $G$ contains no negative-weight cycle.}

Bounding the runtime when $\Delta\leq 2$ is easy: The algorithm simply jumps to Phase 3 with $\phi_2=0$ (\Cref{line:ScaleDown:BaseCase} in \Cref{alg:ScaleDown}). 
Since $\eta(\GB)\leq \Delta\leq 2$ (the input requirement; \Cref{item:ScaleDown input b}),  the runtime of $\SPaverage((\GB_s)_{\phi_2}, s)$ is $O((m+\sum_{v\in V} \eta_{\GB}(v))\log m)=O(m\Delta\log m)=O(m\log m)$.

For $\Delta>2$, we require some properties from Phases 0-2 in order to bound the runtime of Phase~3. In Phase 0, we partition vertices into strongly-connected components (SCCs)\footnote{Recall that a SCC is a {\em maximal} set $C\subseteq V$ such that for every $u,v\in V$, there are paths from $u$ to $v$ and from $v$ to $u$. See, e.g., Chapter 22.5 in \cite{CLRS_book_3rd}.} $V_1, V_2, \ldots$ such that each $V_i$ has weak diameter $dB=B\Delta/2$ in $G$. We do this by calling $\esep \gets \SCCDecomposition(\GB_{\geq 0}, dB)$, where $\GB_{\geq 0}$ is obtained by rounding all negative weights in $\GB$ up to $0$; we then let $V_1, V_2, \ldots$ be the SCCs of $\GB \setminus \esep$. (We need $\GB_{\geq 0}$ since $\SCCDecomposition$ can not handle negative weights.\footnote{One can also use $G_{\geq 0}$ instead of $\GB_{\geq 0}$. We choose $\GB_{\geq 0}$ since some proofs become slightly simpler.})  See \Cref{thm:ScaleDown:Weak Diameter} for the formal statement and proof.

The algorithm now proceeds in three phases. In Phase 1 it computes a price function $\phi_1$ that makes the edges inside each SCC $V_i$ non-negative; in Phase 2 it computes $\phi_2$ such that the edges between SCCs in $\GB \setminus \esep$ are also non-negative; finally, in Phase 3 it makes non-negative the edges in $\esep$ by calling $\SPaverage$.

{\bf (Phase 1)} 
Our goal in Phase 1 is to compute $\phi_1$ such that $\wB_{\phi_1}(e)\geq 0$ for every edge $e$ in $\GB[V_i]$ for all $i$.
To do this, we recursively call $\ScaleDown(H, \Delta/2, B)$, where $H$ is a union of all the SCCs $G[V_i]$. 
The main reason that we can recursively call  $\ScaleDown$ with parameter $\Delta/2$ is because we can argue that, 
when $G$ does not contain a negative-weight cycle,  
$$\eta(\HB)\leq d=\Delta/2.$$ 
As a rough sketch, the above bound holds because if any shortest path $P$ from dummy source $s$ in some $(\GB[V_i])_s$ contains more than $d$ negative-weight edges, then it can be shown that $w(P)<-dB$; this is the step where we crucially rely on the difference between $w^B(P)$ and $w(P)$. Combining $w(P)<-dB$ with the fact that $\GB[V_i]$ has weak diameter at most $dB$ implies that $G$ contains a negative-weight cycle.
See \Cref{thm:phase 1 works} for the detailed proof.

{\bf (Phase 2)} Now that all edges in $\GB_{\phi_1}[V_i]$ are non-negative, we turn to the remaining edges in $\GB\setminus \esep$.
Since these remaining edges (i.e. those not in the SCCs) form a directed acyclic graph (DAG), we can simply call  $\FixAlmostDag(\GB_{\phi_1} \setminus \esep, \{V_1, V_2, \ldots \})$ (\Cref{lem:almostDag}) to get a price function $\psi$ such that all edges in $(\GB_{\phi_1} \setminus \esep)_{\psi}=\GB_{\phi_2} \setminus \esep$ are non-negative. (See \Cref{thm:ScaleDown:Phase 2 conclusion}.)

{\bf (Phase 3)} 
By the time we reach this phase, the only negative edges remaining in $\GB_{\phi_2}$ are the ones in $\esep$; that is, $\eneg(\GB_{\phi_2}) \subseteq \esep$. We call  $\SPaverage((\GB_s)_{\phi_2}, s)$ to eliminate these negative edges.
We are now ready to
show that the runtime of Phase 3, which is $O((m+\sum_{v\in V} \eta_{(\GB_s)_{\phi_2}}(v;s))\log m)$ (Lemma \ref{lem:spaverage}), is $O(m\log^3 m)$ in expectation.
We do so by proving that for any $v\in V$,
\[E\left[\eta_{(\GB_s)_{\phi_2}}(v; s)\right]= O(\log^2 m).\] 
(See Equation \eqref{eq:Phase 3 expectation} near the end of the next subsection.)
A competitive reader might want to try to prove the above via a series of inequalities: 
$\eta_{(\GB_s)_{\phi_2}}(v)\leq |P_{\GB}(v)\cap \eneg(\GB_{\phi_2})|+1 \leq |P_{\GB}(v)\cap \esep|+1$, and also, the guarantees of $\SCCDecomposition$ (Lemma \ref{lem:SCCDecomposition}) imply that after Phase 0, $E\left[|P_{\GB}(v)\cap \esep|\right]=O(\log^2 m).$ (Proved in \Cref{thm:ScaleDown:expected esep}.)

Finally, observe that there are $O(\log \Delta)$ recursive calls, and the runtime of each call is dominated by the $O(m\log^3 m)$ time of Phase 3. So, the total expected runtime is $O(m\log^3(m)\log \Delta)$

\paragraph{Remark.}
Our sequence of phases 0-3 is reminiscent of the sequencing used by Bernstein, Probst-Gutenberg, and Saranurak in their result on dynamic reachability \cite{BernsteinGS20}, although the actual work within each phase is entirely different, and the decompositions have different guarantees.  The authors of \cite{BernsteinGS20} decompose the graph into a DAG of {\em expanders} plus some separator edges (analogous to our phase 0); they then handle reachability inside expanders (phase 1), followed by reachability using the DAG edges (phase 2), and finally incorporate the separator edges (phase 3).

\subsection{Full Analysis}\label{sec:ScaleDown:Analysis}

\Cref{thm:scaledown} follows from 
\Cref{thm:ScaleDown:main output correctness,thm:ScaleDown:main runtime} below. We start with \Cref{thm:ScaleDown:main output correctness} which is quite trivial to prove.

\begin{theorem}\label{thm:ScaleDown:main output correctness}
$\ScaleDown(G=(V,E,w), \Delta,B)$ either does not terminate or returns $\phi=\phi_3$ such that $w_\phi(e)\geq -B$ for all $e\in E$. 
\end{theorem}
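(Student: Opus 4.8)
The plan is to trace the price function $\phi_3$ back through the three phases and show that each composition step is well-defined and that the final weight function $w^B_{\phi_3}$ is non-negative on every edge, which by the relation $w^B(e) \le w(e) + B$ immediately gives $w_{\phi_3}(e) \ge -B$. I would first dispose of the termination caveat: since several subroutines ($\ScaleDown$ recursively in Phase 1, $\SPaverage$ in Phase 3) may fail to terminate when $G$ has a negative-weight cycle, the statement only claims something conditionally, so I assume from here on that the algorithm reaches the \texttt{return} line, i.e.\ every subroutine call returned.

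Next I would handle the base case $\Delta \le 2$ separately, since then $\phi_2 = 0$ and the algorithm jumps directly to Phase 3; the only thing to check is that in this case $\SPaverage((\GB_s)_{\phi_2},s)$ is called with $s$ being a source that reaches all vertices (true by construction of the dummy source), so by \Cref{lem:spaverage} the returned $\psi'$ makes all edges of $(\GB_s)_{\phi_2+\psi'}$ non-negative. For the general case $\Delta > 2$, I would proceed phase by phase. In Phase 1, the recursive call $\ScaleDown(H,\Delta/2,B)$ returns, by the induction hypothesis (i.e.\ by this very theorem applied to the smaller instance $H$), a price function $\phi_1$ with $w^B_{\phi_1}(e) \ge 0$ for all $e \in E(H)$; since $H = \bigcup_i G[V_i]$ collects exactly the intra-SCC edges, this says $w^B_{\phi_1}(e) \ge 0$ for every edge inside some $V_i$. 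In Phase 2, I would check that the input to $\FixAlmostDag$ satisfies its two hypotheses from \Cref{lem:almostDag}: the partition $\{V_1,V_2,\ldots\}$ of $V$ has the property that $(\GB_{\phi_1}\setminus\esep)[V_i] = \GB_{\phi_1}[V_i]$ has no negative edges (just shown), and contracting each $V_i$ in $\GB\setminus\esep$ yields a DAG because the $V_i$ are the SCCs of $\GB\setminus\esep$. Hence $\psi := \FixAlmostDag(\GB_{\phi_1}\setminus\esep, \{V_1,V_2,\ldots\})$ satisfies $w^B_{\phi_1+\psi}(e) = (w^B_{\phi_1})_\psi(e) \ge 0$ for all $e \in E\setminus\esep$; using $(G_{\phi_1})_\psi = G_{\phi_1+\psi}$ (\Cref{def:prelim:price}) and setting $\phi_2 = \phi_1+\psi$, all edges of $\GB_{\phi_2}\setminus\esep$ are non-negative, i.e.\ $\eneg(\GB_{\phi_2}) \subseteq \esep$.

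Finally, in Phase 3 I would observe that $\GB_{\phi_2}$ is a subgraph of $(\GB_s)_{\phi_2}$ (extending $\phi_2$ by $\phi_2(s)=0$ as the algorithm does), that the dummy source $s$ reaches every vertex, and that all vertices $v\neq s$ have constant out-degree (inherited from Assumption \ref{assum:main} through the recursion, since $\ScaleDown$ requires this as input condition \ref{prop:scaledown:degree} and passing to induced subgraphs and adding a dummy source preserves it for $v \neq s$). Thus \Cref{lem:spaverage} applies and the returned $\psi'$ satisfies $w^B_{\phi_2+\psi'}(e)\ge 0$ for all $e\in E\cup\{(s,v)\}$, in particular for all $e\in E$. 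Setting $\phi_3 = \phi_2+\psi'$ gives $w^B_{\phi_3}(e)\ge 0$ for all $e\in E$, and since $w^B(e) = w(e)+B$ for $e\in\eneg(G)$ and $w^B(e)=w(e) \ge w(e)+B$ would be false — instead I should say $w^B(e) \ge w(e)$... let me restate: for $e \in \eneg(G)$, $w(e) = w^B(e) - B$, so $w_{\phi_3}(e) = w^B_{\phi_3}(e) - B \ge -B$; for $e \notin \eneg(G)$, $w_{\phi_3}(e) = w^B_{\phi_3}(e) \ge 0 \ge -B$. Either way $w_{\phi_3}(e) \ge -B$, as claimed. The main obstacle is purely bookkeeping: making sure the price-function compositions $\phi_2 = \phi_1 + \psi$ and $\phi_3 = \phi_2 + \psi'$ correctly correspond to the iterated graph transformations via $(G_\phi)_\psi = G_{\phi+\psi}$, and that the subgraph relationship $\GB_{\phi_2} \subseteq (\GB_s)_{\phi_2}$ lets us transfer non-negativity of edges from the larger graph down to $\GB$; there is no real combinatorial difficulty here, which is consistent with the excerpt's remark that this theorem is "quite trivial to prove."
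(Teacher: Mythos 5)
Your proof is correct, but it does substantially more work than the paper's, and the extra work obscures the key structural insight. The paper's proof of this theorem is two lines long and never mentions Phases~0--2: it simply notes that \emph{whatever} integral price function $\phi_2$ happens to be when Phase~3 is reached, the call $\SPaverage((\GB_s)_{\phi_2},s)$ either runs forever or returns $\psi'$ with $\wB_{\phi_2+\psi'}(e)\geq 0$ for all $e$, because the output guarantee of \Cref{lem:spaverage} is \emph{unconditional} (it holds for any input graph with constant out-degree and reachable source, regardless of whether earlier phases did anything useful). Your proof instead verifies, phase by phase, that $\phi_1$ makes intra-SCC edges non-negative and that $\phi_2 = \phi_1 + \psi$ makes all of $\GB\setminus\esep$ non-negative. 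Those two facts are true (they are \Cref{thm:ScaleDown:Phase 1 conclusion} and \Cref{thm:ScaleDown:Phase 2 conclusion}), but they are proved in the paper only to bound the \emph{running time} of Phase~3 (via $\eneg(\GB_{\phi_2})\subseteq\esep$ and the expectation bound on $|P_{\GB}(v)\cap\esep|$), not its output correctness. Relying on them here also drags in an induction on $\Delta$ and input-requirement verification for the recursive call, none of which the paper needs. The net effect is that you prove a strictly stronger claim than necessary, at the cost of making the argument look conditional on Phases~0--2 behaving well when in fact it is not. The final algebraic step ($\wB(e)\leq w(e)+B$ giving $w_{\phi_3}(e)\geq -B$) matches the paper, though the paper's one-inequality formulation is cleaner than your case split.
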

\begin{proof}
Consider when we call $\SPaverage((\GB_s)_{\phi_2}, s)$(\Cref{lem:spaverage}) in Phase 3 for some integral price function $\phi_2.$ Either this step does not terminate or returns an integral price function $\psi'$ such that
$(\GB_{\phi_2})_{\psi'}=\GB_{\phi_2+\psi'}=\GB_{\phi_3}$ contains no negative-weight edges; i.e. $\wB_{\phi_3}(e)\geq 0$ for all $e\in E$.  
Since $\wB(e)\leq w(e)+B$, we have $w_{\phi_3}(e)\geq \wB_{\phi_3}(e)-B\geq -B$ for all $e\in E$. 
\end{proof}

\Cref{thm:ScaleDown:main output correctness} implies that the output condition of $\ScaleDown$ (\cref{item:ScaleDown Output} in \Cref{thm:scaledown}) is always satisfied, regardless of whether $G$ contains a negative-weight cycle or not. 
It remains to show that if $G$ does not contain a negative-weight cycle, then $\ScaleDown(G=(V,E,w), \Delta,B)$ has expected runtime of $O(m\log^3(m)\log(\Delta))$. It suffices to show the following.

\begin{theorem}\label{thm:ScaleDown:main runtime}
If $G$ does not contain a negative-weight cycle, then
the expected time complexity of Phase 3 is $O(m\log^3 m)$. 
\end{theorem}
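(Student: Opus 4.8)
The plan is to prove Theorem~\ref{thm:ScaleDown:main runtime} by bounding the input parameter that drives the running time of $\SPaverage$ in Phase~3, namely $\sum_{v \in V} \eta_{(\GB_s)_{\phi_2}}(v;s)$. By Lemma~\ref{lem:spaverage}, the running time of the call $\SPaverage((\GB_s)_{\phi_2}, s)$ is $O(\log m \cdot (n + \sum_{v \in V} \eta_{(\GB_s)_{\phi_2}}(v;s)))$, so (using $m = \Theta(n)$) it suffices to show $E[\eta_{(\GB_s)_{\phi_2}}(v;s)] = O(\log^2 m)$ for each fixed $v \in V$ and then sum over the $n$ vertices. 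Note that $\phi_2$ is a price function, so by Lemma~\ref{lem:price-equivalent} the graph $(\GB_s)_{\phi_2}$ is equivalent to $\GB_s$; in particular it has no negative-weight cycle (since $G$, hence $\GB_s$, does not), and shortest paths are preserved, so $\eta$ is finite.

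The key chain of inequalities is the one flagged in the overview. First, I would show $\eta_{(\GB_s)_{\phi_2}}(v;s) \le |P_{\GB}(v) \cap \eneg(\GB_{\phi_2})| + 1$. The idea: $P_{\GB}(v)$ is a shortest $sv$-path in $\GB_s$ realizing $\eta_{\GB}(v)$; since $(\GB_s)_{\phi_2}$ is equivalent to $\GB_s$, this same path (more precisely, the portion after the dummy edge, plus one dummy edge from $s$) is a shortest path in $(\GB_s)_{\phi_2}$, and its negative edges in the new weighting $\wB_{\phi_2}$ are a subset of its negative edges in $\wB$ together with possibly the single dummy edge $(s,\cdot)$ — actually I should be careful: the dummy edge has weight $0$ in $\GB_s$ but after applying $\phi_2$ (with $\phi_2(s) := 0$ as stipulated in the algorithm) it has weight $-\phi_2(\text{head})$, which may be negative, hence the ``$+1$''. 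Second, by the conclusion of Phase~2 (Claim~\ref{thm:ScaleDown:Phase 2 conclusion}, which I may assume), all edges of $\GB_{\phi_2} \setminus \esep$ are non-negative, so $\eneg(\GB_{\phi_2}) \subseteq \esep$, giving $|P_{\GB}(v) \cap \eneg(\GB_{\phi_2})| \le |P_{\GB}(v) \cap \esep|$. Third, by Claim~\ref{thm:ScaleDown:expected esep}, $E[|P_{\GB}(v) \cap \esep|] = O(\log^2 m)$. Chaining these and adding $1$ gives $E[\eta_{(\GB_s)_{\phi_2}}(v;s)] = O(\log^2 m)$, as desired.

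Putting it together: summing over $v$ yields $E[\sum_v \eta_{(\GB_s)_{\phi_2}}(v;s)] = O(n \log^2 m)$, and then by linearity and Lemma~\ref{lem:spaverage} the expected running time of Phase~3 is $O(\log m \cdot (n + n\log^2 m)) = O(n \log^3 m) = O(m \log^3 m)$. One subtlety I would address explicitly is that $\phi_2$ is itself a random variable (it depends on $\esep$ from the randomized Phase~0 and on the recursive call in Phase~1), so the bound $E[\eta_{(\GB_s)_{\phi_2}}(v;s)] = O(\log^2 m)$ must be an expectation over this randomness; the first two inequalities in the chain hold deterministically (pointwise for every outcome), and only the last step $E[|P_{\GB}(v) \cap \esep|] = O(\log^2 m)$ invokes randomness, so the argument goes through cleanly. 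A second subtlety is the base case $\Delta \le 2$, where the algorithm sets $\phi_2 = 0$ and jumps to Phase~3; there $\eta_{(\GB_s)_{0}}(v;s) = \eta_{\GB}(v) \le \eta(\GB) \le \Delta \le 2$ by the input requirement (Theorem~\ref{thm:scaledown}, item~\ref{item:ScaleDown input b}), so the Phase~3 running time is $O(m \log m) = O(m\log^3 m)$ trivially.

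The main obstacle I anticipate is getting the first inequality exactly right — carefully relating $\eta$ in the price-shifted, dummy-source graph $(\GB_s)_{\phi_2}$ to the negative-edge count of the specific path $P_{\GB}(v)$ in $\GB$, accounting for (i) the dummy edge from $s$ and the convention $\phi_2(s) = 0$, (ii) the fact that $\eta$ is defined as a minimum over \emph{all} shortest paths so exhibiting one good path suffices, and (iii) the preservation of the shortest-path structure under the price function via Lemma~\ref{lem:price-equivalent}. Everything after that is routine. The cross-references to Claims~\ref{thm:ScaleDown:Phase 2 conclusion} and~\ref{thm:ScaleDown:expected esep} are to results that (per the algorithm's inline comments) are established elsewhere in this section, so I would simply cite them.
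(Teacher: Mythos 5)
Your proposal is correct and follows essentially the same approach as the paper's proof: bound the Phase 3 runtime via Lemma~\ref{lem:spaverage} by controlling $E[\eta_{(\GB_s)_{\phi_2}}(v;s)]$, chain the inequalities $\eta_{(\GB_s)_{\phi_2}}(v;s) \le |P_{\GB}(v)\cap \eneg(\GB_{\phi_2})| + 1 \le |P_{\GB}(v)\cap \esep| + 1$ (using the equivalence of $(\GB_s)_{\phi_2}$ and $\GB_s$, Lemma~\ref{thm:ScaleDown:Phase 2 conclusion}, and the $+1$ for a possibly-negative dummy edge), and then apply Lemma~\ref{thm:ScaleDown:expected esep}; the base case $\Delta\le 2$ is handled separately exactly as the paper does. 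Your explicit discussion of why the dummy edge may go negative under $\phi_2$ and why the first two inequalities hold pointwise before taking expectations matches the footnote in the paper and is a nice clarification, but there is no substantive departure in method.
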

This suffices because, first of all, it is easy to see that Phase 0 requires $O(m \log^3(m))$ time (by \Cref{lem:SCCDecomposition}) and other phases (except the recursion on  \Cref{line:ScaleDown:Recursion}) requires $O(m+n)$ time. Moreover, observe that if $G$ contains no negative-weight cycle, then the same holds for $H$ in the recursion call $\ScaleDown(H, \Delta/2, B)$ (\Cref{line:ScaleDown:Recursion} of \Cref{alg:ScaleDown}); thus, if $G$ contains no negative-weight cycle, then all recursive calls also get an input with no negative-weight cycle. 
So, by \Cref{thm:ScaleDown:main runtime} the time to execute a single call in the recursion tree is $O(m\log^3 m)$ in expectation. 
Since there are $O(\log \Delta)$ recursive calls, the total running time is $O(m\log^3(m)\log(\Delta))$ by linearity of expectation.

\paragraph{Proof of \Cref{thm:ScaleDown:main runtime}.}
The rest of this subsection is devoted to proving \Cref{thm:ScaleDown:main runtime}. From now on, we consider any graph $G$ that does not contain a negative-weight cycle. (We often continue to state this assumption in lemma statements so that they are self-contained.)

\paragraph{Base case: $\Delta\leq 2$.} This means that for every vertex $v$, $\eta_{\GB}(v)\leq \eta(\GB)\leq \Delta\leq 2$ (see the input requirement of $\ScaleDown$ in \Cref{item:ScaleDown input b} of \Cref{thm:scaledown}). So, the runtime of Phase 3 is $$O\left(\left(m+\sum_{v\in V} \eta_{\GB}(v)\right)\log m\right)=O\left(m\Delta\log m\right)=O\left(m\log m\right).$$

We now consider when $\Delta>2$ and show properties achieved in each phase. We will use these properties from earlier phases in analyzing the runtime of Phase 3.

\paragraph{Phase 0: Low-diameter Decomposition.} 
It is straightforward that the SCCs $G[V_i]$ have weak diameter at most $dB$ (this property will be used in Phase 1): 

\begin{lemma}\label{thm:ScaleDown:Weak Diameter}
For every $i$ and every $u,v\in V_i$, $\dist_G(u,v)\leq dB.$
\end{lemma}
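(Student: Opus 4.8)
The statement to prove is \Cref{thm:ScaleDown:Weak Diameter}: for every $i$ and every $u,v \in V_i$, $\dist_G(u,v) \leq dB$. The plan is to trace back through the definitions of how the $V_i$ are constructed and then invoke the weak-diameter guarantee of $\SCCDecomposition$ (\Cref{lem:SCCDecomposition}). Recall that in Phase 0 the algorithm sets $\esep \gets \SCCDecomposition(\GB_{\geq 0}, dB)$ and then defines $V_1, V_2, \ldots$ to be the SCCs of $\GB \setminus \esep$. So first I would note that the weak-diameter conclusion of \Cref{lem:SCCDecomposition}, applied with input graph $\GB_{\geq 0}$ and parameter $D = dB$, says: if $u,v$ lie in the same SCC of $\GB_{\geq 0} \setminus \esep$, then $\dist_{\GB_{\geq 0}}(u,v) \leq dB$ and $\dist_{\GB_{\geq 0}}(v,u) \leq dB$.

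\textbf{Bridging the two notions of "same SCC".} The next step is to reconcile the SCCs of $\GB \setminus \esep$ (which is how $V_i$ is defined) with the SCCs of $\GB_{\geq 0} \setminus \esep$ (which is what \Cref{lem:SCCDecomposition} talks about). The key observation is that $\GB$ and $\GB_{\geq 0}$ have exactly the same edge set (and hence the same vertex set and same set of directed cycles / strong connectivity structure) — $\GB_{\geq 0}$ just rounds the negative weights up to $0$ without deleting any edges. Therefore removing the same edge set $\esep$ from either graph yields graphs with identical SCCs: $u,v \in V_i$ (same SCC of $\GB \setminus \esep$) if and only if $u,v$ are in the same SCC of $\GB_{\geq 0} \setminus \esep$. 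So \Cref{lem:SCCDecomposition} gives $\dist_{\GB_{\geq 0}}(u,v) \leq dB$.

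\textbf{Transferring the distance bound back to $G$.} Finally I must convert a distance bound in $\GB_{\geq 0}$ into the claimed distance bound in $G$. This uses a monotonicity of weights argument: for every edge $e$, we have $w(e) \leq \wB_{\geq 0}(e)$. Indeed, for $e \in \eneg(G)$ (i.e. $w(e) < 0$), $\wB(e) = w(e) + B \geq w(e)$ and $\wB_{\geq 0}(e) = \max\{0, \wB(e)\} \geq \wB(e) \geq w(e)$; for $e \notin \eneg(G)$ (i.e. $w(e) \geq 0$), $\wB(e) = w(e) \geq 0$ so $\wB_{\geq 0}(e) = w(e)$. Hence $w(e) \leq \wB_{\geq 0}(e)$ for all $e$, which means any walk has weight in $G$ at most its weight in $\GB_{\geq 0}$, and therefore $\dist_G(u,v) \leq \dist_{\GB_{\geq 0}}(u,v) \leq dB$, as desired.

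\textbf{Expected obstacle.} None of the steps is technically deep; the only thing requiring a moment of care is the middle step — making explicit that deleting $\esep$ from $\GB$ versus from $\GB_{\geq 0}$ gives the same SCC partition, since \Cref{lem:SCCDecomposition}'s guarantee is literally phrased in terms of "SCC of $G \setminus \esep$" where $G = \GB_{\geq 0}$ is the input to $\SCCDecomposition$, whereas the algorithm defines $V_i$ via $\GB \setminus \esep$. Once one observes these two graphs differ only in edge weights (identical edge sets), the identification is immediate. The whole proof is a couple of lines.
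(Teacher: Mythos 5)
Your proof is correct and follows essentially the same route as the paper: the monotonicity $w(e)\leq \wB_{\geq 0}(e)$ gives $\dist_G(u,v)\leq\dist_{\GB_{\geq 0}}(u,v)$, and the $\SCCDecomposition$ guarantee gives $\dist_{\GB_{\geq 0}}(u,v)\leq dB$. The only difference is that you make explicit the identification of SCCs of $\GB\setminus\esep$ with SCCs of $\GB_{\geq 0}\setminus\esep$ (same edge set, so same strong-connectivity structure), a step the paper's two-line proof leaves implicit.
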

\begin{proof}
For every $u,v\in V_i$, we have $\dist_G(u,v)\leq \dist_{\GB_{\geq 0}}(u, v)  \leq dB$ where the first inequality is because $w(e)\leq \wB_{\geq 0}(e)$ for every edge $e\in E$ and the second inequality is by the output guarantee of $\SCCDecomposition$ (\Cref{lem:SCCDecomposition}). 
\end{proof}

Another crucial property from the decomposition is this: Recall from \Cref{def:eta} that $P_{\GB}(v)$ is the shortest $sv$-path in $\GB_s$ with $\eta_{\GB}(v)$ negative-weight edges.
We show below that in expectation $P_{\GB}(v)$ contains only $O(\log^2 n)$ edges from $\esep$. This will be used in Phase 3.

\begin{lemma}\label{thm:ScaleDown:expected esep}
If $\eta(\GB)\leq \Delta$, then for every $v\in V$, $E\left[\left|P_{\GB}(v)\cap \esep\right|\right]=O(\log^2 m)$. 
\end{lemma}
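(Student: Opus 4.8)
The plan is to bound the expected number of edges of $\esep$ lying on the specific shortest path $P_{\GB}(v)$ by exploiting the probabilistic guarantee on $\esep$ from \Cref{lem:SCCDecomposition}. The subtlety is that $P_{\GB}(v)$ is itself a random object: it is defined relative to $\GB$, which does \emph{not} depend on $\esep$, so the path is actually fixed before the randomness of $\SCCDecomposition$ is drawn. I would make this explicit at the start: fix any $v \in V$; since $\GB$, $\GB_s$, and hence $P_{\GB}(v)$ are determined independently of the random choices made by $\SCCDecomposition(\GB_{\geq 0}, dB)$, we may condition on $P_{\GB}(v)$ being some fixed path $P$ and then take expectation over $\esep$.

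\textbf{Key steps.} First, write $E[|P_{\GB}(v) \cap \esep|] = \sum_{e \in P_{\GB}(v)} \Pr[e \in \esep]$ by linearity of expectation. Second, apply the second bullet of \Cref{lem:SCCDecomposition} (with $G = \GB_{\geq 0}$ and $D = dB$) to get, for each edge $e$ on the path, $\Pr[e \in \esep] = O\!\left(\frac{\wB_{\geq 0}(e)\log^2 n}{dB} + n^{-10}\right)$. Third, bound the total: the $n^{-10}$ terms contribute at most $m \cdot n^{-10} = O(n^{-8})$ since $P_{\GB}(v)$ has at most $m = \Theta(n)$ edges; for the main terms I need $\sum_{e \in P_{\GB}(v)} \wB_{\geq 0}(e) = O(dB)$, after which the sum is $O(\log^2 n) = O(\log^2 m)$ and we are done.

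\textbf{Main obstacle.} The crux is establishing $\sum_{e \in P_{\GB}(v)} \wB_{\geq 0}(e) = O(dB)$, i.e.\ that the total \emph{nonnegative-rounded} weight along $P_{\GB}(v)$ is $O(dB)$. Here is where I would use the hypothesis $\eta(\GB) \le \Delta$ together with $w^B(e) \ge w(e) \ge -2B$: split $P := P_{\GB}(v)$ into its negative-in-$\GB$ edges and its nonnegative edges. The nonnegative edges have $\wB_{\geq 0}(e) = \wB(e)$, and their total is at most $w^B(P) + 2B\cdot|\eneg(\GB) \cap P|$ (accounting for how much the $\le \eta_{\GB}(v) \le \Delta$ negative edges can drag down $w^B(P)$), while each negative-in-$\GB$ edge contributes $\wB_{\geq 0}(e) = 0$. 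It remains to bound $w^B(P)$: since $P$ is a shortest $sv$-path in $\GB_s$ and the dummy-source edge $(s,v)$ has weight $0$, we get $w^B(P) = \dist_{\GB_s}(s,v) \le 0$. Combining, $\sum_{e \in P} \wB_{\geq 0}(e) \le 2B\Delta = O(dB)$ since $d = \Delta/2$. Plugging back gives $E[|P_{\GB}(v) \cap \esep|] = O\!\left(\frac{2B\Delta \log^2 n}{dB}\right) + O(n^{-8}) = O(\log^2 m)$, as claimed. I expect the only place needing care is getting the constants and the $w^B$-vs-$w$ bookkeeping exactly right, and double-checking that $P_{\GB}(v)$ is indeed independent of $\esep$'s randomness.
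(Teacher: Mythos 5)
Your proof is correct and follows essentially the same route as the paper: linearity of expectation, the per-edge probability bound from $\SCCDecomposition$ with $D=dB$, and the key estimate $\sum_{e\in P_{\GB}(v)}\wB_{\geq 0}(e)=O(\Delta B)$ obtained from $\wB_s(P_{\GB}(v))\le 0$ plus the fact that each of the at most $\eta(\GB)\le\Delta$ negative edges is depressed by $O(B)$. The paper uses the sharper $\wB(e)\ge -B$ (so it gets $\eta_{\GB}(v)\cdot B$ rather than $2B\Delta$), but this only affects constants; your explicit remark that $P_{\GB}(v)$ is fixed before the randomness of $\SCCDecomposition$ is drawn is a welcome clarification of a point the paper leaves implicit.
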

\begin{proof} Consider any $v\in V$.  The crux of the proof is the following bound on the weight of $P_{\GB}(v)$ in $\GB_{\geq 0}$:
\begin{align}
    \wB_{\geq 0}(P_{\GB}(v))\leq \eta_{\GB}(v)\cdot B\label{eq:ScaleDown:expected esep}
\end{align}
where we define $\wB_{\geq 0}(s,u)=0$ for every $u\in V$.  Recall the definition of of $\wB_s$ from \Cref{def:dummy} and note that $\wB_s(P_{\GB}(v)) \leq 0$ because there is an edge of weight $0$ from $s$ to every $v \in V$. We thus have
\Cref{eq:ScaleDown:expected esep} because
\begin{align}
    \wB_{\geq 0}(P_{\GB}(v))&\leq \wB_s(P_{\GB}(v))+  |P_{\GB}(v)\cap \eneg(\GB)|
    \cdot B &\mbox{since $\wB(e)\geq -B$ for all $e\in E$}\nonumber\\
    &\leq |P_{\GB}(v)\cap \eneg(\GB)| \cdot B &\mbox{since $\wB_s(P_{\GB}(v))\leq 0$}\nonumber\\
    & = \eta_{\GB}(v) \cdot B &
    \mbox{by definition of $P_{\GB}(v)$} 
    \nonumber
\end{align}
Recall from the output guarantee of $\SCCDecomposition$ (\Cref{lem:SCCDecomposition}) that $\Pr[e \in \esep] = O(\wB_{\geq 0}(e) \cdot (\log n)^2 / D + n^{-10})$, where in our case $D=dB=B\Delta/2$. This, the linearity of expectation, and \eqref{eq:ScaleDown:expected esep}
imply that
\begin{align*}
  E[P_{\GB}(v)\cap \esep] & = O\left(\frac{\wB_{\geq 0}(P_{\GB}(v))\cdot(\log n)^2}{B\Delta/2}+|(P_{\GB}(v))|\cdot n^{-10}\right)\\
  &\stackrel{\eqref{eq:ScaleDown:expected esep}}{=} O\left(\frac{2\eta_{\GB}(v)\cdot (\log n)^2}{\Delta}+n^{-9}\right) 
\end{align*}
which is $O(\log^2 n)$ when $\eta(\GB)\leq \Delta$.  
\end{proof}

\paragraph{Phase 1: Make edges inside the SCCs $\GB[V_i]$ non-negative.} 
We argue that $\ScaleDown(H, \Delta/2, B)$ is called with an input that satisfies its input requirements  (\Cref{thm:scaledown}). The most important requirement is $\eta(\HB)\leq \Delta/2$ (\Cref{item:ScaleDown input b}) which we prove below (other requirements are trivially satisfied). Recall that we set $d := \Delta/2$ in Line \ref{line:ScaleDown:def}.

\begin{lemma}\label{thm:phase 1 works}
If $G$ has no negative-weight cycle, then $ \eta(H^{B})\leq d=\Delta/2$. 
\end{lemma}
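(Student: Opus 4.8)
The plan is a proof by contradiction: assuming some vertex $v$ has $\eta_{H^B}(v) > d$, I will build a closed walk of negative total weight in $G$, contradicting the no-negative-cycle hypothesis. I would first record the cheap structural facts about $H$ and $H^B_s$. Since $H = \bigcup_i G[V_i]$ is the (edge-disjoint) union of the induced subgraphs on the SCCs, $H$ has no edge between two distinct $V_i$; hence in $H^B_s$ a path leaving the dummy source $s$ enters exactly one component and never leaves it (recall $s$ has no incoming edges). So for $v \in V_i$, any $sv$-path $P$ in $H^B_s$ is either the single edge $(s,v)$ — which has zero negative edges, so $\eta_{H^B}(v)=0\le d$ and we are done — or has the form $P = (s,u)\cdot Q$ with $u \in V_i$ and $Q$ a $uv$-path inside $G[V_i]$. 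I would also note $\eta_{H^B}(v)<\infty$: since $w^B(e)\ge w(e)$ for every edge, a cycle $C$ with $w^B(C)<0$ would have $w(C)\le w^B(C)<0$ and $C$ is a cycle of $G$, so $H^B$ (like $G$) has no negative cycle and $\dist_{H^B_s}(s,v)$ is finite.

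Next comes the only substantive step. Fix $v\in V_i$, let $P=(s,u)\cdot Q$ be a shortest $sv$-path in $H^B_s$ carrying exactly $\eta_{H^B}(v)$ negative edges, and suppose for contradiction $\eta_{H^B}(v) > d$. The negative edges of $P$ are precisely the edges $e\in Q$ with $w^B(e)<0$, i.e.\ with $w(e)<-B$, and on each such edge $w^B(e)=w(e)+B$, while on every other edge $w^B(e)\ge w(e)$. Therefore $w^B(Q)\ge w(Q) + B\cdot\bigl|\{e\in Q: w^B(e)<0\}\bigr| = w(Q) + B\cdot\eta_{H^B}(v) > w(Q) + dB$, using $B\ge 1$ and $\eta_{H^B}(v)>d$. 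On the other hand the direct edge $(s,v)$ is an $sv$-path in $H^B_s$ of weight $0$, so $w^B(Q)=w^B_s(P)=\dist_{H^B_s}(s,v)\le 0$. Combining the two bounds gives $w(Q) < -dB$. This is the place where the gap between $w^B$ and $w$ is used in an essential way: a path that is short in $w^B_s$ but has many negative edges must be very negative in $w$.

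Finally I would close the walk using Phase 0. Since $u,v\in V_i$, the weak-diameter guarantee (\Cref{thm:ScaleDown:Weak Diameter}) gives $\dist_G(v,u)\le dB$, and as $G$ has no negative-weight cycle this is realized by an actual $vu$-walk in $G$. Concatenating $Q$ (a $uv$-walk in $G[V_i]\subseteq G$) with that walk yields a closed walk at $u$ in $G$ of total weight $w(Q)+\dist_G(v,u) < -dB + dB = 0$; a closed walk of negative total weight contains a negative cycle, contradicting the hypothesis. Hence $\eta_{H^B}(v)\le d$ for every $v$, i.e.\ $\eta(H^B)\le d=\Delta/2$.

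The main (really the only) obstacle is the middle paragraph: one has to be careful that the ``$\le 0$'' bound on $w^B(Q)$ comes from the weight-$0$ dummy edge $(s,v)$ rather than from $w(Q)$ itself (which may well be positive), and that $\eta_{H^B}(v)$ counts exactly the edges of $Q$ on which $w^B$ exceeds $w$ by $B$, so that the per-edge slack of $B$ accumulates to strictly more than $dB$. The remaining pieces — the disjoint-union structure of $H$, finiteness of the distances, and the closed-walk-to-negative-cycle step — are routine.
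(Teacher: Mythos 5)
Your proof is correct and takes essentially the same approach as the paper's: both arguments rest on the two inequalities "$w(Q) \le -\eta_{H^B}(v)\cdot B$" (from comparing $w$ and $w^B$ along the chosen shortest path and using that the zero-weight dummy edge bounds $\dist_{H^B_s}(s,v)\le 0$) and "$\dist_G(v,u)\le dB$" (weak diameter), then combine them via the no-negative-cycle hypothesis. You phrase it as a contradiction while the paper chains the inequalities directly, but this is a cosmetic difference.
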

\begin{proof}
Consider any vertex $v\in V$. Let
$P:=P_{H^{B}}(v)\setminus s;$
i.e. $P$ is obtained by removing $s$ from a shortest $sv$-path in $H_s^{B}$ that contains $\eta_{H^B}(v)$ negative weights in $H_s^B$.
Let $u$ be the first vertex in $P$. 
Note three easy facts: 
\begin{itemize}[noitemsep]
    \item[(a)] $w_{\HB}(e)= w_H(e)+B$ for all $e\in \eneg(\HB)$,
    \item[(b)] $|\eneg(\HB)\cap P|=|\eneg(\HB)\cap P_{H^{B}}(v)|=\eta_{H^B}(v)$, and
    \item[(c)] $w_{\HB}(P)= w_{\HB_{s}}(P_{H^{B}}(v))\leq w_{\HB_{s}}(s,v)=0,$
\end{itemize}
where (b) and (c) are because the edges from $s$ to $u$ and $v$ in $H^{B}_s$ have weight zero. 
Then,\footnote{The ``$\stackrel{(a)}{\leq}$'' part in \eqref{eq:ScaleDown:Phase 1 two} is not equality because there can be an edge in $\eneg(H)\cap P$ that is not in $\eneg(\HB)\cap P.$ } 
\begin{align}
    \dist_G(u, v) &\leq w_{H}(P) \nonumber
    \stackrel{(a)}{\leq} w_{H^{B}}(P)-|\eneg(\HB)\cap P|\cdot B\nonumber\\
    %
    &\stackrel{(b)}{=} w_{H^{B}}(P)-\eta_{H^{B}}(v)\cdot B
    \stackrel{(c)}{\leq} -\eta_{H^{B}}(v)\cdot B.\label{eq:ScaleDown:Phase 1 two}
\end{align} 

Note that $u$ and $v$ are in the same SCC $V_i$;\footnote{in fact all vertices in $P$ are in the same SCC $V_i$, because we define $H=\bigcup_i G[V_i]$.} thus, by \Cref{thm:ScaleDown:Weak Diameter}: 
\begin{align}
\dist_G(v, u)\leq dB. \label{eq:ScaleDown:Phase 1 one}
\end{align}

If $G$ contains no negative-weight cycle, then $\dist_G(u,v)+\dist_G(v,u)\geq 0$ and thus $\eta_{H^{B}}(v)\leq dB\cdot (1/B)=d$ by \Cref{eq:ScaleDown:Phase 1 one,eq:ScaleDown:Phase 1 two}. Since this holds for every $v\in V$, \Cref{thm:phase 1 works} follows. 
\end{proof}

Consequently, $\ScaleDown$ (\Cref{thm:scaledown}) is guaranteed to output $\phi_1$ as follows.
\begin{corollary}\label{thm:ScaleDown:Phase 1 conclusion}
If $G$ has no negative-weight cycle, then all edges in $\GB_{\phi_1}[V_i]$ are non-negative for every $i$. 
\end{corollary}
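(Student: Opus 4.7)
The plan is to derive the corollary almost immediately from the recursive call $\phi_1\gets \ScaleDown(H,\Delta/2,B)$ on \Cref{line:ScaleDown:Recursion}, once I verify the call satisfies the input preconditions and I invoke the strong form of its output guarantee.

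First I would check the three input requirements of $\ScaleDown$ listed in \Cref{item:ScaleDown input} of \Cref{thm:scaledown} for the recursive call. Since $H=\bigcup_i G[V_i]$ is a subgraph of $G$, the weight function restricted to $H$ remains integral with $w(e)\geq -2B$, the scale $B$ is unchanged, and every vertex in $H$ inherits its constant out-degree from $G$. The only nontrivial requirement is \Cref{prop:scaledown:Delta}, namely $\eta(\HB)\leq \Delta/2$, which is exactly the content of the just-proved \Cref{thm:phase 1 works} (applicable because $G$, and hence its subgraph $H$, contains no negative-weight cycle).

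Next I would apply the output guarantee of the recursive call. A subtlety here is that \Cref{thm:scaledown} item \ref{item:ScaleDown Output} only states the weaker conclusion $w_{\phi_1}(e)\geq -B$; this alone would yield $\wB_{\phi_1}(e)\geq 0$ only on negative edges of $H$ and not on the non-negative ones (since adding $B$ to originally non-negative edges does not occur under the $\wB$ transformation). However, the proof of \Cref{thm:ScaleDown:main output correctness} actually establishes the stronger invariant $\wB_{\phi_1}(e)\geq 0$ for every edge of the input graph, coming from the Phase~3 application of $\SPaverage$ to $(\GB_s)_{\phi_2}$ which eliminates all remaining negative weights in $\GB$ itself. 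Applying this stronger invariant to the recursive call yields $\wB_{\phi_1}(e)\geq 0$ for every $e\in E(H)=\bigcup_i E(G[V_i])$.

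Finally, restricting to any fixed $V_i$, the edges of $\GB_{\phi_1}[V_i]$ coincide with the edges of $G[V_i]$ equipped with weights $\wB_{\phi_1}$, so the previous inequality immediately gives that every such edge is non-negative, as claimed. I do not foresee any real obstacle: the main work, namely the bound $\eta(\HB)\leq \Delta/2$, has already been done in \Cref{thm:phase 1 works}, and the only thing to be careful about is citing the stronger $\wB$-level invariant from the proof of \Cref{thm:ScaleDown:main output correctness} rather than the weaker $w$-level statement in \Cref{thm:scaledown}.
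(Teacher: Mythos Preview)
Your proof is correct and follows the same route as the paper: verify the input preconditions of the recursive call $\ScaleDown(H,\Delta/2,B)$ via \Cref{thm:phase 1 works}, then read off the output guarantee. You have in fact been more careful than the paper, which simply writes ``Consequently'' before the corollary. Your observation that the literal output clause of \Cref{thm:scaledown} (namely $w_{\phi_1}(e)\geq -B$) does not immediately give $w^B_{\phi_1}(e)\geq 0$ on originally non-negative edges is a genuine subtlety, and your fix---invoking the stronger invariant $w^B_{\phi_3}(e)\geq 0$ established in the proof of \Cref{thm:ScaleDown:main output correctness} (and recorded in the algorithm comment)---is exactly the right resolution.
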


\paragraph{Phase 2:  Make all edges in $\GB \setminus \esep$ non-negative.} Now that all edges in $\GB_{\phi_1}[V_i]$ are non-negative, we turn to the remaining edges in $\GB\setminus \esep$.
Intuitively, since these remaining edges (i.e. those not in the SCCs) form a directed acyclic graph (DAG), calling $\FixAlmostDag(\GB_{\phi_1} \setminus \esep, \{V_1, V_2, \ldots \})$ (\Cref{lem:almostDag}) in Phase 2 produces the following result.

\begin{lemma}\label{thm:ScaleDown:Phase 2 conclusion}
If $G$ has no negative-weight cycle, all weights in $\GB_{\phi_2} \setminus \esep$ are non-negative. 
\end{lemma}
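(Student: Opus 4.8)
The plan is to invoke \Cref{lem:almostDag} with the graph $\GB_{\phi_1} \setminus \esep$ and the partition $\{V_1, V_2, \ldots\}$, and verify that the two input hypotheses of that lemma are met so that its conclusion gives exactly what we want. First I would check hypothesis (1): that each induced subgraph $(\GB_{\phi_1} \setminus \esep)[V_i]$ has no negative-weight edges. But note that removing $\esep$ does not remove any edges \emph{inside} an SCC $V_i$ (since $V_1, V_2, \ldots$ are defined as the SCCs of $\GB \setminus \esep$, all edges within a $V_i$ survive), so $(\GB_{\phi_1} \setminus \esep)[V_i] = \GB_{\phi_1}[V_i]$, and by \Cref{thm:ScaleDown:Phase 1 conclusion} (using that $G$ has no negative-weight cycle) all these edges are non-negative. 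Hypothesis (1) holds. Next I would check hypothesis (2): that contracting each $V_i$ to a single node turns $\GB_{\phi_1} \setminus \esep$ into a DAG. This is immediate from the definition of the $V_i$: they are precisely the strongly connected components of $\GB \setminus \esep$, and the condensation (contraction of SCCs) of any directed graph is acyclic; applying a price function $\phi_1$ does not change the edge set, so the condensation of $\GB_{\phi_1} \setminus \esep$ is the same acyclic graph.

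Having verified both hypotheses, \Cref{lem:almostDag} guarantees that the price function $\psi$ returned by $\FixAlmostDag(\GB_{\phi_1} \setminus \esep, \{V_1, V_2, \ldots\})$ satisfies $(w^B_{\phi_1})_\psi(u,v) \geq 0$ for every edge $(u,v)$ of $\GB_{\phi_1} \setminus \esep$. Since $\phi_2 = \phi_1 + \psi$ and, by \Cref{def:prelim:price}, $(G_{\phi_1})_\psi = G_{\phi_1 + \psi} = G_{\phi_2}$, this says exactly that every weight in $\GB_{\phi_2} \setminus \esep$ is non-negative, which is the claim.

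There is no real obstacle here; the only point requiring a moment's care is the observation that $\esep$ contains no edge interior to any $V_i$, so that Phase 1's guarantee on $\GB_{\phi_1}[V_i]$ transfers verbatim to $(\GB_{\phi_1} \setminus \esep)[V_i]$, and dually that the vertex set partition passed to $\FixAlmostDag$ is the SCC partition of the graph it is actually run on (up to the price function, which is irrelevant for the SCC structure and for acyclicity of the condensation). Once these two bookkeeping facts are noted, the lemma follows directly from \Cref{lem:almostDag}.

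\begin{proof}
Since $V_1, V_2, \ldots$ are the strongly connected components of $\GB \setminus \esep$, every edge of $\GB$ with both endpoints in a single $V_i$ lies in $\GB \setminus \esep$ (such an edge cannot be in $\esep$, as otherwise the two endpoints would not be strongly connected in $\GB \setminus \esep$). Hence $(\GB_{\phi_1} \setminus \esep)[V_i] = \GB_{\phi_1}[V_i]$, and by \Cref{thm:ScaleDown:Phase 1 conclusion} all edges of this subgraph are non-negative; this verifies input hypothesis (1) of \Cref{lem:almostDag}. For hypothesis (2), contracting each $V_i$ in $\GB \setminus \esep$ yields the condensation of $\GB \setminus \esep$, which is acyclic because the condensation of any directed graph is a DAG; applying the price function $\phi_1$ does not alter the edge set (\Cref{def:prelim:price}), so the same holds for $\GB_{\phi_1} \setminus \esep$. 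Thus both hypotheses of \Cref{lem:almostDag} are satisfied, and the returned price function $\psi$ satisfies $(w^B_{\phi_1})_\psi(u,v) \geq 0$ for all edges $(u,v)$ of $\GB_{\phi_1} \setminus \esep$. Since $\phi_2 = \phi_1 + \psi$ and $(\GB_{\phi_1})_\psi = \GB_{\phi_1 + \psi} = \GB_{\phi_2}$, every weight in $\GB_{\phi_2} \setminus \esep$ is non-negative.
\end{proof}
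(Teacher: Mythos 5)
Your proof takes exactly the same route as the paper: verify the two hypotheses of \Cref{lem:almostDag} (using \Cref{thm:ScaleDown:Phase 1 conclusion} for hypothesis (1) and the standard acyclicity of the SCC condensation for hypothesis (2)), then read off the conclusion. One small caveat: the parenthetical you give for the equality $(\GB_{\phi_1}\setminus\esep)[V_i]=\GB_{\phi_1}[V_i]$ — ``such an edge cannot be in $\esep$, as otherwise the two endpoints would not be strongly connected in $\GB\setminus\esep$'' — is a non-sequitur, since deleting an edge from a graph does not prevent its endpoints from being strongly connected via other paths. Fortunately you do not need that equality at all: $(\GB_{\phi_1}\setminus\esep)[V_i]$ is a subgraph of $\GB_{\phi_1}[V_i]$, and non-negativity of all edges of the latter (from \Cref{thm:ScaleDown:Phase 1 conclusion}) already gives non-negativity of all edges of the former, which is all that hypothesis (1) requires. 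With that justification replaced, the proof is correct and matches the paper's.
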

\begin{proof}
Clearly, $\GB_{\phi_1}\setminus \esep$ and $\{V_1, V_2, \ldots \}$ satisfy the input conditions of \Cref{lem:almostDag}, i.e. (1) $(\GB \setminus \esep)_{\phi_1}[V_i]$ contains no negative-weight edges for every $i$ (this is due to \Cref{thm:ScaleDown:Phase 1 conclusion}), and (2) when we contract every $V_i$ into a node, the resulting graph is a DAG (since the $V_i$ are precisely the (maximal) SCCs of $\GB_{\phi_1}\setminus \esep$).\footnote{See, e.g., Lemma 22.13 in \cite{CLRS_book_3rd}.}
Thus, $\FixAlmostDag((\GB \setminus \esep)_{\phi_1}, \{V_1, V_2, \ldots \})$  returns $\psi$ such that $(\GB_{\phi_1} \setminus \esep)_\psi=\GB_{\phi_2} \setminus \esep$ contains no negative-weight edges. 
\end{proof}

\paragraph{Phase 3 runtime.}
Now we are ready to prove \Cref{thm:ScaleDown:main runtime}, i.e. the runtime bound of $\SPaverage((\GB_s)_{\phi_2}, s)$ in Phase 3 when $G$ contains no negative-weight cycle. We start by clarifying the nature of the graph $(\GB_s)_{\phi_2}$. We start with the graph $\GB$; we then get $\GB_s$ by adding a dummy source with outgoing edges of weight $0$ to every $v \in V$; finally we get $(\GB_s)_{\phi_2}$ by applying price function $\phi_2$ to $\GB_s$, where we define $\phi_2(s) = 0$.  The reason we apply $\phi_2$ at the end is to ensure that $\GB_s$ and $(\GB_s)_{\phi_2}$ are equivalent. Note that after we apply $\phi_2$, the edges incident to $s$ can have non-zero weight (positive or negative); but $s$ can still reach every vertex, so we can still call $\SPaverage$ with $s$ as the source.

Recall (\Cref{lem:spaverage,def:eta}) that the runtime of  $\SPaverage((\GB_s)_{\phi_2}, s)$  is
\begin{align*}
    O\left(\left(m+\sum_{v\in V}\eta_{(\GB_s)_{\phi_2}}(v; s)\right)\log m\right) 
\end{align*}
Fix any $v\in V$, observe that
\begin{align*}
\eta_{(\GB_s)_{\phi_2}}(v; s) &= \min\{|P \cap \eneg((\GB_s)_{\phi_2})| : \mbox{$P$ is a shortest $sv$-path in $(\GB_s)_{\phi_2}$}\} & \mbox{(\Cref{def:eta})}\\
&\leq |P_{\GB}(v)\cap \eneg((\GB_s)_{\phi_2})| 
\end{align*}
where the inequality is because, for any price function $\phi_2$, $P_{\GB}(v)$ is a shortest $sv$-path in $(\GB_s)_{\phi_2}$ (because $(\GB_s)_{\phi_2}$ and $\GB_s$ are equivalent and $P_{\GB}(v)$ is a shortest $sv$-path in $\GB_s$ by definition). 
By \Cref{thm:ScaleDown:Phase 2 conclusion}, all negative-weight edges in $\GB_{\phi_2}$ are in $\esep$, i.e. $\eneg(\GB_{\phi_2})\subseteq \esep$; so, 
\[
\eta_{(\GB_s)_{\phi_2}}(v; s)
\leq |P_{\GB}(v)\cap \esep| +1
\]
where the ``+1'' term is because the edge incident to $s$ in $P_{\GB}(v)$ can also be negative.
By \Cref{thm:ScaleDown:expected esep} and the fact that $\eta(\GB)\leq \Delta$ (input requirement \Cref{item:ScaleDown input b} in \Cref{thm:scaledown}),\footnote{The expectation in (\ref{eq:Phase 3 expectation}) is over the random outcomes of the low-diameter decomposition in Phase 0 and the recursion in Phase 1.
Note that both $\eta_{(\GB_s)_{\phi_2}}(v;s)$ and $|P_{\GB}(v)\cap \esep|$ are random variables.
Since we always have $\eta_{(\GB_s)_{\phi_2}}(v;s) \leq |P_{\GB}(v)\cap \esep|+1$, we also have  $E\left[\eta_{(\GB_s)_{\phi_2}}(v;s)\right] \leq E\left[\left|P_{\GB}(v)\cap \esep\right|\right]+1$.} 
\begin{align}
E\left[\eta_{(\GB_s)_{\phi_2}}(v;s)\right] \leq
E\left[\left|P_{\GB}(v)\cap \esep\right| \right]+1=  O(\log^2 m). \label{eq:Phase 3 expectation}    
\end{align}
(The last equality is by \Cref{thm:ScaleDown:expected esep}.) Thus, the expected runtime of $\SPaverage((\GB_s)_{\phi_2}, s)$ is
\begin{align*}
O\left(\left(m+E\left[\sum_{v\in V}\eta_{(\GB_s)_{\phi_2}}(v;s)\right]\right)\log m\right)
& =O\left(m \log^3 m\right). 
\end{align*}

\section{Algorithm $\SPmain$ (\Cref{thm:spmain})}\label{sec:SPmain}

In this section we present algorithm $\SPmain(\Gin,\sin)$ (\Cref{thm:spmain}), which always runs on the main input graph $\Gin$ and source $\sin$.

\paragraph{Description of Algorithm $\SPmain(\Gin,\sin)$.}
See Algorithm \ref{alg:SPmain} for pseudocode. Recall that if $\Gin$ contains a negative-weight cycle, then the algorithm is not supposed to terminate; for intuition, we recommend the reader focus on the case where $\Gin$ contains no negative-weight cycle.

The algorithm first creates an equivalent graph $\Gbar$ by scaling up edge weights by $2n$ (Line \ref{line:SPMain:what}), and also rounds $B$ (Line \ref{line:SPMain:power}), all to ensure that everything remains integral. It then repeatedly calls $\ScaleDown$ until we have a price function $\phi_t$ such that $w_{\phi_t}(e) \geq -1$ (See for loop in Line \ref{line:SPMain:ForLoop}). The algorithm then defines a graph $\Gstar = (V,E,\wstar)$ with $\wstar(e) = w_{\phi_t}(e) + 1$ (Line \ref{line:SPMain:gstar}). In the analysis, we will argue that because we are dealing with the scaled graph $\Gbar$, the additive $+1$ is insignificant and does not affect the shortest path structure (Claim \ref{claim:spmain:output}), so running Dijkstra on $\Gstar$ will return correct shortest paths in $G$ (Lines \ref{line:SPMain:callDijkstra} and \ref{line:SPMain:Dijkstragstar}). 

\begin{algorithm2e}[h]
	\caption{Algorithm for $\SPmain(\gin = (V,E,\win),\sin)$}	\label{alg:SPmain}

$\wbar(e) \gets \win(e) \cdot 2n$ for all $e \in E$,  $\Gbar \gets (V,E,\wbar)$, $B \gets 2n$.\label{line:SPMain:what} \tcp*[f]{scale up edge weights} 

Round $B$ up to nearest power of 2\label{line:SPMain:power}\tcp*[f]{still have $\wbar(e) \geq -B$ for all $e \in E$} 

$\phi_0(v) = 0$ for all $v \in V$ \tcp*[f]{identity price function}

\For(\label{line:SPMain:ForLoop}){$i = 1$ to $t:=\log_2(B)$}{

$\psi_i \gets \ScaleDown(\Gbar_{\phi_{i-1}},\Delta := n,B/2^{i})$ \label{line:SPMain:callScaleDown} 

$\phi_i \gets \phi_{i-1}+\psi_i$ 
\tcp*[f]{(Claim \ref{claim:spmain:price}) $w_{\phi_i}(e)\geq -B/2^{i}$ for all $e\in E$}
\label{line:SPMain:endForLoop} 

}

$\Gstar \gets (V,E,\wstar)$ where $\wstar(e) \gets \wbar_{\phi_t}(e) + 1$ 
for all $e \in E$. \label{line:SPMain:gstar}

\tcp*[h]{Observe: $\Gstar$ in above line has non-negative weights}

Compute a shortest path tree $T$ from $s$ using $\Dijkstra(\Gstar, s)$ (\Cref{thm:prelim:Dijkstra}) \label{line:SPMain:callDijkstra} 

\tcp*[h]{(\Cref{claim:spmain:output}) Will Show: any shortest path in $\Gstar$ is also shortest in $G$} 

\Return shortest path tree $T$ \label{line:SPMain:Dijkstragstar}. 

\end{algorithm2e}

\paragraph{Correctness}
We focus on the case where the algorithm terminates, and hence every line is executed. First we argue that weights in $\Gstar$ (Line \ref{line:SPMain:gstar}) are non-negative. 

\begin{claim}
\label{claim:spmain:price}
If the algorithm terminates, then for all $e \in E$ and $i \in [0,t:=\log_2(B)]$ we have that $\wbar_i$ is integral and that $\wbar_i(e) \geq -B/2^i$ for all $e \in E$. Note that this implies that $\wbar_t(e) \geq -1$ for all $e \in E$, and so the graph $\Gstar$ has non-negative weights.
\end{claim}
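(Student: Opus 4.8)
The plan is to prove Claim~\ref{claim:spmain:price} by induction on $i$, tracking two invariants simultaneously: integrality of $\wbar_{\phi_i}$ and the weight lower bound $\wbar_{\phi_i}(e) \geq -B/2^i$. The base case $i = 0$ is immediate: $\phi_0 \equiv 0$ is integral, $\wbar = \win \cdot 2n$ is integral, and by Line~\ref{line:SPMain:power} we have $\wbar(e) \geq -B$ for all $e \in E$, which is exactly the $i=0$ case of the bound.

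\medskip\noindent\textbf{Inductive step.} Assume the claim holds for $i-1$, so $\wbar_{\phi_{i-1}}$ is integral and $\wbar_{\phi_{i-1}}(e) \geq -B/2^{i-1} = -2(B/2^i)$ for all $e \in E$. I want to apply Theorem~\ref{thm:scaledown} ($\ScaleDown$) to the call on Line~\ref{line:SPMain:callScaleDown}, namely $\ScaleDown(\Gbar_{\phi_{i-1}}, \Delta := n, B/2^i)$. First I would check the three input requirements of Theorem~\ref{thm:scaledown}, item~\ref{item:ScaleDown input}, with the graph $G = \Gbar_{\phi_{i-1}}$ and parameter $B' := B/2^i$:
\begin{itemize}[noitemsep]
    \item Requirement~\ref{prop:scaledown:weight}: $B' = B/2^i$ is a positive integer (since $B$ is a power of $2$ and $i \leq \log_2 B$), $\wbar_{\phi_{i-1}}$ is integral by the inductive hypothesis, and $\wbar_{\phi_{i-1}}(e) \geq -2B'$ for all $e$ — again by the inductive hypothesis.
    \item Requirement~\ref{prop:scaledown:Delta}: We need $\eta(G^{B'}) \leq \Delta = n$ whenever $G = \Gbar_{\phi_{i-1}}$ has no negative-weight cycle. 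This holds because $\Gbar_{\phi_{i-1}}$ is equivalent to $\Gbar$ (Lemma~\ref{lem:price-equivalent}), hence has no negative-weight cycle when $\Gin$ does not; and any shortest path in an $n$-vertex graph with no negative-weight cycle can be taken to be simple, so it has at most $n-1 < n$ edges total, hence at most $n$ negative edges. (The dummy-source edge contributes at most one more, but a simple $sv$-path still has at most $n$ edges; I'd state this carefully.)
    \item Requirement~\ref{prop:scaledown:degree}: every vertex of $\Gbar_{\phi_{i-1}}$ has constant out-degree, since $\Gbar$ has the same edge set as $\Gin$, which satisfies Assumption~\ref{assum:main}(2), and applying a price function does not change the edge set.
\end{itemize}
Then the output guarantee (item~\ref{item:ScaleDown Output}) says: if the call terminates, it returns an integral price function $\psi_i$ with $(\wbar_{\phi_{i-1}})_{\psi_i}(e) \geq -B' = -B/2^i$ for all $e$. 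Since $\phi_i = \phi_{i-1} + \psi_i$ and $(\Gbar_{\phi_{i-1}})_{\psi_i} = \Gbar_{\phi_{i-1}+\psi_i} = \Gbar_{\phi_i}$ (Definition~\ref{def:prelim:price}), this gives $\wbar_{\phi_i}(e) \geq -B/2^i$. Integrality of $\phi_i$ follows from integrality of $\phi_{i-1}$ (inductive hypothesis) and of $\psi_i$ (output of $\ScaleDown$), completing the induction.

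\medskip\noindent\textbf{Conclusion.} At $i = t = \log_2 B$ we get $\wbar_{\phi_t}(e) \geq -B/2^t = -1$ for all $e$, and $\wbar_{\phi_t}$ is integral. Therefore $\wstar(e) = \wbar_{\phi_t}(e) + 1 \geq 0$ for all $e$, so $\Gstar$ has non-negative (integral) weights, as claimed.

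\medskip\noindent\textbf{Main obstacle.} The inductive mechanics are routine; the one place requiring genuine care is verifying Requirement~\ref{prop:scaledown:Delta} of $\ScaleDown$ — specifically, that $\eta(\Gbar_{\phi_{i-1}}^{B'}) \leq n$. The subtlety is that $\eta$ is defined via shortest paths in the graph $(\Gbar_{\phi_{i-1}}^{B'})_s$ with the dummy source, and one must argue that a shortest $sv$-path can be chosen to be simple (no repeated vertices), so that it uses at most $n$ edges and hence at most $n$ negative edges. This requires knowing $(\Gbar_{\phi_{i-1}})^{B'}$ has no negative-weight cycle reachable from $s$ — which needs an argument since adding $B'$ to negative edges could in principle create issues, but in fact adding a positive constant to negative-weight edges can only increase cycle weights, so $(\Gbar_{\phi_{i-1}})^{B'}$ has no negative-weight cycle whenever $\Gbar_{\phi_{i-1}}$ (equivalently $\Gin$) has none. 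I would spell this monotonicity point out explicitly.
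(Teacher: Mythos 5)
Your proof is correct and follows the paper's argument exactly: induction on $i$, base case via Assumption~\ref{assum:main} and Lines~\ref{line:SPMain:what}--\ref{line:SPMain:power}, and inductive step by verifying the three input requirements of Theorem~\ref{thm:scaledown} and invoking its output guarantee together with $(\Gbar_{\phi_{i-1}})_{\psi_i} = \Gbar_{\phi_i}$. Your ``main obstacle'' discussion — that one must first argue $(\Gbar_{\phi_{i-1}})^{B'}$ has no negative-weight cycle (via the monotonicity of adding a positive constant to negative edges) before invoking the simple-path bound $\eta \leq n$ — is a valid and slightly more careful unpacking of a step the paper states tersely as ``$\eta_G(v) \leq n$ for any graph $G$ with no negative-weight cycle,'' but it is the same argument, not a different route.
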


\begin{proof}
We prove the claim by induction on $i$. For base case $i = 0$, the claim holds for $\Gbar_{\phi_0} = \Gbar$ because $\win(e) \geq -1$ (see Assumption \ref{assum:main}), so $\wbar(e) \geq -2n \geq -B$ (see Lines \ref{line:SPMain:what} and \ref{line:SPMain:power}). 

Now assume by induction that the claim holds for $\Gbar_{\phi_{i-1}}$. The call to $\ScaleDown(\Gbar_{\phi_{i-1}},\Delta := n,B/2^{i})$ in Line \ref{line:SPMain:callScaleDown} satisfies the necessary input properties (See Theorem \ref{thm:scaledown}): property \ref{prop:scaledown:weight} holds by the induction hypotheses; property \ref{prop:scaledown:Delta} holds because we have $\eta_G(v) \leq n$ for any graph $G$ with no negative-weight cycle; property \ref{prop:scaledown:degree} holds because $\Gin$ has constant out-degree, and the algorithm never changes the graph topology. Thus, by the output guarantee of $\ScaleDown$ we have that $(\wbar_{\phi_{i-1}})_{\psi_i}(e) \geq (B/2^{i-1})/2 = B/2^{i}$. The claim follows because as noted in Definition \ref{def:prelim:price}, $(\wbar_{\phi_{i-1}})_{\psi_i} = \wbar_{\phi_{i-1} + \psi_i} = \wbar_{\phi_i}$.
\end{proof}

\begin{corollary}
\label{cor:spmain:terminate}
If $\Gin$ contains a negative-weight cycle then the algorithm does not terminate.
\end{corollary}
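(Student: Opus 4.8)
The plan is to argue by contradiction: suppose $\Gin$ contains a negative-weight cycle but $\SPmain$ terminates. Since $\Gbar$ is obtained from $\Gin$ by scaling all edge weights by the positive constant $2n$, Lemma~\ref{lem:price-equivalent} tells us $\Gbar$ is equivalent to $\Gin$, so $\Gbar$ also contains a negative-weight cycle $C$; by Lemma~\ref{lem:price-equivalent} again, $\wbar_{\phi_t}(C) = \wbar(C) < 0$ for the price function $\phi_t$ computed by the algorithm (price functions preserve cycle weights).

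Next I would invoke Claim~\ref{claim:spmain:price}: if the algorithm terminates then every line executes, so in particular the \texttt{for} loop completes, and the claim gives $\wbar_{\phi_t}(e) \geq -1$ for every $e \in E$. Hence $\wstar(e) = \wbar_{\phi_t}(e) + 1 \geq 0$ for all $e$, i.e.\ $\Gstar$ has only non-negative edge weights. But then every cycle in $\Gstar$ has non-negative weight, and in particular $\wstar(C) \geq 0$. On the other hand, $\wstar(C) = \wbar_{\phi_t}(C) + |C| \cdot 1 = \wbar(C) + |C|$. Here is the one quantitative point to check: the scaling by $2n$ is exactly what makes the rounding harmless. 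Since $C$ is a simple cycle it has at most $n$ edges, so $|C| \le n$; and since $\wbar(C) = 2n \cdot \win(C)$ with $\win(C)$ a negative integer, we get $\wbar(C) \leq -2n$. Therefore $\wstar(C) = \wbar(C) + |C| \leq -2n + n = -n < 0$, contradicting $\wstar(C) \geq 0$.

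The only mildly delicate step is the last inequality, i.e.\ verifying that the additive $+1$ per edge introduced in Line~\ref{line:SPMain:gstar} cannot cancel out the negativity of the scaled cycle; this is immediate from $|C| \le n$ together with the fact that scaling by $2n$ forces $\wbar(C) \leq -2n$, so there is comfortable slack. (This is the same reasoning that underlies Claim~\ref{claim:spmain:output} on the shortest-path side.) Everything else is a direct application of Claim~\ref{claim:spmain:price} and the cycle-preservation part of Lemma~\ref{lem:price-equivalent}, so I expect no real obstacle here.
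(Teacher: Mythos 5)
Your proof is correct and uses essentially the same ingredients as the paper's: Claim~\ref{claim:spmain:price}, the fact that $\wbar(C) \leq -2n$ because $\Gbar$ weights are multiples of $2n$, the bound $|C| \leq n$, and the cycle-preservation part of Lemma~\ref{lem:price-equivalent}. The only cosmetic difference is that you route the contradiction through $\Gstar$ (showing $\wstar(C) \geq 0$ yet $\wstar(C) \leq -n$), whereas the paper compares $\wbar(C) \leq -2n$ directly against $\wbar_{\phi_t}(C) \geq -|C| \geq -n$; these are the same inequality shifted by $|C|$.
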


\begin{proof}
Say, for contradiction, that the algorithm terminates; then by \Cref{claim:spmain:price} we have that $\wbar_{\phi_t}(e) \geq -1$. Now, let $C$ be any negative-weight cycle in $\Gin$. Since all weights in $\Gbar$ are multiples of $2n$ (Line \ref{line:SPMain:what}), we know that $\wbar(C) \leq -2n$. But we also know that $\wbar_{\phi_t}(C) \geq -|C| \geq -n$. So $\wbar(C) \neq \wbar_{\phi_t}(C)$, which contradicts Lemma \ref{lem:price-equivalent}.
\end{proof}

Now we show that the algorithm produces a correct output.

\begin{claim}
\label{claim:spmain:output}
Say that $\Gin$ contains no negative-weight cycle. Then, the algorithm terminates, and if $P$ is a shortest sv-path in $\Gstar$ (Line \ref{line:SPMain:gstar}) then it is also a shortest sv-path in $\gin$. Thus, the shortest path tree $T$ of $\Gstar$ computed in Line \ref{line:SPMain:Dijkstragstar} is also a shortest path tree in $\Gin$.
\end{claim}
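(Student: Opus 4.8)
The plan is to first verify termination, which is immediate: by \Cref{cor:spmain:terminate} the algorithm fails to terminate only if $\Gin$ contains a negative-weight cycle, so under our hypothesis every line is executed; in particular \Cref{claim:spmain:price} applies and $\Gstar$ has non-negative edge weights, so the call to $\Dijkstra(\Gstar,s)$ in Line \ref{line:SPMain:callDijkstra} is legitimate and returns a genuine shortest path tree of $\Gstar$. It therefore remains to show that shortest paths in $\Gstar$ coincide with shortest paths in $\Gin$.

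The key idea is that the three transformations taking $\Gin$ to $\Gstar$ are all ``essentially equivalence-preserving''. First, scaling weights by $2n$ to get $\Gbar$ preserves shortest paths exactly (\Cref{lem:price-equivalent}, last sentence). Second, applying the price function $\phi_t$ to get $\Gbar_{\phi_t}$ preserves shortest paths exactly (\Cref{lem:price-equivalent}). Third — and this is the only non-trivial step — adding $+1$ to every edge weight to form $\wstar(e) = \wbar_{\phi_t}(e)+1$ changes the weight of a path on $k$ edges by exactly $+k$; since $\Gin$ is simple-ish and any simple path has at most $n-1$ edges, two paths between the same pair of endpoints have weights that change by at most $n-1$ under this operation, whereas in $\Gbar$ (hence in $\Gbar_{\phi_t}$, by \Cref{lem:price-equivalent}, since price functions do not change path-weight differences between fixed endpoints) any two such path weights differ by a multiple of $2n$ unless they are equal. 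Hence the $+1$ perturbation cannot change which $sv$-path is shortest: if $P$ is a shortest $sv$-path in $\Gstar$ and $Q$ is any $sv$-path, then $\wstar(P) \le \wstar(Q)$ gives $\wbar_{\phi_t}(P) - \wbar_{\phi_t}(Q) \le |Q| - |P| \le n-1 < 2n$, and since the left-hand side is a multiple of $2n$ it must be $\le 0$, so $P$ is also shortest in $\Gbar_{\phi_t}$, hence in $\Gbar$, hence in $\Gin$.

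I expect the main (minor) obstacle to be bookkeeping about path lengths: I need the bound $|P|, |Q| \le n-1$, i.e. that it suffices to consider simple paths, which is fine since $\Gstar$ has non-negative weights so shortest paths can be taken simple, and for $Q$ I can likewise restrict to simple paths without increasing its weight. I also need to make sure the reduction from ``shortest $sv$-path'' to ``shortest path tree $T$'' is handled: once every shortest $sv$-path in $\Gstar$ is a shortest $sv$-path in $\Gin$, the tree $T$ returned by Dijkstra — whose root-to-$v$ paths are shortest $sv$-paths in $\Gstar$ — is automatically a valid shortest path tree in $\Gin$, using also that $\dist_{\Gin}(s,v)$ is finite for all $v$ (no negative cycle, and $s=\sin$ reaches all of $V$, or we argue via the dummy-source reachability already built into $\Gin$). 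I would state this wrap-up in one or two sentences after the main inequality.
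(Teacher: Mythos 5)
Your main argument — that shortest paths in $\Gstar$ remain shortest in $\Gbar_{\phi_t}$ because the $+1$ perturbation shifts any $sv$-path weight by at most $n-1$, while in $\Gbar_{\phi_t}$ any two $sv$-path weights differ by a multiple of $2n$ — is exactly the paper's argument (the paper phrases it as a contradiction with strict inequalities, you state it directly; the content is identical). The bookkeeping about simple paths and the wrap-up from shortest $sv$-paths to a shortest path tree are also fine.

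There is, however, a genuine gap in your termination step. You write that ``by \Cref{cor:spmain:terminate} the algorithm fails to terminate only if $\Gin$ contains a negative-weight cycle.'' That is the \emph{converse} of \Cref{cor:spmain:terminate}, not the corollary itself. \Cref{cor:spmain:terminate} states: if $\Gin$ has a negative-weight cycle then the algorithm does not terminate. Its contrapositive is that termination implies no negative cycle; it does \emph{not} say that the absence of a negative cycle forces termination. So you cannot conclude termination from it. The paper instead argues termination directly: each call $\ScaleDown(\Gbar_{\phi_{i-1}},\Delta:=n,B/2^i)$ terminates (in expectation, hence almost surely) by \Cref{thm:scaledown}, because $\Gbar_{\phi_{i-1}}$ is equivalent to $\Gin$ via \Cref{lem:price-equivalent} and therefore contains no negative-weight cycle; there are only $\log_2 B = O(\log n)$ such calls plus one Dijkstra, so the algorithm terminates. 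You should replace your citation of \Cref{cor:spmain:terminate} with this argument, after which the rest of your proof goes through.
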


\begin{proof}
The algorithm terminates because each call to  $\ScaleDown(\Gbar_{\phi_{i-1}},\Delta := n,B/2^{i})$ terminates, because $\Gbar_{\phi_{i-1}}$ is equivalent to $\Gin$ (Lemma \ref{lem:price-equivalent}) and so does not contain a negative-weight cycle.
Now we show that 
\begin{align}
\mbox{$P$ is also a shortest path in $\Gbar_{\phi_t}$}\label{eq:SPMain goal}    
\end{align}
which implies the claim because $\Gbar_{\phi_t}$ and $\Gin$ are equivalent. 
We assume that $s\neq v$ because otherwise the claim is trivial. 
Observe that since all weights in $\Gbar$ are multiples of $2n$ (Line \ref{line:SPMain:what}), all shortest distances are also multiples of $2n$, so for any two $sv$-paths $P_{sv}$ and $P'_{sv}$, $|\wbar(P_{sv}) - \wbar(P'_{sv})|$ is either $0$ or $> n$. It is easy to check that by Lemma \ref{lem:price-equivalent}, we also have that
\begin{align}
\mbox{$|\wbar_{\phi_t}(P_{sv}) - \wbar_{\phi_t}(P'_{sv})|$ is either $0$ or $> n$.} \label{eq:SPmain a}
\end{align}
Moreover, by definition of $\Gstar$ we have 
\begin{align}
    \wbar_{\phi_t}(P_{sv}) < \wstar(P_{sv}) = \wbar_{\phi_t}(P_{sv}) + |P_{sv}| < \wbar_{\phi_t}(P_{sv}) + n.\label{eq:SPmain b}
\end{align}
Now, we prove \eqref{eq:SPMain goal} . Assume for contradiction that there was a shorter path $P'$ in $\Gbar_{\phi_t}$. Then,
\begin{align}
    \wbar_{\phi_t}(P) - \wbar_{\phi_t}(P') \stackrel{\eqref{eq:SPmain a}}{>}n.\label{eq:SPmain c}
\end{align}
So, 
$\wstar(P') \stackrel{\eqref{eq:SPmain b}}{<} \wbar_{\phi_t}(P') + n \stackrel{\eqref{eq:SPmain c}}{<} \wbar_{\phi_t}(P) \stackrel{\eqref{eq:SPmain b}}{<}\wstar(P),$ contradicting $P$ being shortest in $\Gstar$.
\end{proof}

\paragraph{Running Time Analysis:}
By \Cref{cor:spmain:terminate}, if $\Gin$ does not contain a negative-weight cycle then the algorithm does not terminate, as desired. We now focus on the case where $\Gin$ does not contain a negative-weight cycle. The running time of the algorithm is dominated by the $\log(B) = O(\log(n))$ calls to $\ScaleDown(\Gbar_{\phi_{i-1}},\Delta := n,B/2^{i})$. Note that all the input graphs $\Gbar_{\phi_{i-1}}$ are equivalent to $\Gin$, so they do not contain a negative-weight cycle. By Theorem \ref{thm:scaledown}, the expected runtime of each call to $\ScaleDown$ is $O(m\log^3(n)\log(\Delta)) = O(m\log^4(n))$. So, the expected runtime of $\SPmain$ is $O(m\log^5(n))$.

\ignore{

\begin{lemma}\label{thm:analysis:proper output SPmain 1} 
$\Gbar_{\phi_t}$ contains no negative-weight cycle. 
\end{lemma}
\begin{proof}
Assume for a contradiction that such cycle $C$ exists in $\Gbar_{\phi_t}$. Observe that $\wbar(C)=\wbar_{\phi_t}(C)$ (price functions do not change the weight of cycles).  By \eqref{eq:analysis:small negative}, $\wbar_{\phi_t}(C)\geq -n$. However, since edge weights in $\Gbar$ are multiples of $n$, $\wbar_{\phi_t}(C) = \wbar(C)\leq -2n$, a contradiction. 
\end{proof}

Moreover, \Cref{eq:analysis:small negative} implies that $G^*$ from Line \ref{line:SPMain:gstar} of \Cref{alg:SPmain} contains no negative-weight edges, so calling $\Dijkstra(\Gstar, s)$ in \Cref{line:SPMain:callDijkstra} gives us a shortest path tree $T$ in $\Gstar$ rooted at $s$. The claim below implies $\cA(\Delta)$ and concludes Part 1. 

\begin{lemma}\label{thm:analysis:SPMain correctness}\label{thm:analysis:proper output SPmain 2} 
$\disthat(s,v)=\dist_G(s,v)$ for all $v\in V$; i.e. $\SPmain$ (\Cref{alg:SPmain})
correctly returns distances from $s$ in $G$. 
\end{lemma}
\begin{proof}
Fix any vertex $v$. 
Let $P_v=(u_0=s, u_1, u_2, \ldots, u_k=v)$ be a shortest $sv$-path in $G^*$. Recall the definitions of $\Ghat,\what$ from Line \ref{line:SPMain:what} and $\Gstar,\wstar$ from Line \ref{line:SPMain:gstar}. To prove the lemma, we need to show that $P_v$ is also a shortest path in $G$.

Now define $G'=(V, E, w')$ where  \[w'(e)=\bar{w}(e)+1\]
for all $e\in E$.
Observe that $G^*$ is equivalent to $G'$ (because $w^*(e)=w'_{\phi_t}(e)$ for every edge $e$);  thus
\begin{align*}
   \mbox{\em 
   $P_v$ is a shortest $sv$-path in $G'$.}
\end{align*}
We next claim that 
\begin{align}
   \label{eq:analysis:claim shortest path}
   \mbox{\em $P_v$ is a shortest $sv$-path in $\Gbar$.}
\end{align}
To see this, assume for contradiction that there exists an $sv$-path $Q_v$ in $\Gbar$ such that $\wbar(Q_v)<\wbar(P_v)$. Since every edge weight $\wbar(e)$ is a multiple of $2n$, 
$$\wbar(Q_v)+2n\leq \wbar(P_v).$$
Thus, $w'(Q_v) < \wbar(Q_v)+2n \leq \wbar(P_v)\leq w'(P_v)$, contradicting that fact that $P_v$ is the shortest path in $G'$. This proves \eqref{eq:analysis:claim shortest path}.

Finally, $P_v$ is a shortest $sv$-path in $G$ since multiplying edge weights by a positive number preserves shortest paths. 
This implies that $\disthat(s,v)=\dist_G(s,v)$ as desired.
\end{proof}

} 

\section{Algorithm for Low-Diameter Decomposition} \label{sec:low-diam-full}
In this section, we prove Lemma~\ref{lem:SCCDecomposition} which we restate here for convenience. Through this section, we often shorten LowDiamDecomposition$(G,D)$ to $\LDD(G,D)$.
\SCCLemma*

We start with some basic definitions.

\begin{definition}[balls and boundaries]\label{dfn:ball}
Given a directed graph $G = (V,E)$, a vertex $v \in V$, and a distance-parameter $R$, we define $\outgball(v,R) = \{ u \in V \mid \dist(v,u) \leq R \}$. We define $\boundary{\outgball(v,R)} = \{(x,y) \in E \mid x \in \outgball(v,R) \land y \notin \outgball(v,R)\}$.  Similarly, we define $\ingball(v,R) = \{ u \in V \mid \dist(u,v) \leq R \}$ and we define $\boundary{\ingball(v,R)} = \{ (x,y) \in E \mid y \in \ingball(v,R) \land x \notin \ingball(v,R)\}$. We often use $\stargball$ to denote that a ball can be either an out-ball or an in-ball, i.e., $* \in \{in, out\}$.
\end{definition}

\begin{definition}[Geometric Distribution]
\label{def:geometric}
Consider a coin whose probability of heads is $p\in (0,1]$. The geometric distribution $\Geo(p)$ is the probability distribution of the number $X$ of independent coin tosses until obtaining the first heads. We have $\Pr[X = k] = p(1-p)^{k-1}$ for every $k\in \{1, 2, 3, ...\}$. 
\end{definition}

\begin{remark}
\label{remark:n}
In Lemma \ref{lem:SCCDecomposition} and throughout this section, $n$ always refers to the number of vertices in the main graph $\Gin$, i.e. graph in which we are trying to compute shortest paths. This is to ensure that "high probability" is defined in terms of the number of vertices in $\Gin$, rather than in terms of potentially small auxiliary graphs. Note that whenever our shortest path algorithm executes $\LDD(G,D)$ we always have $|V(G)| \leq n$ (we never add new vertices).
\end{remark}
\begin{algorithm2e}
\caption{Algorithm for $\LDD(G=(V,E),D)$}\label{alg:LDD}

Let $n$ be the global variable in Remark \ref{remark:n} \label{line:ldd-n}

$G_0 \leftarrow G, \esep \leftarrow \emptyset$

\BlankLine
\tcp*[h]{\textcolor{blue}{Phase 1: mark vertices as light or heavy}}

$k \leftarrow c \ln(n)$ for large enough constant $c$

$S \leftarrow \{s_1, \ldots, s_k \}$, where each $s_i$ is a random node in $V$ \tcp*[f]{possible: $s_i = s_j$ for $i \neq j$}

For each $s_i \in S$ compute $\ingball(s_i,D/4)$ and $\outgball(s_i,D/4)$ \label{line:ldd_s-balls}

For each $v \in V$ compute $\ingball(v,D/4) \bigcap S$ and $\outgball(v,D/4) \bigcap S$ using Line \ref{line:ldd_s-balls} \label{line:ldd_v-balls}

\ForEach(\label{line:ldd-marking-loop}){$v \in V$}{

If $|\ingball(v,D/4) \bigcap S| \leq .6 k$, mark $v$ \emph{in-light} \tcp*[f]{whp $|\ingball(v,D/4)| \leq .7|V(G)|$}
    
Else if $|\outgball(v,D/4) \bigcap S| \leq .6k$, mark $v$ \emph{out-light}\tcp*[f]{whp $|\outgball(v,D/4)| \leq .7|V(G)|$}
    
Else mark $v$  \emph{heavy} \tcp*[f]{w.h.p $\ingball(v,D/4) > .5|V(G)|$ and $\outgball(v,D/4) > .5|V(G)|$}
    
}

\BlankLine
\tcp*[h]{\textcolor{blue}{Phase 2: carve out balls until no light vertices remain}}

\While(\label{line:ldd-main-loop}){$G$ contains a node v marked $*$-light for $* \in \{in, out\}$}{

Sample $R_v \sim \Geo(p)$ for $p = \min\{1,80\log_2(n)/D\}$.\label{line:ldd-threshold}

Compute $\stargball(v,R_v)$. \label{line:ldd-ball}

$\eboundary \leftarrow \boundary{\stargball(v,R_v)}$\tcp*[f]{add boundary edges of ball to $\esep$}.\label{line:ldd-boundary}

If $R_v > D/4$ or $|\stargball(v,R_v)| > .7|V(G)|$ then return $\esep \leftarrow E(G)$ and terminate \tcp*[f]{Pr[terminate] $\leq 1/n^{20}$}\label{line:ldd-terminate}

$\erecurse \leftarrow \LDD(G[\stargball(v,R_v)],D)$ \label{line:ldd-recurse}\tcp*[f]{recurse on ball}

$\esep \leftarrow \esep \bigcup \eboundary \bigcup \erecurse$.\label{line:ldd-erem-union}

$G \leftarrow G \setminus \stargball(v,R_v)$ \tcp*[f]{remove ball from $G$} \label{line:ldd-remove}

}

\BlankLine
\tcp*[h]{\textcolor{blue}{Clean Up: check that remaining vertices have small weak diameter in initial input graph $G_0$.}}

Select an arbitrary vertex $v$ in $G$. 

If $\ball_{G_0}^{in}(v,D/2) \nsupseteq V(G)$ or $\ball_{G_0}^{out}(v,D/2) \nsupseteq V(G)$ then return $\esep \leftarrow E(G)$ and terminate \tcp*[f]{Pr[terminate] $\leq 1/n^{20}$. If this does not terminate, then all remaining vertices in $V(G)$ have weak diameter $\leq D$} \label{line:ldd-check-diam}

Return $\esep$

\end{algorithm2e}

\paragraph{The algorithm.}
The pseudocode for our low-diameter decomposition algorithm can be found in \Cref{alg:LDD}. Roughly, in Phase~1 each vertex is marked as either {\em in-light}, {\em out-light}, or {\em heavy}. In Phase~2 we repeatedly ``remove'' balls centered at either in-light or out-light vertices: Let $v$ be any in-light vertex (the process is similar for out-light vertices). Consider a ball  $\ingball(v,R_v)$ with radius $R_v$ selected randomly from the geometric distribution $\Geo(p)$, with $p = \min\{1,80\log(n)/D\}$ (\Cref{line:ldd-threshold,line:ldd-ball})\footnote{Throughout this section, $\log(n)$ always denotes $\log_2(n)$}. We add edges pointing into this ball (i.e. edges in $\boundary{\ingball(v, R_v)}$) to $\eboundary$, which will be later added to $\esep.$ We recurse the algorithm on this ball (\Cref{line:ldd-recurse}) which may add more edges to $\esep$ (via $\erecurse$). Finally, we remove this ball from the graph and repeat the process. 
Note that the algorithm may also terminate prematurely with $\esep=E(G)$ in \Cref{line:ldd-terminate,line:ldd-check-diam}. We will show that this happens with very low probability (\Cref{sec:ldd-termination-probability}).

\paragraph{Analysis.} The remainder of this section is devoted to analyze the above algorithm. In \Cref{sec:LDD-observations}, we make a few simple observations. Then, in \Cref{sec:LDD-runtime}, we analyze the runtime of the algorithm. The next two subsections are to bound the probability that an edge is in $\esep$: in \Cref{sec:ldd-termination-probability}  we prove that the probability that the algorithm terminates prematurely on Lines \ref{line:ldd-terminate} or \ref{line:ldd-check-diam} is very small, and complete the bound of $Pr[e\in \esep]$ in \Cref{sec:LDD-e in esep}. 
Finally, in \Cref{sec:LDD-diameter}, we prove the diameter bound, completing the proof of  \Cref{lem:SCCDecomposition}.

\subsection{Observations}\label{sec:LDD-observations}

\begin{observation}
\label{obs:ldd-num-calls}
Consider the execution of an initial call $\LDD(G,D)$ and let $\LDD(G_i,D)$ denote all the lower-level recursive calls that follow from it. The following always holds:
\begin{enumerate}[noitemsep]
    \item For each vertex $v$, there are at most $O(\log(n))$ calls $\LDD(G_i,D)$ such that $G_i$ contains $v.$
    \item The total number of recursive calls $\LDD(G_i,D)$ is $O(n\log(n)).$
\end{enumerate}
\end{observation}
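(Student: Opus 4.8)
The plan is to track how the vertex count of the graph shrinks along any root-to-leaf path in the recursion tree. The key structural fact is that whenever $\LDD$ makes a recursive call $\LDD(G[\stargball(v,R_v)],D)$ on Line~\ref{line:ldd-recurse}, it only does so after the guard on Line~\ref{line:ldd-terminate} has passed, which guarantees $|\stargball(v,R_v)| \leq .7|V(G)|$; and the ball that is recursed upon is then \emph{removed} from $G$ on Line~\ref{line:ldd-remove}, so the remaining graph on which the \textbf{while}-loop continues to operate at the current level has strictly fewer vertices as well. Thus every recursive call is on a subgraph whose vertex set has size at most $.7$ times that of its parent call's graph.

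From this, part~(1) is immediate: fix a vertex $v$ and consider the (unique) chain of recursive calls $\LDD(G_{i_0},D) \supseteq \LDD(G_{i_1},D) \supseteq \cdots$ whose graphs all contain $v$ — uniqueness holds because at each level $v$ lies in at most one carved-out ball (the balls removed within a single call are vertex-disjoint, being successively deleted from $G$), so $v$ is passed down to at most one child. Along this chain $|V(G_{i_{j+1}})| \leq .7\,|V(G_{i_j})|$, and since $|V(G_{i_0})| \leq n$ and each $G_{i_j}$ containing $v$ has at least one vertex, the chain has length $O(\log_{1/.7} n) = O(\log n)$. For part~(2), I would bound the total number of recursive calls by summing part~(1) over all vertices: each recursive call $\LDD(G_i,D)$ on Line~\ref{line:ldd-recurse} is invoked on a nonempty graph (a ball always contains its center), so it can be charged to, say, the center vertex of the ball it was called on, or more simply we observe that each call's graph contains at least one vertex and each vertex participates in $O(\log n)$ calls, giving $O(n\log n)$ calls in total.

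One subtlety to handle carefully is that a single invocation of $\LDD$ makes \emph{many} recursive calls inside its \textbf{while}-loop (one per carved ball), so the recursion tree is not binary; but since the carved balls within one invocation are pairwise vertex-disjoint (each is deleted before the next iteration), the vertex sets passed to the children of a fixed call are disjoint, which is exactly what makes the "unique chain containing $v$" argument and the charging argument work. I expect this disjointness observation to be the only real point needing care; the rest is a routine geometric-decay counting argument. I would also note for completeness that the premature-termination branches (Lines~\ref{line:ldd-terminate}, \ref{line:ldd-check-diam}) only make the recursion shallower, so they do not affect either bound.
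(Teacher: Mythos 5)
Your proposal is correct and follows essentially the same approach as the paper: both rely on the guard on Line~\ref{line:ldd-terminate} forcing $|\stargball(v,R_v)| \leq .7|V(G)|$ before any recursion, the disjointness of successively carved-out balls (equivalently, that a vertex is deleted once it enters a ball), and the resulting geometric decay giving $O(\log_{10/7} n)$ depth per vertex, with part~(2) following by the trivial at-least-one-vertex-per-call charging. The only cosmetic difference is that the paper additionally records that a one-vertex graph marks its lone vertex heavy and hence makes no recursive call, which is a tidy base-case remark but is not actually needed for the size-decay bound you gave.
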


\begin{proof}
The second property follows from the first because $G$ has at most $n$ vertices, and each call $\LDD(G_i)$ contains at least one of those vertices.
For the first property, note that 
{\bf (i)} recursion only occurs in Line \ref{line:ldd-recurse}, 
{\bf (ii)} if a vertex participates in the recursion in Line \ref{line:ldd-recurse}, it will be removed from $G$ on \Cref{line:ldd-remove}, 
{\bf (iii)} each time the algorithm recurses, the number of vertices in the child-graph is at most $7/10$ the number of vertices in the parent-graph because otherwise the algorithm terminates in Line \ref{line:ldd-terminate} without recursing, 
and 
{\bf (iv)} 
the algorithm always terminates when $|V(G)|=1$ because every vertex is heavy in this case, so the while loop is never executed and no recursion occurs.
Thus, a single vertex can participate in only  $\log_{10/7}(n) = O(\log(n))$ recursive calls. 
\end{proof}

\begin{observation}
\label{obs:ldd-sccs}
Consider any call $\LDD(G,D)$ and say that the algorithm makes some recursive call $\LDD(G[\ball^*_G(v,R_v)],D)$ in Line \ref{line:ldd-recurse}.
Then, once the call $\LDD(G,D)$ terminates, if vertex $x \in \ball^*_G(v,R_v)$ and vertex $y \notin \ball^*_G(v,R_v)$, then $x$ and $y$ are not in the same SCC in $G \setminus \esep$.
\end{observation}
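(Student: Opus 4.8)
\textbf{Proof proposal for Observation~\ref{obs:ldd-sccs}.}

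The plan is to argue that the boundary edges $\boundary(\stargball(v,R_v))$ computed in Line~\ref{line:ldd-boundary} completely sever the ``crossing'' direction between the inside and outside of the ball, so that no cycle in $G\setminus\esep$ can contain both an inside vertex and an outside vertex. Concretely, suppose $*=\mathrm{out}$, so that $\outgball(v,R_v)=\{u : \dist_G(v,u)\le R_v\}$ and $\boundary(\outgball(v,R_v))=\{(x,y)\in E : x\in\outgball(v,R_v),\ y\notin\outgball(v,R_v)\}$. The key structural fact is that \emph{every} edge of $G$ leaving the set $\outgball(v,R_v)$ is a boundary edge, hence is placed into $\esep$ on Line~\ref{line:ldd-erem-union}. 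Therefore in $G\setminus\esep$ there is no edge from any vertex inside $\outgball(v,R_v)$ to any vertex outside it.

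Now I would finish by a short contradiction argument. Suppose $x\in\stargball(v,R_v)$ and $y\notin\stargball(v,R_v)$ lie in the same SCC of $G\setminus\esep$ after the call $\LDD(G,D)$ terminates. Then there is a path from $x$ to $y$ in $G\setminus\esep$; walking along this path starting at the inside vertex $x$ and ending at the outside vertex $y$, there must be a first edge $(a,b)$ with $a$ inside and $b$ outside. But such an edge is a boundary edge of the ball and was added to $\esep$, so it cannot appear in $G\setminus\esep$ --- contradiction. (The symmetric case $*=\mathrm{in}$ is handled identically, using that $\boundary(\ingball(v,R_v))$ contains every edge \emph{entering} the ball, so in $G\setminus\esep$ there is no edge from outside the ball to inside it, which kills any path from the outside vertex $y$ back to the inside vertex $x$.) One subtle point to address is that $\esep$ only grows over the course of the algorithm (edges are only ever added, never removed, in Lines~\ref{line:ldd-boundary}–\ref{line:ldd-erem-union}), and the statement refers to $G\setminus\esep$ after the \emph{entire} call $\LDD(G,D)$ terminates; since $\eboundary$ is added at the time the ball is carved and never removed, the final $\esep$ still contains all these boundary edges, so the argument goes through. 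Also note that the later removal of the ball from $G$ on Line~\ref{line:ldd-remove} and any subsequent recursive calls do not reintroduce these edges.

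The main (mild) obstacle is simply being careful about \emph{which} direction of crossing edge is cut: an out-ball cuts edges leaving the ball, an in-ball cuts edges entering the ball, and in each case one only directly blocks one of the two required directions of the cycle. But for the SCC conclusion we only need to block a single direction: if no path from inside to outside survives (out-ball case), then $x$ and $y$ cannot be mutually reachable regardless of what happens in the other direction; likewise if no path from outside to inside survives (in-ball case). So a one-directional cut suffices, and no additional bookkeeping about the other direction is needed. The whole argument is elementary and short; the only thing to state precisely is that $\boundary(\stargball(v,R_v))\subseteq\esep$ in the final output, which is immediate from Lines~\ref{line:ldd-boundary} and~\ref{line:ldd-erem-union}.
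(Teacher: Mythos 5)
Your proof is correct and takes essentially the same route as the paper's: both proofs observe that every boundary edge of the carved ball is placed into $\esep$ (Lines~\ref{line:ldd-boundary} and~\ref{line:ldd-erem-union}), so one direction of crossing the ball is completely severed in $G \setminus \esep$, which already forbids $x$ and $y$ from lying in a common SCC. Your write-up is a bit more explicit than the paper's one-liner (spelling out both the in-ball and out-ball cases and noting that a one-directional cut suffices for the SCC conclusion), but the underlying argument is identical.
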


\begin{proof}
Say that the recursive call is on an out-ball $\outgball(v,R_v)$; the argument for in-balls is the same. Note that all the edges of $\outgball(v,R_v)$ are added to $\eboundary$ (Line \ref{line:ldd-boundary}) and later to $\esep$ (in Lines \ref{line:ldd-terminate} or \ref{line:ldd-erem-union}).
Thus, in $G \setminus \esep$ there are no edges leaving $\outgball(v,R_v)$, so clearly vertices $x$ and $y$ cannot be in the same SCC.
\end{proof}

\subsection{Running Time Analysis}\label{sec:LDD-runtime}

\begin{lemma}
The total running time of  $\LDD(G,D)$ is $O(|V(G)|\log^3(n) + |E(G)|\log^2(n))$.
\end{lemma}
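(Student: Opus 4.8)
The plan is to bound the work done in a single invocation of $\LDD(G,D)$ — call it the ``local work'', excluding the time spent inside recursive calls on Line~\ref{line:ldd-recurse} — and then sum over the recursion tree using \Cref{obs:ldd-num-calls}. Since each vertex of $\Gin$ participates in $O(\log n)$ recursive calls and the total number of calls is $O(n\log n)$, if I show the local work of a call on a graph $G$ is $O(|V(G)|\log^2 n + |E(G)|\log n)$ plus an additive $O(\log^3 n)$ term amortizable against the vertices it owns, the stated bound $O(|V(G)|\log^3 n + |E(G)|\log^2 n)$ follows by summing. Concretely: Phase~1 runs $k = O(\log n)$ Dijkstra computations (\Cref{thm:prelim:Dijkstra}) from the sampled sources $s_i$ on $G$ with nonnegative weights, each costing $O(|E(G)| + |V(G)|\log|V(G)|)$, so Phase~1 costs $O(|E(G)|\log n + |V(G)|\log^2 n)$; computing $\ingball(v,D/4)\cap S$ and $\outgball(v,D/4)\cap S$ for all $v$ (Line~\ref{line:ldd_v-balls}) is just a reorganization of the data produced on Line~\ref{line:ldd_s-balls}, costing $O((|V(G)|+|E(G)|)\log n)$; and the marking loop is $O(|V(G)|\log n)$.

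For Phase~2, the key point is that the balls $\stargball(v,R_v)$ carved out in successive iterations of the while loop are vertex-disjoint (each is removed from $G$ on Line~\ref{line:ldd-remove} before the next iteration), so the total number of vertices across all balls carved in this call is at most $|V(G)|$. For a single iteration, I would compute $\stargball(v,R_v)$ by running a truncated Dijkstra from $v$ (or into $v$) that explores only to radius $R_v$; since the algorithm terminates on Line~\ref{line:ldd-terminate} if $R_v > D/4$, we may assume $R_v \le D/4$, but the cost should instead be charged to the vertices and edges \emph{inside} the ball: exploring $\stargball(v,R_v)$ and computing its boundary edges $\boundary(\stargball(v,R_v))$ costs $O(|E(\stargball(v,R_v))| + |\stargball(v,R_v)|\log|V(G)|)$, where $E(\stargball(v,R_v))$ denotes edges with at least one endpoint in the ball. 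Summing over all iterations of the while loop, the vertex-disjointness gives a total of $O(|E(G)| + |V(G)|\log n)$ for all the ball computations and boundary extractions; the bookkeeping for $\esep$ (Line~\ref{line:ldd-erem-union}) and the graph updates (Line~\ref{line:ldd-remove}) are subsumed by this. The final clean-up step (Line~\ref{line:ldd-check-diam}) runs two more Dijkstra computations from an arbitrary vertex $v$ in $G_0$, costing $O(|E(G_0)| + |V(G_0)|\log n)$; since $G_0 = G$ here this is fine, but when we sum over the recursion tree we must note that this clean-up is charged to the \emph{input} graph of each call, and since each vertex appears in $O(\log n)$ calls the total over all clean-ups is $O((|E(\Gin)| + |V(\Gin)|)\log^2 n)$.

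Putting it together: the local work of a call on $G$ is $O(|E(G)|\log n + |V(G)|\log^2 n)$. Summing over the $O(n\log n)$ recursive calls is not done naively (that would lose a factor of $n$); instead I use \Cref{obs:ldd-num-calls}(1): each vertex of $\Gin$ is in $O(\log n)$ of the graphs $G_i$, and each edge of $\Gin$ is in $O(\log n)$ of the graphs $G_i$ (an edge is retained only if both endpoints are, so it appears in at most as many calls as either endpoint). Hence $\sum_i |E(G_i)| = O(|E(G)|\log n)$ and $\sum_i |V(G_i)| = O(|V(G)|\log n)$, so the total running time is $O\big(|E(G)|\log^2 n + |V(G)|\log^3 n\big)$, as claimed. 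The main obstacle is making the charging in Phase~2 rigorous — specifically, arguing that the cost of computing each carved ball can be charged to the edges/vertices inside that ball (so that vertex-disjointness of the balls caps the sum), rather than to a Dijkstra over the whole current graph $G$, which would be too slow if the while loop runs many times; this requires implementing the ball computation as a Dijkstra truncated at the ball boundary whose cost is proportional to the explored region plus its boundary, which is standard but must be stated carefully.
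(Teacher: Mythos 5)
Your proposal is correct and follows essentially the same structure as the paper's proof: bound the non-recursive work per call at $O(|V(G)|\log^2 n + |E(G)|\log n)$ (dominated by the $O(\log n)$ Dijkstra runs in Phase~1), observe that the carved balls in Phase~2 are vertex-disjoint so their total exploration cost is $O(|V(G)|\log n + |E(G)|)$, and then multiply by the $O(\log n)$ recursion depth from \Cref{obs:ldd-num-calls}. The ``main obstacle'' you flag — implementing the ball computation as a truncated Dijkstra whose cost is proportional to the explored region rather than to $|V(G)|$ — is exactly the point the paper handles in a footnote (reusing a single priority queue across iterations and re-initializing only the touched entries); the one small item you omit is the $O(\log n)$ time per geometric sample on Line~\ref{line:ldd-threshold}, which is easily absorbed into the per-vertex charge.
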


\begin{proof}
Let us first consider the non-recursive work, i.e. everything except Line \ref{line:ldd-recurse}. The balls in Line \ref{line:ldd_s-balls} are computed via $k = O(\log(n))$ executions of Dijkstra's algorithm in total time $O(|V(G)|\log^2(n) + |E(G)|\log(n))$. The sets in Line \ref{line:ldd_v-balls} are computed from the information in Line \ref{line:ldd_s-balls}. 

Every time the algorithm executes the While loop in Line \ref{line:ldd-main-loop} it first samples $R_v \sim \Geo(p)$ from the geometric distribution; this can be done in $O(\log(n))$ time using e.g. \cite{BringmannF13}.\footnote{By Theorem 5 in \cite{BringmannF13} with $p=\log(n)/D$ and word size $w=\Omega(\log\log D)$, the expected time is $O(\log(\min\{D/\log n, n\})/\log w)=O(\log n).$ Note that there might be other simpler methods since we do not need an exact algorithm.} 
It then computes some ball $\stargball(v,R_v)$ (\Cref{line:ldd-ball}).
All vertices in this ball are then removed from $G$ (Line \ref{line:ldd-remove}), so at this level of recursion each vertex participates in at most one ball. Each ball is computed via Dijkstra's algorithm, which requires $O(\log(n))$ time per vertex explored and $O(1)$ time per edge explored.\footnote{
A standard Dijkstra's implementation requires $O(|V(G)|)$ time to initialize the priority queue and vertices' labels (e.g. \cite{CLRS_book_3rd}). This can be easily modified to avoid the initialization time (e.g. \cite{ThorupZ05}). One way to do this is to initialize the priority queue and labels {\em once} in the beginning of Phase~2. After we use them to compute each ball $\stargball(v,R_v)$ on \Cref{line:ldd-ball}, we reinitialize them at the cost of the number of explored vertices.
}
So, the total time to execute the while loop is $O(|V(G)|\log(n) + |E(G)|)$.

Finally, checking the diameter in Line \ref{line:ldd-check-diam} only requires a single execution of Dijkstra's algorithm.
Thus the overall non-recursive work is $O(|V(G)|\log^2(n) + |E(G)|\log(n))$. By Observation \ref{obs:ldd-num-calls}, each vertex (and thus each edge) participates in $O(\log(n))$ calls, completing the lemma.
\end{proof}

\subsection{Bounding termination probabilities}\label{sec:ldd-termination-probability}

We now show that it is extremely unlikely for the algorithm to terminate in Lines \ref{line:ldd-terminate} or \ref{line:ldd-check-diam}. When we discuss some call $\LDD(G,D)$ in this subsection, we will use $G_0$ to the denote the initial input to $\LDD(G,D)$, in order to differentiate it from the graph $G$ from which vertices can be deleted over the course of the algorithm (Line \ref{line:ldd-remove}). Note that we always have $G \subseteq G_0$.

We start with a useful auxiliary claim. 

\begin{claim}
\label{claim:ldd-correct-markings}
Consider a single call $\LDD(G,D)$ (ignoring all future recursive calls). Then, with probability $\geq 1 -O(1/n^{19})$, the following holds: 
\begin{itemize}
    \item For any vertex $v$ marked in-light we have $|\ingzball(v,D/4)| \leq .7 |V(G_0)|$
    \item For any vertex $v$ marked out-light we have $|\outgzball(v,D/4)| \leq .7 |V(G_0)|$
    \item For any vertex $v$ marked heavy we have $|\ingzball(v,D/4)| > .5 |V(G_0)|$ AND $|\outgzball(v,D/4)| > .5 |V(G_0)|$
\end{itemize}

\end{claim}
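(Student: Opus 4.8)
The plan is to prove Claim~\ref{claim:ldd-correct-markings} via a standard Chernoff-bound / sampling argument, treating the random set $S = \{s_1,\ldots,s_k\}$ of $k = c\ln n$ i.i.d.\ uniform vertices of $V(G_0)$ as an estimator of ball sizes. For a fixed vertex $v$, note that $|\ingball(v,D/4)\cap S|$ is a sum of $k$ independent indicator variables, each equal to $1$ with probability exactly $|\ingzball(v,D/4)|/|V(G_0)|$ (here I use $G = G_0$ for this single call, since balls are computed in the input graph before any deletions; I should double-check that Line~\ref{line:ldd_s-balls} uses $G_0$, which it does as the marking happens in Phase~1 before Phase~2 removes anything). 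Write $\rho_{in}(v) := |\ingzball(v,D/4)|/|V(G_0)|$ and similarly $\rho_{out}(v)$, and let $X_{in}(v) := |\ingball(v,D/4)\cap S|$, so $\E[X_{in}(v)] = \rho_{in}(v)\cdot k$.

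\textbf{Key steps.} First I would record the three implications we need, each following from a one-sided Chernoff bound applied to $X_{in}(v)$ or $X_{out}(v)$:
\begin{itemize}[noitemsep]
    \item If $v$ is marked in-light, then $X_{in}(v) \le 0.6k$. We want to conclude $\rho_{in}(v) \le 0.7$. Contrapositive: if $\rho_{in}(v) > 0.7$ then $\E[X_{in}(v)] > 0.7k$, and $\Pr[X_{in}(v) \le 0.6k] \le \Pr[X_{in}(v) \le (1-\tfrac17)\E[X_{in}(v)]] \le \exp(-\Omega(k))$ by the multiplicative Chernoff bound. Symmetrically for out-light.
    \item If $v$ is marked heavy, then $X_{in}(v) > 0.6k$ and $X_{out}(v) > 0.6k$. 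We want $\rho_{in}(v) > 0.5$ and $\rho_{out}(v) > 0.5$. Contrapositive: if $\rho_{in}(v) \le 0.5$ then $\E[X_{in}(v)] \le 0.5k$, and $\Pr[X_{in}(v) > 0.6k] \le \Pr[X_{in}(v) \ge (1+\tfrac15)\E[X_{in}(v)]] \le \exp(-\Omega(k))$.
\end{itemize}
Second, I would choose the constant $c$ in $k = c\ln n$ large enough that each of these failure probabilities is at most $n^{-20}$ (the Chernoff exponent is a fixed constant times $k = c\ln n$, so $c$ large makes the bound $n^{-20}$). Third, I would union-bound over all $\le |V(G_0)| \le n$ vertices $v$ and over the (at most three) relevant events per vertex, giving total failure probability $O(n\cdot n^{-20}) = O(n^{-19})$. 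That yields the claimed $1 - O(1/n^{19})$ success probability.

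\textbf{A subtlety to handle.} The samples $s_i$ are drawn with replacement (``possible: $s_i = s_j$''), so $X_{in}(v)$ is genuinely a sum of independent Bernoulli$(\rho_{in}(v))$ variables and the Chernoff bound applies cleanly — I should state this rather than gloss over it. I should also note the marking rule is sequential (in-light is checked first, then out-light, then heavy), but this only makes the markings \emph{more} restrictive: a vertex marked out-light additionally satisfies $X_{in}(v) > 0.6k$, and a heavy vertex satisfies both $X_{in}(v) > 0.6k$ and $X_{out}(v) > 0.6k$ — exactly what the heavy case needs — so the logic above goes through verbatim. One more point: we should make sure all the estimates are conditioned only on the randomness of $S$ and are independent of the later random radii $R_v$; since $S$ is fixed before Phase~2, this is immediate.

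\textbf{Main obstacle.} There is no real mathematical obstacle here — this is a textbook concentration argument. The only things requiring care are bookkeeping: (i) getting the direction of each Chernoff bound right (lower-tail for the light cases, upper-tail for the heavy case) and verifying the multiplicative gaps ($0.6$ vs.\ $0.7$ gives a $1/7$ relative deviation; $0.6$ vs.\ $0.5$ gives a $1/5$ relative deviation) are bounded away from zero so the exponent is $\Omega(k)$; and (ii) confirming that the balls used for marking are computed in $G_0$ (equivalently, that $G = G_0$ throughout Phase~1), so that $\ingball(v,D/4)$ in the algorithm coincides with $\ingzball(v,D/4)$ in the claim statement. With $c$ chosen appropriately, the union bound closes everything out.
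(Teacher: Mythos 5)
Your proof is correct and takes essentially the same approach as the paper: concentration of $|\ball^{*}_{G_0}(v,D/4)\cap S|$ around its mean via a Chernoff bound, followed by a union bound over the at most $n$ vertices. The only cosmetic difference is that the paper invokes the additive (Hoeffding-style) form of the bound with $\epsilon = 0.1$, whereas you use the multiplicative form; both yield an $\exp(-\Omega(k))$ tail, and your extra care about sampling with replacement and about $G = G_0$ during Phase~1 is exactly what justifies the i.i.d.\ Bernoulli setup that the paper uses implicitly.
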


\begin{proof}
Consider any single instance of a vertex $v$ being marked during the recursive call. We will show that for this vertex the claim holds with probability $\geq 1-O(1/n^{20})$. A union bound over all $|V(G_0)| \leq n$ vertices completes the lemma. 

We first show that for any vertex $v$ such that $|\ingzball(v,D/4)| > .7 |V(G_0)|$, 
we have $\Pr\left[v \text{ is marked in-light } \right] \leq 1/n^{20}:$
Since $|\ingzball(v,D/4)| > .7 |V(G_0)|$ and since the $k = c\log(n)$ vertices from $S$ are sampled randomly, we know that $\mathbb{E}[|\ingzball(v,D/4) \bigcap S|] > .7k$. But for $v$ to be marked in-light we must have $|\ingzball(v,D/4) \bigcap S| \leq .6k$; a Chernoff bounds shows such a high deviation from expectation to be extremely unlikely.\footnote{We use the following Chernoff's bound. For any independent 0/1 random variables $X_1, \ldots, X_k$, if $X_{avg}:=(1/k) \sum_{i=1}^k X_i$, then $Pr[X_{avg}<\mathbb{E}[X_{avg}]-\epsilon]\leq e^{-2k\epsilon^2}$. Here, we use $\epsilon=0.1$ and define $X_i=1$ iff $s_i\in \stargzball(v,D/4)$ (so, $\mathbb{E}[X_{avg}]=\frac{|\stargzball(v,R_v)|}{|V(G_0)|} > 0.7$).}

By similar proofs, it also holds that (a) for any vertex $v$ such that $|\outgzball(v,D/4)| > .7 |V(G_0)|$, 
we have $\Pr\left[v \text{ is marked out-light } \right] \leq 1/n^{20}$ and (b) for any vertex $v$ such that $|\ingzball(v,D/4)| \leq .5 |V(G_0)|$ or $|\outgzball(v,D/4)| \leq .5 |V(G_0)|$, 
we have $\Pr\left[v \text{ is marked heavy} \right] \leq 1/n^{20}.$
\end{proof}

\begin{claim}
\label{claim:ldd-termination}
Consider a single call $\LDD(G,D)$ (ignoring all future recursive calls). With probability $\geq 1-O(1/n^{19})$,  $\LDD(G,D)$ does not terminate in Lines \ref{line:ldd-terminate} or \ref{line:ldd-check-diam}.
\end{claim}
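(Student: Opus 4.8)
The statement bounds the probability that a single call $\LDD(G,D)$ terminates prematurely, either in Line~\ref{line:ldd-terminate} (because some sampled radius $R_v$ exceeds $D/4$, or some carved ball contains more than $.7|V(G)|$ vertices) or in Line~\ref{line:ldd-check-diam} (because the leftover vertices are not all within distance $D/2$ of an arbitrary leftover vertex in $G_0$). The plan is to condition on the good event of Claim~\ref{claim:ldd-correct-markings} (all light/heavy markings are ``honest'' with respect to $G_0$), which already costs only $O(1/n^{19})$, and then bound each of the three premature-termination triggers by $O(1/n^{20})$ conditioned on correct markings, finally union-bounding over the $O(n\log n)$-free setting of a single call. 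Since a single call carves out at most $|V(G)|\le n$ balls (each ball removes at least one vertex), a union bound over all iterations of the while loop keeps everything at $O(\mathrm{polylog}(n)/n^{20}) = O(1/n^{19})$.

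\textbf{Bounding the $R_v > D/4$ trigger.} Each $R_v$ is drawn from $\Geo(p)$ with $p = \min\{1, 80\log_2(n)/D\}$. If $p=1$ then $R_v = 1 \le D/4$ deterministically (assuming $D \ge 4$, which holds since $D$ is a positive integer and the case $D < 4$ can be handled separately, or $p=1$ forces the conclusion). Otherwise $p = 80\log_2(n)/D$, and $\Pr[R_v > D/4] = (1-p)^{\lfloor D/4\rfloor} \le e^{-p D/4 + O(p)} = e^{-20\log_2 n + O(1)} = n^{-\Omega(1)}$; choosing constants so the exponent beats $20$, this is $\le 1/n^{21}$ per draw, hence $\le 1/n^{20}$ after a union bound over the $\le n$ draws in the call.

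\textbf{Bounding the ``ball too large'' trigger.} Here I would use the correct-markings event. The center $v$ of a carved ball is $*$-light, so by Claim~\ref{claim:ldd-correct-markings}, $|\stargzball(v,D/4)| \le .7|V(G_0)|$. Since $G \subseteq G_0$ and we only reach Line~\ref{line:ldd-ball} when $R_v \le D/4$ (the other trigger having been excluded), we get $\stargball_G(v,R_v) \subseteq \stargball_G(v,D/4) \subseteq \stargzball(v,D/4) \cap V(G)$, so $|\stargball_G(v,R_v)| \le .7|V(G_0)|$. But the check in Line~\ref{line:ldd-terminate} compares against $.7|V(G)| \le .7|V(G_0)|$ --- so actually we need a touch more care: we want $|\stargball_G(v,R_v)| \le .7|V(G)|$. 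This is the step I expect to be the main obstacle, and it is presumably resolved by a second Chernoff/geometric-radius argument: $\mathbb{E}[|\stargball_G(v,R_v)|]$ grows by at most a $1/(1-p)$ factor per unit increase in radius, so conditioned on being light (hence $\le .7|V(G_0)|$ at radius $D/4$) and on $R_v$ being ``small'' (which happens w.h.p.\ by the first trigger bound), the ball stays below $.7|V(G)|$ except with probability $n^{-\Omega(1)}$; alternatively the intended argument may simply be that being marked light already certifies $|\stargball(v,D/4)|\le .7|V(G)|$ up to the $1/n^{20}$ slack, making the check redundant w.h.p.

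\textbf{Bounding the clean-up trigger (Line~\ref{line:ldd-check-diam}).} After the while loop, every remaining vertex is heavy, so by Claim~\ref{claim:ldd-correct-markings} each remaining $v$ has $|\ingzball(v,D/4)| > .5|V(G_0)|$ and $|\outgzball(v,D/4)| > .5|V(G_0)|$. For any two remaining vertices $u,v$, the sets $\outgzball(u,D/4)$ and $\ingzball(v,D/4)$ each exceed half of $V(G_0)$, hence intersect; a common vertex $x$ gives $\dist_{G_0}(u,v) \le \dist_{G_0}(u,x) + \dist_{G_0}(x,v) \le D/4 + D/4 = D/2$. Thus for the arbitrary leftover vertex $v$ selected in Line~\ref{line:ldd-check-diam} we have $V(G) \subseteq \ball_{G_0}^{in}(v,D/2) \cap \ball_{G_0}^{out}(v,D/2)$, so the check passes and the algorithm does not terminate there --- conditioned on correct markings, this trigger fires with probability $0$. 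Combining the three bounds with the $O(1/n^{19})$ failure probability of Claim~\ref{claim:ldd-correct-markings} and summing gives the claimed $1 - O(1/n^{19})$.
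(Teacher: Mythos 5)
Your overall strategy is the paper's: condition on the good event of Claim~\ref{claim:ldd-correct-markings}, argue the ``ball too large'' and clean-up triggers cannot fire under that event, and separately union-bound the $R_v > D/4$ trigger over the $\le n$ draws. The $R_v > D/4$ bound and the clean-up argument (two heavy vertices have in- and out-balls each exceeding half of $V(G_0)$, hence intersecting, giving $\dist_{G_0} \le D/2$) match the paper exactly.

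The one place you stall --- the ``ball too large'' check --- is the interesting one, and both of your proposed workarounds fail. The pseudocode's literal ``$.7|V(G)|$'' would compare against the \emph{shrunk} graph, and against that threshold the check really can fire under the good event: $\stargball_G(v,R_v)$ is always $\subseteq V(G)$, so once $|V(G)|$ has dropped it is entirely possible that the carved ball is, say, all of $V(G)$. Your first fix (a growth-rate/Chernoff bound on $\mathbb{E}[|\stargball_G(v,R_v)|]$) does not help, because the problem is not that the ball is unexpectedly large --- it is that the denominator $|V(G)|$ has become small; conditioning on $v$ being light gives you no handle on that. Your second fix (``being marked light certifies $|\stargball(v,D/4)| \le .7|V(G)|$ up to slack'') is also false for the current $G$, for the same reason. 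The paper's proof, and Observation~\ref{obs:ldd-num-calls} which needs each recursive call's vertex set to be at most $7/10$ of the \emph{parent's}, both tacitly read the threshold as $.7|V(G_0)|$, i.e.\ against the original input to this call. With that reading the step you flagged is immediate: given $R_v \le D/4$ (excluded by the other trigger) and $v$ being $*$-light, $\stargball_G(v,R_v) \subseteq \stargzball(v,D/4)$, which by Claim~\ref{claim:ldd-correct-markings} has at most $.7|V(G_0)|$ vertices, so the check cannot fire. So the obstacle you identified is real as a notational ambiguity, but the fix is to pin down the intended semantics of the check, not to add another probabilistic argument.
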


\begin{proof}
Let us assume that the statement of  \Cref{claim:ldd-correct-markings} holds. Under this assumption the algorithm clearly cannot terminate in the second condition in Line \ref{line:ldd-terminate} (i.e. $\stargball(v,R_v)$ being too large), because $\stargball(v,R_v) \subseteq \stargzball(v,R_v)$. We now show that the algorithm also cannot terminate in Line \ref{line:ldd-check-diam}. Note that by the time we get to Line \ref{line:ldd-check-diam}, the only vertices remaining in $G$ must have been marked heavy, because otherwise the while loop in Line \ref{line:ldd-main-loop} would continue. We now show that if $x$ and $y$ are marked heavy then $\dist_{G_0}(x,y) \leq D/2$, which implies that the algorithm does not terminate in Line \ref{line:ldd-check-diam}. To see that $\dist_{G_0}(x,y) \leq D/2$, recall that we are assuming the statement of \Cref{claim:ldd-correct-markings}, so $\ball_{G_0}^{in}(x,D/4) > |V(G_0)|/2$ and $\ball_{G_0}^{out}(x,D/4) > |V(G_0)|/2,$ so there is some vertex $w$ in the intersection of the two balls, so $\dist_{G_0}(x,y) \leq \dist_{G_0} (x,w) + \dist_{G_0}(w,y) \leq D/4 + D/4 = D/2$. (Note that $w$ might not be marked heavy and also might have been removed from $G$, which is why we only guarantee weak diameter.)

Thus, since \Cref{claim:ldd-correct-markings} holds with probability $\geq 1-O(1/n^{19})$, we know that with this same probability the algorithm does not terminate in Line \ref{line:ldd-check-diam}, and does not terminate in the the second condition of Line \ref{line:ldd-terminate}.

We now need to bound the probability of terminating in the first condition of Line \ref{line:ldd-terminate} (i.e. $R_v>D/4$). Recall that $p = \min \{1,80\log(n)/D\}$ and the definition of the geometric distribution (Definition \ref{def:geometric}). Every time the algorithm samples $R_v \sim \Geo(p)$ in Line \ref{line:ldd-threshold} we have that 
$\Pr[R_v > D/4] = \Pr[\textrm{the first D/4 coins are all tails}] = (1-p)^{D/4} \leq (1-\frac{80\log(n)}{D})^{D/4} < 1/n^{20}$. Since each variable $R_v$ is associated with a specific vertex $v$, a union bound over all $|V(G_0)| \leq n$ vertices completes the proof.
\end{proof}

\subsection{Bounding $Pr[e\in \esep]$}\label{sec:LDD-e in esep}

Observe that $e\in \esep$ if in one of the recursive calls it is either (i) added to $\eboundary$ on \Cref{line:ldd-boundary}, 
or (ii) added to $\esep$ before a recursive call terminates on \Cref{line:ldd-terminate,line:ldd-check-diam}. 
The latter case was shown in the previous subsection (\Cref{claim:ldd-termination}) to happen with probability $O(n^{-19})$ in each recursive call, thus with probability $O(n^{-18}\log(n))$ over all $O(n\log n)$ recursive calls (\Cref{obs:ldd-num-calls}).
Below  (\Cref{cor:ldd-PrEboundary}), we show that the former case happens with probability $O\left(\frac{w(e)\cdot (\log n)^2}{D}\right)$ over all recursive calls. So, the probability that $e$ is added to $\esep$ in any of the recursive calls is $O\left(\frac{w(e)\cdot (\log n)^2}{D}+\frac{\log(n)}{n^{18}}\right)=O\left(\frac{w(e)\cdot (\log n)^2}{D}+{n^{-10}}\right)$ as claimed.

\begin{lemma}\label{lem:ldd-PrEboundary}\label{lem:eboundary}
In a single call of \Cref{alg:LDD} (ignoring all future recursive calls), $$Pr[e\in \eboundary]=O\left(\frac{w(e)\cdot (\log n)}{D}\right).$$
\end{lemma}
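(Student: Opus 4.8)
The plan is to bound the probability that edge $e = (x,y)$ ends up in $\eboundary$ during a single call $\LDD(G,D)$, where by definition an edge is added to $\eboundary$ only when it crosses the boundary of some ball $\stargball(v, R_v)$ carved out in Phase 2 (Line \ref{line:ldd-boundary}). First I would reduce to analyzing one iteration of the while loop: fix the light center $v$ (say it is in-light, so we carve $\ingball(v, R_v)$; the out-light case is symmetric by reversing edges), and note that $e=(x,y)$ is added to $\eboundary$ in this iteration precisely when $y \in \ingball(v, R_v)$ but $x \notin \ingball(v, R_v)$, i.e. when $\dist(v,y) \le R_v < \dist(v,x)$ in the current graph $G$. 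Since $\dist(v,x) \le \dist(v,y) + w(x,y)$, the event $\dist(v,y) \le R_v < \dist(v,x)$ can only happen when $R_v$ falls in an interval of length at most $w(e)$, namely $R_v \in [\dist(v,y), \dist(v,x))$, an interval of at most $w(e)$ integer values (here I use that all weights are nonnegative integers, so $R_v$ is integer-valued).

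The heart of the argument is the standard memorylessness property of the geometric distribution: for any integers $a \le b$, $\Pr[a \le R_v < a + \ell \mid R_v \ge a] = 1 - (1-p)^{\ell} \le p\ell$. Conditioning on the ball actually reaching $y$ (i.e. $R_v \ge \dist(v,y)$) — which only makes $e$ more likely to be a boundary edge — the conditional probability that $e \in \eboundary$ is at most $p \cdot w(e)$, where $p = \min\{1, 80\log_2(n)/D\} = O(\log(n)/D)$ (I should handle the degenerate case $p = 1$ separately, but there $D = O(\log n)$ and the bound $p\,w(e) = w(e) = O(w(e)\log(n)/D)$ holds trivially). The subtlety is that there can be many iterations of the while loop, each with its own center and its own fresh $R_v$; but $e$ can be put into $\eboundary$ in the current call $\LDD(G,D)$ at most once, since after the carving both endpoints either leave $G$ together (if both land in the ball) or the edge crosses the boundary and at least its tail $x$ leaves while... — more carefully, once $e$ becomes a boundary edge, $y$ is removed from $G$ (Line \ref{line:ldd-remove}), so $e$ is gone from all later iterations of this call. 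Thus it suffices to sum over iterations, but since the carving events are disjoint (a vertex is carved at most once per call), I really just need: in whichever iteration (if any) first "touches" $y$, the edge is added with probability $\le p\,w(e)$; summing a single such contribution gives $\Pr[e \in \eboundary] \le p\,w(e) = O(w(e)\log(n)/D)$.

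I expect the main obstacle to be making the conditioning rigorous: the center $v$ of the ball that eventually reaches $y$, and the graph $G$ at that point, are themselves random (determined by earlier iterations and earlier recursive calls), so I cannot just "fix $v$". The clean way is to process the while loop iterations in order and, at the start of each iteration, condition on the entire history (which light vertex $v$ is chosen, the current graph $G$); given this history, $R_v \sim \Geo(p)$ is independent and fresh, and $\dist(v,x), \dist(v,y)$ in the current $G$ are determined. If $y$ has already been removed, $e$ cannot enter $\eboundary$ this iteration; otherwise the one-interval argument above gives conditional probability $\le p\,w(e)$ of adding $e$ this iteration, and $e$ is removed if it is added or if $y$ is carved. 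So by a union bound over the (at most one) iteration in which $y$ is still present and gets carved, $\Pr[e \in \eboundary] \le p\,w(e) = O\!\left(\frac{w(e)\log n}{D}\right)$, which is exactly the claim of Lemma \ref{lem:ldd-PrEboundary}. The statement then feeds into the overall $\Pr[e \in \esep]$ bound by summing over the $O(\log n)$ recursive calls containing $e$ (Observation \ref{obs:ldd-num-calls}) to get the extra $\log n$ factor, plus the $n^{-10}$ term from premature termination.
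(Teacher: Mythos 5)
Your proof is correct and takes essentially the same approach as the paper: fix the history of the ball-carving process, apply the memorylessness of $\Geo(p)$ to bound by $p\cdot w(e)$ the conditional probability that a ball reaching the inner endpoint of $e$ stops before the outer one, and use disjointness of the carving events so that only one iteration can contribute. The paper makes your ``summing a single such contribution'' step rigorous by introducing the events $I_i$ (inner endpoint first carved at iteration $i$) and $X_i$ (outer endpoint excluded) and writing $\Pr[e\in\eboundary]=\sum_i\Pr[I_i]\Pr[X_i\mid I_i]\le p\,w(e)\sum_i\Pr[I_i]\le p\,w(e)$, with \Cref{assum:cond-prob} and the finite-space truncation of $R_i$ handling the measure-theoretic corners you only gesture at.
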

\begin{proof}[Proof intuition]
We provide a formal proof of \Cref{lem:eboundary} in Appendix \ref{sec:app-eboundary}. For an intuition, 
note that since $R_v$ is the number of coin tosses until obtaining the first head (\Cref{def:geometric}), 
the algorithm can be viewed as the following {\em ball-growing process:} Start with a ball $\ball^{*}_G(v, 1)$ of radius $1$ at some vertex $v$ and $*\in \{in, out\}$. 
We flip a coin that turns head with probability $p$. Every time we get a tail, we increase the radius of the ball by one. We stop when we get a head. We then put all edges in $\boundary{\stargball(v,R_v)}$ into $\eboundary$ and cut out the ball. We may then repeat the ball growing process from a new vertex. 

To analyze  $Pr[e\in \eboundary]$, consider any edge $(x,y)$. Consider the first time that $x$ is contained in a ball
$B^*_G(v, r)$ (for some $r$) 
during the ball-growing process above.
Observe that if the next $w(x,y)$ coin tosses all return tails, then $(x,y)$ will {\em not} be in $\eboundary$ (because $y$ is either in the ball $B^*_G(v, r+w(x,y))$ or is no longer in $G$). In other words, $Pr[e\in \eboundary]$ is at most the probability that one of the next $w(x,y)$ coin tosses returns head. This is at most $p\cdot w(x,y)=O\left(\frac{w(e)\cdot (\log n)}{D}\right).$ 
\end{proof}

\begin{corollary}\label{cor:ldd-PrEboundary}
For any edge $e$, the probability that \emph{one} of the recursive calls adds $e$ to $\eboundary$ is $O\left(\frac{w(e)\cdot (\log n)^2}{D}\right).$
\end{corollary}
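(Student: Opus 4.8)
\textbf{Proof proposal for Corollary \ref{cor:ldd-PrEboundary}.}
The plan is to obtain the corollary directly from Lemma \ref{lem:eboundary} together with Observation \ref{obs:ldd-num-calls}, by a union bound over the (few) recursive calls that could possibly place $e$ into $\eboundary$. No new ideas are needed beyond those two statements.

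First I would identify which recursive calls are relevant. Fix an edge $e = (x,y)$. For $e$ to be added to $\eboundary$ on Line \ref{line:ldd-boundary} during a recursive call $\LDD(G_i,D)$, the ball carved out in that iteration must have $e$ on its boundary: if the ball is an out-ball $\outgball(v,R_v)$ this forces $x \in V(G_i)$, and if it is an in-ball $\ingball(v,R_v)$ it forces $y \in V(G_i)$. By Observation \ref{obs:ldd-num-calls}(1), the vertex $x$ lies in at most $O(\log n)$ of the recursive calls descending from the top-level invocation, and the same holds for $y$. Hence there is a set $\mathcal{C}_e$ of at most $O(\log n)$ recursive calls outside of which $e$ can never enter $\eboundary$.

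Next, for each individual call $C \in \mathcal{C}_e$, I would invoke Lemma \ref{lem:eboundary} applied to $C$ (ignoring $C$'s own future recursive calls — exactly the regime in which that lemma is stated), obtaining $\Pr[e \in \eboundary \text{ during } C] = O(w(e)\log n / D)$. Note that $w(e)$ and $D$ are the same in every recursive call, since subgraphs inherit the weight function and the distance parameter passed down is always $D$. A union bound over the $\le O(\log n)$ calls in $\mathcal{C}_e$ then gives that the probability $e$ is added to $\eboundary$ in some recursive call is $O(\log n)\cdot O(w(e)\log n/D) = O\!\left(\frac{w(e)(\log n)^2}{D}\right)$, as claimed.

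There is essentially no obstacle here: the only points requiring (routine) care are (i) checking that "$e$ on the boundary of the carved ball" really does force an endpoint of $e$ to lie in the current graph, so that Observation \ref{obs:ldd-num-calls} caps the number of relevant calls at $O(\log n)$; and (ii) confirming that the per-call phrasing of Lemma \ref{lem:eboundary} ("ignoring all future recursive calls") is precisely what a union bound over calls consumes. The substantive work sits in Lemma \ref{lem:eboundary} itself — the ball-growing argument with geometrically distributed radii — which we are entitled to assume.
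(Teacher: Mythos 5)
Your proposal is correct and follows essentially the same route as the paper: apply Observation~\ref{obs:ldd-num-calls} to cap the number of recursive calls in which $e$ (or an endpoint of $e$) can appear at $O(\log n)$, then union-bound the per-call probability from Lemma~\ref{lem:eboundary}. The paper's own proof is the same two-line argument, stated even more tersely.
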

\begin{proof}
By \Cref{obs:ldd-num-calls}, for every edge $e$, there are $O(\log n)$ calls LDD($G_i$, $D$) such that $G_i$ contains $e$. The corollary follows by Union Bound. 
\end{proof}

\subsection{Diameter Analysis}\label{sec:LDD-diameter}

We now show that the set $\esep$ returned by $\LDD(G,D)$ satisfies the first output property of Lemma \ref{lem:SCCDecomposition}. The proof is essentially trivial, since the low-diameter condition is ensured by Line \ref{line:ldd-check-diam}.

\begin{lemma}
For any graph $G$, $\LDD(G,D)$ returns $\esep$ such that each SCC of $G \setminus \esep$ has weak diameter $\leq D$.
\end{lemma}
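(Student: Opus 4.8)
The plan is to prove this by induction on the recursion depth of $\LDD(G,D)$, using the structural observations already established, in particular \Cref{obs:ldd-sccs}. The key point is that every call to $\LDD$ that actually returns a set $\esep$ (as opposed to terminating prematurely) has gone through the cleanup check on \Cref{line:ldd-check-diam}, which directly guarantees the weak-diameter bound for whatever vertices remain in $G$ at the end of Phase 2.

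First I would handle the premature-termination cases: if the algorithm terminates on \Cref{line:ldd-terminate} or \Cref{line:ldd-check-diam}, it returns $\esep = E(G)$, so $G \setminus \esep$ has no edges at all, every SCC is a single vertex, and the weak diameter is trivially $0 \leq D$. So from now on assume the call runs to completion. Next, I would partition $V(G)$ according to the execution of Phase 2: each vertex $v \in V(G)$ either (i) ends up in some ball $\stargball(u,R_u)$ that is carved out in an iteration of the while loop on \Cref{line:ldd-main-loop}, in which case $v$ belongs to the recursive subcall $\LDD(G[\stargball(u,R_u)],D)$, or (ii) survives in $G$ until the while loop exits. For case (ii), \Cref{line:ldd-check-diam} guarantees (since the call did not terminate there) that there is an arbitrary vertex $v_0 \in V(G)$ with $V(G) \subseteq \ball_{G_0}^{in}(v_0,D/2) \cap \ball_{G_0}^{out}(v_0,D/2)$; hence for any two surviving vertices $x,y$ we get $\dist_{G_0}(x,y) \leq \dist_{G_0}(x,v_0) + \dist_{G_0}(v_0,y) \leq D/2 + D/2 = D$, and $G_0 \subseteq \Gin$ so this is a weak-diameter bound in the original graph as required.

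Then I would argue that no SCC of $G \setminus \esep$ can straddle the partition classes. By \Cref{obs:ldd-sccs}, whenever a ball $\stargball(u,R_u)$ is carved out and recursed upon, all of its boundary edges are placed into $\esep$ (on \Cref{line:ldd-boundary} and then \Cref{line:ldd-erem-union}), so in $G \setminus \esep$ there are no edges connecting that ball to the rest of $G$; hence any SCC of $G \setminus \esep$ is entirely contained within a single ball $\stargball(u,R_u)$ or entirely within the set of surviving vertices. For SCCs inside the surviving set, the diameter bound follows from the $\LDD$ output on $\Gstar$ is itself inside $G_0 \subseteq \Gin$ — hm actually I should be careful here; let me restate: it follows from the paragraph above. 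For an SCC $C$ contained in some carved ball $\stargball(u,R_u)$, note that the edges within $C$ present in $G \setminus \esep$ are exactly the edges within $C$ present in $(G[\stargball(u,R_u)]) \setminus \erecurse$ where $\erecurse$ is the output of the recursive call; so $C$ is contained in an SCC of that recursive subgraph, and by the induction hypothesis that SCC has weak diameter $\leq D$ with respect to $\Gin$, hence so does $C$.

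The main obstacle — though it is more of a bookkeeping subtlety than a genuine difficulty — is making the recursion/partition argument airtight: I must verify that the set $\esep$ returned by the top-level call really is the union, over all recursive subcalls and all carved balls, of the boundary-edge sets and the recursively returned edge sets, so that "removing $\esep$" simultaneously disconnects every carved ball from its parent and enforces every recursively guaranteed diameter bound. This requires tracking that edges added to $\esep$ in a child call stay in $\esep$ in the parent (which they do, via the union on \Cref{line:ldd-erem-union}), and that the weak-diameter guarantee is always measured against the fixed original graph $\Gin$ rather than against the shrinking graph $G$ of a deep recursive call — which holds because $G_i \subseteq \Gin$ for every subgraph $G_i$ appearing in the recursion, and $\dist_{\Gin}(\cdot,\cdot) \leq \dist_{G_i}(\cdot,\cdot)$ so any bound proven in a smaller graph transfers. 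Once these containments are spelled out, the induction closes immediately and the lemma follows.
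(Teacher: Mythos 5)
Your proposal is correct and takes essentially the same approach as the paper's own proof: induction on the recursion, \Cref{obs:ldd-sccs} to show SCCs cannot straddle a carved ball and its complement, the cleanup check on \Cref{line:ldd-check-diam} for vertices that survive Phase~2, and monotonicity of distances under subgraphs to transfer the inductive bound to the parent graph. (One small notational slip: the weak-diameter guarantee in the lemma is measured with respect to the input graph $G$ of the current call, not $\Gin$; this is in fact a \emph{stronger} statement, and your transfer argument $\dist_G(x,y)\le\dist_{G'}(x,y)$ already supplies it.)
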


\begin{proof}
The proof will be induction on $|V(G)|$. The base case where $|V(G)| = 1$ trivially holds. We now assume by induction that the Lemma holds for all $\LDD(G',D)$ with $|V(G')| < |V(G)|$. Note that in the remaining of this proof we let $G$ denote the initial input to $\LDD(G,D)$ and not the graph that changes throughout the execution of the algorithm.

Consider any two vertices $x,y \in V(G)$. We want to show that $\LDD(G,D)$ always returns $\esep$ such that
\begin{align}
\mbox{\em if $x$ and $y$ are in the same SCC in $G \setminus \esep$, then $\dist_G(x,y) \leq D$ and $\dist_G(y,x) \leq D$.}\label{eq:ldd-property-diameter}
\end{align}
Note that if a vertex is in $V(G')$ for recursive call $\LDD(G',D)$ in Line \ref{line:ldd-recurse} then it is immediately removed from the graph $G$ (Line \ref{line:ldd-remove}) and never touched again during the execution of the current call $\LDD(G,D)$. There are thus three possible cases regarding $x$ and $y$.
\begin{description}
    \item[{\bf Case 1}] There is some recursive call $\LDD(G',D)$ in Line \ref{line:ldd-recurse} such that one of $x,y$ is in $G'$ 
    and the other is not.
    \item[{\bf Case 2}] There is some recursive call $\LDD(G',D)$ in Line \ref{line:ldd-recurse} such that both $x$ and $y$ are in $G'$. 
    \item[{\bf Case 3}] Neither $x$ or $y$ participate in any recursive call.
\end{description}

In Case 1, we know by Observation \ref{obs:ldd-sccs} that $x$ and $y$ cannot end up in the same SCC of $G \setminus \esep$, so the property \eqref{eq:ldd-property-diameter} holds.

In Case 2, note that for recursive call  $\LDD(G',D)$ we have that $G'$ is an induced subgraph of $G$, and also that $V(G') \leq .7 V(G)$, because otherwise the algorithm would have terminated in Line \ref{line:ldd-terminate}. Thus, by the induction hypothesis, the Lemma holds for $\LDD(G',D)$. In particular, there are two options: {\bf (1)} If $x$ and $y$ are in the same SCC of $G' \setminus \esep$, then $\dist_G(x,y) \leq \dist_{G'}(x,y) \leq D$ and $\dist_G(y,x) \leq \dist_{G'}(y,x) \leq D$ and {\bf (2)} if $x$ and $y$ are not in the same SCC in $G' \setminus \esep$, then they are also not in the same SCC of $G \setminus \esep$, because by Observation \ref{obs:ldd-sccs}, none of the vertices of $V(G) \setminus V(G')$ can be in the same SCC as either $x$ or $y$ in $G \setminus \esep$.

Finally, Case 3 holds because if $x$ and $y$ don't participate in any recursive calls, then they are not removed from graph $G$. In Line \ref{line:ldd-check-diam} the algorithm checks that all vertices which weren't removed have weak diameter $\leq D$. If the check fails then the algorithm sets $\esep \leftarrow E(G)$, in which case the Lemma trivially holds because all SCCs of $G \setminus \esep$ will be singletons.
\end{proof}

\subsection*{Acknowledgement}
We thank Sepehr Assadi, Joakim Blikstad, Thatchaphol Saranurak, and Satish Rao for their comments on the early draft of the paper and the discussions. We thank Mohsen Ghaffari, Merav Parter, Satish Rao, and Goran Zuzic for pointing out some literature on low-diameter decomposition. A discussion with Mohsen Ghaffari led to some simplifications of our low-diameter decomposition algorithm.
We also thank Vikrant Ashvinkumar, Nairen Cao, Christoph Grunau, and Yonggang Jiang for pointing out an earlier error in the paper: in Phase 3, we originally ran $\ElimNeg$ on graph $(G^B_{\phi_2})_s$ and claimed that this graph is equivalent to $G^B_s$; but this is not true because the price function is applied at the wrong time in the order of operations. We thus instead run $\ElimNeg$ on $(G^B_{s})_{\phi_2}$, which is indeed equivalent to $G^B_s$. 

\paragraph{Funding:}
     This project has received funding from the European Research
        Council (ERC) under the European Union's Horizon 2020 research
        and innovation programme under grant agreement No
        715672. Nanongkai was also partially supported by the Swedish
        Research Council (Reg. No. 2019-05622).
        
        Wulff-Nilsen was supported by the Starting Grant 7027-00050B from the
Independent Research Fund Denmark under the Sapere Aude research career programme.

Bernstein is supported by NSF CAREER grant 1942010, a Sloan Fellowship, and the Google Research Scholar Program.

\bibliographystyle{alpha}
 \bibliography{refs}

\newpage
\begin{appendices}
\section{Proof of \Cref{lem:spaverage} ($\SPaverage$)}
\label{sec:app:spaverage}

\spaverageLemma*

\begin{algorithm2e}[h] 

\caption{Algorithm for $\SPaverage(G)$}
\label{alg:spaverage}

Set $d(s) \gets 0$ and $d(v) \gets \infty$ for $v \neq s$

Initialize priority queue $Q$ and add $s$ to $Q$.

Initially, every vertex is unmarked 
\label{line:BFDijkstra:init_unmark}

\BlankLine

\tcp*[h]{\textcolor{blue}{Dijkstra Phase}}

\While(\label{line:dijkstra-phase}){Q is non-empty}{

Let $v$ be the vertex in $Q$ with minimum $d(v)$\label{line:djisktra-define-v}

Mark $v$ \label{line:dijkstra-mark}

\ForEach{\label{line:dijkstra-for-loop}edge $(v,x) \in E \setminus \eneg(G)$}{

\If{$d(v) + w(v,x) < d(x)$\label{line:Dijkstra-CheckActive}}{

if $x\notin Q$, add $x$ to $Q$ \label{line:dijkstra-add-Q}

$d(x) \gets d(v) + w(v,x)$ and decrease-key$(Q, x, d(x))$\label{line:dijkstra-decrease-d}

} 
} 
Extract $v$ from $Q$\label{line:dijkstra-extract-v}
} 

\tcp*[h]{\textcolor{blue}{Bellman-Ford Phase}}

\ForEach(\label{line:BF-phase}){marked vertex $v$}{
\ForEach(\label{line:BF-relax}){edge $(v,x) \in E(G)$}{
\If{\label{line:BF-CheckActive}$d(v) + w(v,x) < d(x)$}{

if $x\notin Q$, add $x$ to $Q$ \label{line:BF-add-Q}

$d(x) \gets d(v) + w(v,x)$ and decrease-key$(Q, x, d(x))$\label{line:BF-decrease-d}

} 
} 

Unmark $v$ \label{line:BF-unmark}

} 

If $Q$ is empty: \Return $d(v)$ for each $v\in V$\label{line:iteration-terminate}\tcp*[f]{labels do not change so we have correct distances.}

Go to Line \ref{line:dijkstra-phase} \tcp*[f]{$Q$ is non-empty.}

\end{algorithm2e}

At a high level, we implement $\SPaverage(G,s)$ to compute distance estimates $d(v)$ so that at termination, $d(v) = \dist_{G}(s,v)$ for each $v\in V$;
we will then return $\phi(v) = \dist_{G}(s,v)$. 
The pseudocode is given in \Cref{alg:spaverage}. We use priority queue $Q$ that supports finding an element with a minimum key, extracting an element, and decreasing the key of some element $x$ to $k$ (decrease-key$(Q, x, k)$). It can be implemented as a binary heap to support each queue in $O(\log n)$ time. (This suffices for our needs, although there exist better implementations, such as the Fibonacci heap.) 
Note that, 
instead of extracting $v$ on \Cref{line:dijkstra-extract-v}, one can do so
at the same time when $v$ is marked on \Cref{line:dijkstra-mark}. 
Also, on \Cref{line:BF-relax}, it suffices to only consider edges $(v,x) \in \eneg(G)$. 
We instead consider all edges in $E(G)$ because this implementation simplifies some of the proofs.

\subsection{Correctness}

Throughout the execution of the algorithm, We define an edge $(v,x)$ of $G$ to be \emph{active} if $d(v) + w(v,x) < d(x)$ and \emph{inactive} otherwise. We define an edge $(v,x)$ for which $d(v)=d(x)=\infty$ to be inactive.

\begin{invariant}\label{inv:DijkstraBF:in Q or marked}
            At any point during the execution of Algorithm \ref{alg:spaverage}, if an edge $(v,x)$ is active, then $v$ is in $Q$ or marked.  If $(v,x)$ is active and $(v, x)\in E\setminus \eneg(G)$, then $v$ is in $Q$. 
\end{invariant}
\begin{proof}
The statements hold initially: For all $v\in V\setminus \{s\}$, any edge $(v,x)$ is inactive since $d(v)=\infty$ and $s$ is in $Q.$ The statements are affected when we (i) extract a vertex $v$ from $Q$ (\Cref{line:dijkstra-extract-v}), (ii) unmark a vertex $v$ (\Cref{line:BF-unmark}), or (iii) decrease $d(x)$ for some vertex $x$ (\Cref{line:dijkstra-decrease-d,line:BF-decrease-d}), potentially making some inactive edges $(x,y)$ active. 

For (i), when a vertex $v$ is extracted from $Q$ on \Cref{line:dijkstra-extract-v}, it was already marked on \Cref{line:dijkstra-mark}, so the first statement still holds. Also, all edges $(v, x)\in E\setminus\eneg(G)$ were made inactive using the for-loop (Lines \ref{line:dijkstra-for-loop}-\ref{line:dijkstra-decrease-d}, by setting $d(x)\gets \min\{d(x), d(v) + w(v,x)\}$); thus, the second statement still holds.  
For (ii), before we unmark $v$ on \Cref{line:BF-unmark}, we make all edges $(v, x)$ inactive using the previous for-loop (by setting $d(x)\gets \min\{d(x), d(v) + w(v,x)\}$); thus, the statements still hold right after we unmark $v$. 
For (iii), we add $x$ to $Q$ before decreasing $d(x)$ on \Cref{line:dijkstra-decrease-d,line:BF-decrease-d};
thus the statements still hold after \Cref{line:dijkstra-decrease-d,line:BF-decrease-d} are executed. 
\end{proof}

Observe that the algorithm terminates only when $Q$ is empty and every vertex is unmarked (every vertex is unmarked after the Bellman-Ford Phase). So, we have:

\begin{corollary}
        When the algorithm terminates, all edges are inactive. 
\end{corollary}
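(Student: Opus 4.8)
The plan is to obtain the corollary as an immediate consequence of Invariant~\ref{inv:DijkstraBF:in Q or marked} together with a short inspection of the termination condition. First I would pin down \emph{when} \Cref{alg:spaverage} can terminate: the only return statement is on \Cref{line:iteration-terminate}, and it is executed exactly when, at the end of a Bellman--Ford Phase, the queue $Q$ is empty; otherwise the algorithm jumps back to \Cref{line:dijkstra-phase}. Hence at the moment of termination $Q$ is empty.

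Second, I would argue that at that same moment no vertex is marked. A vertex becomes marked only on \Cref{line:dijkstra-mark} inside the Dijkstra Phase, and the Bellman--Ford Phase that immediately follows iterates over \emph{every} marked vertex and unmarks it on \Cref{line:BF-unmark}. Since (by the previous paragraph) termination occurs precisely at the end of such a Bellman--Ford Phase, every vertex is unmarked at that point.

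Finally, I would apply the contrapositive of Invariant~\ref{inv:DijkstraBF:in Q or marked}: the invariant guarantees that whenever an edge $(v,x)$ is active, its tail $v$ is in $Q$ or is marked. At termination no vertex is in $Q$ and no vertex is marked, so no edge can be active; equivalently, every edge is inactive. There is essentially no obstacle here---the statement is a one-line corollary of the invariant---and the only point requiring care is the bookkeeping of the marking/unmarking, i.e.\ confirming that reaching \Cref{line:iteration-terminate} really does coincide with a global state in which both $Q$ is empty \emph{and} all vertices are unmarked.
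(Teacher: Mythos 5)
Your proposal is correct and matches the paper's argument: the paper also observes that termination occurs exactly when $Q$ is empty at the end of a Bellman--Ford phase (hence all vertices unmarked), and then invokes Invariant~\ref{inv:DijkstraBF:in Q or marked} to conclude no edge can be active. You have simply spelled out the same one-line observation in more detail.
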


For every inactive edge $(v,x)$, $d(v)+w(v,x)-d(x)\geq 0$. 
Thus, $d$ is the desired price function when the algorithm terminates.

\subsection{Running time}

The correctness analysis already implies that if there is a negative cycle then the algorithm never terminates (in particular, it is easy to check that if a graph has a negative-weight cycle, then for any labels $d$ on the vertices there is always contains at least one inactive edge.) So, we assume that there is no negative cycle in this section. 
We analyze each iteration of the Dijkstra and Bellman-Ford phases. The initial iteration is referred to as iteration $0$, the next iteration is iteration $1$, and so on. To simplify notation, we let $\dist(v)$ and $\eta(v)$ refer to $\dist_G(v)$ and $\eta_{G}(v; s)$ for every vertex $v$. The key to the runtime analysis is the following lemma. 

\begin{lemma}\label{thm:DijkstraBF key runtim}
After iteration $i$ of the Dijkstra phase, $d(v)=\dist(s,v)$ for every $v$ where $\eta(v)\leq i$. 
\end{lemma}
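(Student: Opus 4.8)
The plan is to prove Lemma~\ref{thm:DijkstraBF key runtim} by induction on $i$, since this is the natural structure: iteration $i$ of the Dijkstra phase builds on the distance estimates established by iteration $i-1$. The base case $i=0$ requires showing that after the very first Dijkstra phase, $d(v) = \dist(s,v)$ for every $v$ with $\eta(v) = 0$; such a $v$ has a shortest $sv$-path $P$ using no negative edges, so $P$ lies entirely in $G \setminus \eneg(G)$, and the first Dijkstra phase is just ordinary Dijkstra on the nonnegative subgraph started from $s$ (with $d(s)=0$), which correctly computes shortest-path distances along such paths. One subtlety: $d$-values of other vertices could a priori interfere, but since all edges relaxed in the Dijkstra phase are nonnegative and $d(s)=0$ initially, a standard Dijkstra invariant argument shows $d(v)$ never drops below $\dist(s,v)$ and reaches it for every $v$ reachable via a nonnegative path.

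For the inductive step, assume the claim holds after iteration $i-1$: every $v$ with $\eta(v) \le i-1$ has $d(v) = \dist(s,v)$ \emph{before} iteration $i$ begins, and crucially this value never changes afterward (since $d$ is monotonically non-increasing and can never go below $\dist(s,v)$ — this monotonicity-plus-lower-bound fact should be isolated as a preliminary observation). Now take $v$ with $\eta(v) \le i$, and let $P = P_G(v)$ be a shortest $sv$-path with exactly $\eta(v)$ negative edges. Write $P$ as $s \rightsquigarrow \cdots \rightsquigarrow v$ and let $(a,b)$ be the last negative edge on $P$ (if $\eta(v)=0$ we are already done by the previous iteration). Then the prefix of $P$ up to $a$ is a shortest $sa$-path with at most $\eta(v)-1 \le i-1$ negative edges, so $\eta(a) \le i-1$, hence $d(a) = \dist(s,a)$ holds before iteration $i$. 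When the Bellman-Ford phase of iteration $i-1$ processes the marked vertex $a$ (it is marked because it was extracted from $Q$ during that iteration's Dijkstra phase — here I would invoke Invariant~\ref{inv:DijkstraBF:in Q or marked} or argue directly that any vertex with finite, correct $d$-value gets marked), it relaxes edge $(a,b)$, setting $d(b) \le d(a) + w(a,b) = \dist(s,a) + w(a,b) = \dist(s,b)$ along $P$, and then $b$ is added to $Q$. Then the suffix of $P$ from $b$ to $v$ uses no negative edges, so the Dijkstra phase of iteration $i$, starting with $b \in Q$ having the correct $d(b)$, propagates correct distances along this nonnegative suffix and achieves $d(v) = \dist(s,v)$.

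I would structure the write-up with two preliminary observations stated first: (1) $d(v)$ is non-increasing over time and always satisfies $d(v) \ge \dist(s,v)$, so once $d(v) = \dist(s,v)$ it stays there; and (2) every vertex whose $d$-value becomes finite during a Dijkstra phase is marked during that phase (so its outgoing negative edges get relaxed in the subsequent Bellman-Ford phase). With these in hand the induction goes through cleanly. The main obstacle I anticipate is the bookkeeping around \emph{when} exactly edge relaxations happen relative to marking/unmarking and the phase boundaries — specifically making rigorous that the last negative edge $(a,b)$ of $P$ is relaxed \emph{after} $d(a)$ has reached its correct value and \emph{before} the Dijkstra phase of iteration $i$ that needs $d(b)$. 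This requires carefully tracking that vertex $a$, having correct finite $d(a)$ by the end of iteration $i-1$'s Dijkstra phase, is marked at that point, and hence its edge $(a,b)$ is relaxed in iteration $i-1$'s Bellman-Ford phase; this is exactly where Invariant~\ref{inv:DijkstraBF:in Q or marked} (or a direct variant of it) does the heavy lifting. A secondary point needing care is that the suffix argument must use that $d(b)$ is not subsequently \emph{increased} — which is immediate from Observation (1) — before iteration $i$'s Dijkstra phase consumes it.
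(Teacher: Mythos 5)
Your overall structure is the same as the paper's: induction on $i$, base case by standard Dijkstra on the nonnegative subgraph, and for the inductive step, locate the last negative edge $(a,b)$ on a shortest $sv$-path with $\eta(v)$ negative edges, apply the induction hypothesis to $a$, argue that the Bellman-Ford phase of iteration $i-1$ relaxes $(a,b)$, and then let the Dijkstra phase of iteration $i$ finish off the nonnegative suffix. This is exactly the paper's decomposition (where your $(a,b)$ is their $(u_{j-1},u_j)$). You also correctly isolate the two auxiliary facts needed: $d$ is non-increasing and bounded below by $\dist(s,\cdot)$, and something about markings mediating which edges get relaxed.

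However, the two places where you flag "here is where the heavy lifting happens" contain claims that are false as stated. First, you assert that $a$ is marked at the end of iteration $i-1$'s Dijkstra phase, and more generally that "any vertex with finite, correct $d$-value gets marked." This is not true: if $\eta(a) = 1$ and $i = 5$, then $a$ reached its correct value back in iteration $1$, was unmarked by iteration $1$'s Bellman-Ford phase, and is never touched again, so $a$ is not marked during iteration $4$'s Bellman-Ford phase. Invariant~\ref{inv:DijkstraBF:in Q or marked} does not rescue this claim either -- it only says active edges have tails that are marked or in $Q$, not that vertices with correct values are marked. The paper handles this with a case split: after the Dijkstra phase of iteration $i-1$, $Q$ is empty, so either $d(b)\leq d(a)+w(a,b)$ already holds (edge inactive, nothing to do), or the edge is active and the invariant then forces $a$ to be marked, so the ensuing Bellman-Ford phase relaxes $(a,b)$. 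Either way the inequality holds after that Bellman-Ford phase. Second, you claim $b\in Q$ at the start of iteration $i$'s Dijkstra phase and argue by "Dijkstra propagating from $b$." Again this need not hold: if $d(b)$ reached its correct value in an earlier iteration, $b$ is not in $Q$. The paper sidesteps this by applying the invariant a second time: after iteration $i$'s Dijkstra phase, $Q$ is empty, so every nonnegative edge is inactive, and chaining the inactivity inequalities telescopically along the suffix from $b$ to $v$ gives $d(v)\leq d(b)+w(\text{suffix})\leq \dist(s,v)$. Your plan is fixable and you are pointing at the right tool, but both places where you say the invariant "does the heavy lifting" need to be replaced by the disjunction/telescoping argument rather than the direct assertions you wrote down.
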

\begin{proof}
Iteration 0 of the Dijkstra phase is simply the standard Dijkstra's algorithm on the subgraphs induced by $E\setminus \eneg(G)$; thus, the statement holds for $i=0$ by the standard analysis. We complete the proof by induction.

Fix any $i\geq 1.$
Let $v$ be any vertex such that $\eta(v)=i$. Let $P=(u_0=s, u_1, u_2, \ldots, u_k=v)$ be the shortest $sv$-path with $i$ negative-weight edges. Let $u_j$ is the first vertex in $P$ with $\eta(u)=i$; i.e., we define $u_j$ to be the vertex in $P$ such that $(u_{j-1}, u_j)\in \eneg(G)$ and the subpath $P'$ of $P$ between $u_j$ and $v$ contains no edge in $\eneg(G).$

Consider when we just finished iteration $i-1$ of the Dijkstra phase. By induction hypothesis, we can assume that $d(u_{j-1})=\dist(s,u_{j-1})$ (since $\eta(u_{j-1})\leq i-1$). Observe that after the next execution of the Bellman-Ford phase (i.e. iteration $i-1$),
\begin{align}
d(u_j)&\leq d(u_{j-1})+w(u_{j-1}, u_j)  \label{eq:BFDijkstra:time 2}
\end{align}
This is because, if this does not hold before the Bellman-Ford phase, then $u_{j-1}$ must be marked (by \Cref{inv:DijkstraBF:in Q or marked})\footnote{Note that $Q$ is empty after the Dijkstra phase; so, for every active $(x,y),$ vertex $x$ must be marked after the Dijkstra phase.}, 
and $d(u_j)$ will be set to a value of at most $d(u_{j-1})+w(u_{j-1}, u_j)$ on \Cref{line:BF-decrease-d}. From \eqref{eq:BFDijkstra:time 2}, we can conclude that after iteration $i$ of the Bellman-Ford phase, 
\begin{align}
d(u_j)&\leq d(u_{j-1})+w(u_{j-1}, u_j) = \dist(s, u_j).\label{eq:BFDijkstra:time 1}
\end{align}
Now consider when we finish iteration $i$ of the Dijkstra phase. Since $Q$ is empty and by \Cref{inv:DijkstraBF:in Q or marked}, all edges $(u_j, u_{j+1}), (u_{j+1}, u_{j+2}), \ldots, (u_{k-1}, u_k=v)$ are inactive (recall that they are all in $E\setminus \eneg(G)$ by definition). Thus, 
\[d(v)\leq d(u_j)+w(u_j, u_{j+1})+ w(u_{j+1}, u_{j+2}) + \ldots + w(u_{k-1}, v) \stackrel{\eqref{eq:BFDijkstra:time 1}}{=} \dist(s, v)\] 
It is easy to see that our algorithm always satisfies the invariant $d(v)\geq \dist(s,v)$. Thus $d(v)= \dist(s,v)$ as claimed.
\end{proof}

\Cref{inv:BFDijkstra:marked <= Q} below is needed for proving \Cref{thm:BFDijkstra:marked vertices not add to Q}.

\begin{invariant}\label{inv:BFDijkstra:marked <= Q}
The following always holds during the Dijkstra Phase: For any marked vertex $v$ and any $x\in Q$, $d(v)\leq d(x)$. 
\end{invariant}
\begin{proof}
The invariant trivially holds when the Dijkstra phase begins (\Cref{line:dijkstra-phase}) because there is no marked vertex (the Bellman-Ford phase unmarks all vertices).
The invariant can be affected when 
{\bf (i)} a new vertex $v$ is marked (\Cref{line:dijkstra-mark}), 
{\bf (ii)} a new vertex is added to $Q$ (\Cref{line:dijkstra-add-Q}), or
{\bf (iii)} $d(x)$ decreases for $x\in Q$ (\Cref{line:dijkstra-decrease-d}). 

For {\bf (i)}, marking $v$ on \Cref{line:dijkstra-mark} does not violate the invariant because, by definition, $v$ has the lowest $d(v)$ among vertices in $Q$. 
For {\bf (ii)} (\Cref{line:dijkstra-add-Q}), note that $d(x)>d(v)$ by the if-condition on \Cref{line:Dijkstra-CheckActive} (since $w(v,x)\geq 0$). Moreover, since $v\in Q$, we can conclude by induction that $d(u)\leq d(v)$ for every marked $u$; thus, $d(u)\leq d(v)<d(x)$ and so adding $x$ to $Q$ on \Cref{line:dijkstra-add-Q} does not violate the invariant.
Similarly, for {\bf (iii)} (\Cref{line:dijkstra-decrease-d}), the new value of $d(x)$ is such that $d(x)\geq d(v)$ (since $w(v,x)\geq 0$) and we can conclude by induction that $d(u)\leq d(v)$ for every marked $u$; so, decreasing $d(x)$ on \Cref{line:dijkstra-decrease-d} does not violate the invariant.
\end{proof}

\begin{lemma}\label{thm:BFDijkstra:marked vertices not add to Q}
When a vertex $x$ is added to $Q$ on \Cref{line:dijkstra-add-Q}, it is not marked. 
\end{lemma}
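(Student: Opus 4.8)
The plan is to argue by contradiction: assume that $x$ \emph{is} marked at the instant Line~\ref{line:dijkstra-add-Q} adds it to $Q$, and derive an impossible inequality. First I would pin down the exact state of the algorithm at that instant. Line~\ref{line:dijkstra-add-Q} is reached only from inside the for-loop of Lines~\ref{line:dijkstra-for-loop}--\ref{line:dijkstra-decrease-d}, while some edge $(v,x)\in E\setminus\eneg(G)$ is being relaxed, where $v$ is the vertex chosen on Line~\ref{line:djisktra-define-v} and marked on Line~\ref{line:dijkstra-mark}. The structural point I need here is that $v$ is still an element of $Q$ at this moment, because $v$ is only extracted on Line~\ref{line:dijkstra-extract-v}, which executes \emph{after} the for-loop over $v$'s outgoing edges has finished.

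Next I would combine two facts. On one hand, since control reaches Line~\ref{line:dijkstra-add-Q}, the guard on Line~\ref{line:Dijkstra-CheckActive} held, i.e. $d(v)+w(v,x)<d(x)$; and since $(v,x)\notin\eneg(G)$ we have $w(v,x)\ge 0$, so $d(x)>d(v)$. On the other hand, if $x$ is marked then \Cref{inv:BFDijkstra:marked <= Q}, applied with the marked vertex $x$ and the queue element $v\in Q$, yields $d(x)\le d(v)$. These two statements contradict each other, so $x$ cannot have been marked.

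The only real obstacle is the bookkeeping in the first step: one must be careful to justify that, at the moment Line~\ref{line:dijkstra-add-Q} fires, the vertex $v$ currently being processed genuinely still belongs to $Q$, so that \Cref{inv:BFDijkstra:marked <= Q} may legitimately be invoked with $v$ playing the role of the queue element. Once that is established, the remainder is a one-line inequality chase using the loop guard and nonnegativity of $w(v,x)$.
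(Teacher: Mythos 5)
Your proof is correct and is essentially the paper's own argument: derive $d(x)>d(v)$ from the loop guard and $w(v,x)\ge 0$, derive $d(x)\le d(v)$ from \Cref{inv:BFDijkstra:marked <= Q} with $x$ marked and $v\in Q$, and conclude by contradiction. Your extra remark—that $v$ is still in $Q$ when \Cref{line:dijkstra-add-Q} fires because extraction only happens on \Cref{line:dijkstra-extract-v}, after the for-loop—is a correct and worthwhile clarification of a step the paper leaves implicit.
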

\begin{proof}
Suppose for contradiction that there is a vertex $x$ that is marked when it is added to $Q$ on \Cref{line:dijkstra-add-Q}. This leads to the following contradiction:
{\bf (i)} $d(v)<d(x)$ (due to the if-condition on \Cref{line:Dijkstra-CheckActive}; note that $w(v,x)\geq 0$).
{\bf (ii)} $d(x)\leq d(v)$ (by \Cref{inv:BFDijkstra:marked <= Q}, where $x$ is marked and $v\in Q$). 
\end{proof}

Now we conclude the runtime analysis. 
Observe that if there are $N$ vertices added to $Q$ during the entire execution of the algorithm, then the algorithm's runtime is $O(N \log n)$: each vertex extracted from $Q$ contributes $O(\log n)$ time in the Dijkstra phase and $O(1)$ time in the Bellman-Ford phase; here, we use the fact that every vertex has constant out-degree. It thus suffices to bound $N$, the number of vertices added to $Q$. 

\Cref{thm:BFDijkstra:marked vertices not add to Q} implies that every vertex is added to $Q$ at most once in each iteration of the Dijkstra phase. It is clear from the pseudocode that every vertex is also added to $Q$ at most once in each iteration of the Bellman-Ford phase. 
\Cref{thm:DijkstraBF key runtim} implies that a vertex $v$ can be added to $Q$ only in iterations $0, 1, \ldots, \eta(v)$ of Dijkstra and Bellman-Ford phases; thus, a vertex $v$ can be added to $Q$ at most $2\eta(v)+2$ times, and the number of vertices added to $Q$ overall is $N\leq 2\sum_{v\in V} \eta(v)+2n.$ This implies the claimed runtime of $O(\log(n)\cdot (\sum_{v\in V} \eta(v)+n)).$

\section{Proof of \Cref{lem:almostDag} ($\FixAlmostDag$)}\label{sec:app:fixdag}

\begin{algorithm2e}
\caption{Algorithm for $\FixAlmostDag(G = (V,E,w), {\cal P}=\{V_1, V_2, \ldots\})$}\label{alg:SPAlmostDAG}

Relabel the sets $V_1, V_2, \ldots$ so that they are in topological order in $G$. That is, after relabeling, if $(u,v) \in E$, with $u \in V_i$ and $v \in V_j$, then $i \leq j$. 

Define $\mu_j = \min 
{\{ w(u,v) \mid (u,v)\in E^{neg}(G), u \notin V_j, v \in V_j \}} $; here, let $\min\{\emptyset\} = 0$.

\tcp*[f]{$\mu_j$ is min negative edge weight entering $V_j$, or 0 if no such edge exists.}

Define $M_1 \gets \mu_1 = 0$.

\For(\label{line:almostdag-for}\tcp*[f]{make edges into each $V_2, \ldots, V_q$ non-negative}){$j = 2$ to $q$}{
$M_j \gets M_{j-1} + \mu_j$ \tcp*{Note: $M_j = \sum_{k \leq j} \mu_k$} 

Define $\phi(v) \gets M_j$ for every $v\in V_j$ 
}

\Return $\phi$

\end{algorithm2e}

See \Cref{alg:SPAlmostDAG} for pseudocode. Let us first consider the running time. Note that computing each $\mu_j$ requires time proportional to $O(1)$ + [the number of edges in $E$ entering $V_j$]; since the $V_j$ are disjoint, the total time to compute all of the $\mu_j$ is $O(m + n)$. Similarly, it is easy to check that the for loop in Line \ref{line:almostdag-for} only considers each vertex once, so the total runtime of the loop is $O(n)$.

To prove correctness, we need to show that $w_\phi(u,v) \geq 0$ for all $(u,v) \in E$. Say that $u \in V_i$ and $v \in V_j$ and note that because the algorithm labels the sets in topological order, we must have $i < j$. Moreover, by definition of $\mu_j$ we have $\mu_j \leq w(u,v)$. Thus, we have $$w_\phi(u,v) = w(u,v) + \phi(u) - \phi(v) = w(u,v) + M_i - M_j = w(u,v) - \sum_{k=i+1}^j \mu_k \geq  w(u,v) - \mu_j \geq 0.$$

\section{Proof of \Cref{thm:result} via a Black-Box Reduction}
\label{sec:las-vegas}

\newcommand{\Gplus}{G_{nonneg}}

Recall from the preliminaries that $W_G$ is defined as the absolute value of the most negative edge weight in the graph. Formally, $W_G := \max \{ 2,-\min_{e \in E}\{w(e)\}  \}$. (We take the max with $2$ so that $\log(W_G)$ is well defined.)

In this section, we show how to obtain the Las Vegas algorithm of Theorem~\ref{thm:main-REALLY}. We start by showing how to use $\SPmain$ as a black box to obtain the following Monte Carlo algorithm. 
\begin{theorem}
\label{thm:main}
There exists a randomized algorithm $\SPMonteCarlo(\Gin,\sin)$ that takes $O(m\log^6(n)\log(W_{\Gin}))$ time for an $m$-edge input graph $\Gin$ and source $\sin$ and behaves as follows: 
\begin{itemize}[noitemsep]
    \item if $\Gin$ contains a negative-weight cycle, then the algorithm always returns an error message,
    \item
    if $\Gin$ contains {\em no} negative-weight cycle, then 
    the algorithm returns a shortest path tree from $\sin$ with high probability, and otherwise returns an error message.
\end{itemize}
Note that the algorithm always outputs either an error message, or a (correct) shortest path tree.
\end{theorem}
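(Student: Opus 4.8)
The plan is to obtain $\SPMonteCarlo$ from $\SPmain$ by wrapping it with two standard fixes: a \emph{timeout} to convert the ``never terminates on a negative-weight cycle'' behavior into an error message, and a \emph{verification} step to convert the Monte~Carlo output into something that is always either an error or a genuinely correct shortest path tree. First I would apply the reduction of Lemma~\ref{lem:assumption} so that we may assume the input graph $\gin$ satisfies Assumption~\ref{assum:main} (integral weights in $\{-1,0,1,\dots\}$, constant out-degree); this costs a factor $O(\log W_{\gin})$ and replaces $(m,n)$ by $(O(m),O(m))$. On such an instance $\SPmain(\gin,\sin)$ has \emph{expected} running time $\tspmain=O(m\log^5 n)$ when there is no negative-weight cycle (Theorem~\ref{thm:spmain}).

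The core step is: run $\SPmain(\gin,\sin)$ but abort it after $c\cdot\tspmain\cdot\log n$ time steps for a suitable constant $c$. If the run is aborted, output an error message. If it terminates, it returns a candidate shortest path tree $T$; I would then \emph{verify} $T$ deterministically in $O(m)$ time: compute the price function $\phi(v):=\dist_T(\sin,v)$ (the tree distances) and check that $w_\phi(u,v)=w(u,v)+\phi(u)-\phi(v)\ge 0$ for every edge $(u,v)\in E$ and that $\phi$ is finite on exactly the vertices reachable from $\sin$. By Lemma~\ref{lem:price-source} (and its converse, which is immediate since a valid potential certifies $\dist(\sin,v)\ge\phi(v)$ while the tree certifies $\dist(\sin,v)\le\phi(v)$), this check passes if and only if $T$ is a genuine shortest path tree and $\gin$ has no negative-weight cycle reachable from $\sin$. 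If the check fails, output an error message; otherwise output $T$. This guarantees the algorithm \emph{always} outputs either an error or a correct shortest path tree, regardless of the random coins, which is the key strengthening over $\SPmain$.

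It remains to verify the three claimed behaviors. If $\gin$ contains a negative-weight cycle, then $\SPmain$ never terminates (Theorem~\ref{thm:spmain}), so the timeout always fires and we always return an error. If $\gin$ contains no negative-weight cycle, then $\SPmain$ has expected running time $\tspmain$, so by Markov's inequality a single run exceeds $c\,\tspmain\log n$ time with probability at most $1/(c\log n)$; this is not yet high probability, so I would instead run $O(\log n)$ \emph{independent} copies (or, equivalently, use the standard ``restart after $2\tspmain$ steps'' amplification — restart the algorithm whenever it runs for more than $2\tspmain$ steps, up to $\Theta(\log n)$ attempts), so that with probability $\ge 1-n^{-\Omega(1)}$ at least one copy terminates within its budget; whenever a copy terminates, its output is correct by the verification step, so we return a correct shortest path tree with high probability, and otherwise an error. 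The total running time is $O(\log n)$ runs of $\SPmain$, each truncated at $O(\tspmain\log n)=O(m\log^6 n)$ steps, plus $O(m)$ verification per run, giving $O(m\log^6 n)$; including the $O(\log W_{\gin})$ factor from Lemma~\ref{lem:assumption} yields the claimed $O(m\log^6(n)\log(W_{\gin}))$ bound.

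The main obstacle is purely bookkeeping: making the timeout-plus-amplification argument clean so that the ``high probability'' guarantee holds with respect to the vertex count $n$ of the \emph{original} input (cf.\ Remark~\ref{remark:n}), and confirming that the $O(\log n)$ amplification factor — rather than something larger — suffices to drive the failure probability below $n^{-c}$; this is a routine Markov/independent-repetition computation. The verification step itself is entirely elementary and deterministic, so it introduces no new probabilistic subtleties.
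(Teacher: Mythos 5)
Your core idea is the same as the paper's: run $\SPmain$ several times with a timeout, use Markov's inequality on the expected runtime, and amplify to high probability with $\Theta(\log n)$ independent retries. Two remarks.

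First, the verification step is unnecessary. Theorem~\ref{thm:spmain} already guarantees that \emph{whenever} $\SPmain$ terminates it returns a correct shortest path tree (this is its output guarantee even before any probability enters the picture), so the ``always outputs either an error or a correct tree'' property follows immediately from truncation alone. Your deterministic potential-based check (computing $\phi(v)=\dist_T(\sin,v)$ and verifying $w_\phi\ge 0$ via Lemma~\ref{lem:price-source}) is sound, but it is extra machinery the argument does not need; the paper simply returns the output of whichever copy terminates.

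Second, your timeout accounting is off by a $\log n$ factor in the main text, though the parenthetical you offer is the right fix. If each copy is truncated at $c\,\tspmain\log n$ steps, the failure probability per copy is $O(1/\log n)$; running $\Theta(\log n)$ such copies \emph{does} make the overall failure probability $n^{-\Omega(1)}$, but the total time becomes $O(\log n)\cdot O(\tspmain\log n)=O(m\log^7 n)$, not $O(m\log^6 n)$ as you wrote (the sentence ``$O(\log n)$ runs ... each truncated at $O(\tspmain\log n)=O(m\log^6 n)$ steps ... giving $O(m\log^6 n)$'' is arithmetically inconsistent). The correct budget, which you mention in passing and which the paper uses, is to truncate each copy at $O(\tspmain)$ steps (e.g.\ $2\tspmain$): then Markov gives per-copy failure probability $\le 1/2$, $C\log n$ independent copies give failure probability $\le n^{-C}$, and the total time is $O(\log n)\cdot\tspmain = O(m\log^6 n)$, with the extra $O(\log W_{\gin})$ from Lemma~\ref{lem:assumption} giving the stated bound. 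Cleaning up this accounting, and dropping the redundant verification, reduces your argument to exactly the paper's.
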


\begin{proof}
Let $\tspmain$ be the expected runtime of $\SPmain$. The algorithm simply runs $C\log(n)$ versions of $\SPmain$ for some large constant $C$, where each version runs for $2\tspmain$ time steps. If any of the versions returns distances, then $\SPmain$ guarantees that these distances are correct, so we simply return them. If all the versions never terminate, then the algorithm also returns an error. To bound the failure probability, note that if the graph contains a negative-weight cycle then returning an error message is what the algorithm is supposed to do; if it does not contain a negative-weight cycle, then by Markov's inequality each version of $\SPmain$ terminates with probability at least $1/2$, so the probability that none of the versions terminate is at most $1/2^{C\log(n)} = 1/n^C$, as desired.
\end{proof}


The rest of this section presents a Las Vegas algorithm $\SPLasVegas(\Gin,\sin)$ that can return a negative cycle when one exists, and whose running time is $O(m\log^8(n))$ w.h.p when $\Gin$ satisfies the properties of Assumption \ref{assum:main}. For this, we require the following definition

\begin{definition}
Given any graph $G = (V,E,w)$ and any positive integer $B$, let $G^{+B} = (V,E,w^{+B})$ be the same as the graph $G$, except that for all $e \in E$ we have $w^{+B}(e) := w(e) + B$. (Note that unlike in $G^{B}$ of Definition \ref{def:GB}, here we add $B$ to \emph{all} edges, including the positive ones.)
\end{definition}

\paragraph{FindThresh}. The key subroutine of $\SPLasVegas$ is FindThresh, which computes the smallest integer $B\geq 0$ such that no negative cycles exist in $\Gin^{+B}$. This is done using the algorithm $\FindThresh$ of the following lemma. The proof of FindThresh is deferred to Section \ref{sec:FindThresh}.

\begin{lemma}\label{lemma:FindThresh}
Let $H$ be an $m$-edge $n$-vertex graph with integer weights and let $s\in V(H)$. Then there is an algorithm, $\FindThresh(H,s)$ which outputs an integer $B\geq 0$ such that w.h.p.,
\begin{itemize}
\item If $H$ has no negative cycles then $B = 0$, and
\item If $H$ has a negative cycle then $B > 0$, $H^{+(B-1)}$ contains a negative cycle, and $H^{+B}$ does not.
\end{itemize}
The running time of $\FindThresh(H,s)$ is $O(m\log^6(n)\log^2(W_H))$.
\end{lemma}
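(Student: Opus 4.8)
The plan is to reduce $\FindThresh$ to a binary search over $B$, where each step is a one-sided negative-cycle test obtained from $\SPMonteCarlo$ (\Cref{thm:main}). First I would build the test: run $\SPMonteCarlo(G,s)$ on a graph $G$ once and report ``$G$ has a negative cycle'' if it returns an error message, and ``$G$ has no negative cycle'' if it returns a shortest path tree. By \Cref{thm:main} this test has \emph{one-sided} error: if $G$ does contain a negative cycle the test always says so, and if $G$ contains none the test says so except with probability at most $n^{-C}$, where $C$ is a constant we may take as large as we wish. Moreover one invocation of the test runs in the \emph{deterministic} time bound $O(m\log^6(n)\log(W_G))$.

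Next I would set up the search parameters. For a cycle $C$ with $|C|$ edges, its weight in $H^{+B}$ is $w(C)+|C|\cdot B$, which is non-decreasing in $B\ge 0$; hence the property ``$H^{+B}$ has no negative cycle'' is monotone in $B$, so there is a threshold $B^\star$ with the property that $H^{+B}$ has a negative cycle exactly for the integers $0\le B<B^\star$. Two observations pin it down: $B^\star=0$ iff $H$ itself has no negative cycle (and $B^\star>0$ otherwise), and $B^\star\le W_H$, because for $B=W_H$ every edge satisfies $w^{+W_H}(e)=w(e)+W_H\ge -W_H+W_H=0$, so $H^{+W_H}$ has no negative edge at all. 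Finally, $W_{H^{+B}}\le W_H$ for every $B\ge 0$, so each invocation of the test on some $H^{+B}$ costs $O(m\log^6(n)\log(W_H))$.

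The algorithm then binary-searches for $B^\star$ over $\{0,1,\dots,W_H\}$, making $O(\log W_H)$ queries of the form ``does $H^{+B}$ have a negative cycle?'', each answered by the test above, and returning the smallest queried $B$ reported to have no negative cycle. For the analysis I would condition on the event that all $O(\log W_H)$ queries answer correctly: by one-sidedness the queries whose true answer is ``has a negative cycle'' never err, and the remaining ones err with probability at most $n^{-C}$ each, so a union bound gives failure probability $O(n^{-C}\log W_H)$, which is negligible once $C$ is chosen large enough (and in the mild regime where $\log W_H$ is super-polynomial in $n$ one simply amplifies each query by taking a few independent runs and reporting ``no negative cycle'' only if some run does, which preserves one-sidedness and does not change the asymptotic running time). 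Conditioned on this event the test agrees with the true monotone predicate on all queried points, so the search returns $B=B^\star$, and the two claimed properties are then immediate: if $H$ has no negative cycle then $B=0$, while if $H$ has a negative cycle then $B>0$, $H^{+(B-1)}$ has a negative cycle (as $B-1<B^\star$) and $H^{+B}$ does not (as $B=B^\star$). The running time is $O(\log W_H)$ tests times $O(m\log^6(n)\log(W_H))$ per test, i.e.\ $O(m\log^6(n)\log^2(W_H))$, a deterministic bound since $\SPMonteCarlo$ runs in deterministic time.

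The main obstacle I anticipate is the probability bookkeeping rather than anything conceptual: one must ensure that \emph{all} $\Theta(\log W_H)$ binary-search queries are simultaneously correct while staying within $O(m\log^6(n)\log^2(W_H))$ time, which is exactly why it matters that the test is one-sided --- no amplification is needed on the queries whose true answer is ``has a negative cycle'' --- and why one should be slightly careful about how large $\log W_H$ can be relative to $n$.
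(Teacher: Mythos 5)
Your proposal matches the paper's proof essentially exactly: the paper's $\FindThresh$ (Algorithm~\ref{alg:FindThresh}) is precisely a binary search over $B\in\{0,\dots,W_H\}$ using $\SPMonteCarlo$ as a one-sided negative-cycle test, with the same invariants (the predicate is monotone, $B^\star\le W_H$, each invocation costs $O(m\log^6(n)\log(W_H))$) and the same union bound over the $O(\log W_H)$ queries. Your extra remark on the regime where $\log W_H$ could be super-polynomial in $n$ is a reasonable caution; the paper dispenses with it by observing that $\FindThresh$ is only ever invoked on a graph with weights polynomial in $n$.
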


\begin{definition} If the high probability event of Lemma~\ref{lemma:FindThresh} holds, we refer to $B$ as the \emph{correct} value.
\end{definition}

\subsection{The Las Vegas algorithm}
We now present the algorithm $\SPLasVegas$ mentioned at the beginning of the section, using $\FindThresh$ as a black box.
Recall our assumption that $\Gin$ satisfies $w(e)\geq -1$ for each $e\in \Ein$. Pseudocode for $\SPLasVegas(\Gin,\sin)$ can be found in Algorithm~\ref{alg:SPLasVegas}. For intuition when reading the pseudocode, note that we will show that w.h.p. none of the restart events occur.

\begin{algorithm2e}
\caption{Algorithm $\SPLasVegas(\Gin,\sin)$}\label{alg:SPLasVegas}
Let $G'$ be $\Gin$ with every edge weight multiplied by $n^3$\label{line:GPrime}

$B\gets \FindThresh(G',\sin)$

\If{$B = 0$}
{
  \lIf{$\SPMonteCarlo(\Gin,\sin)$ returns error} {restart $\SPLasVegas(\Gin,\sin)$}\label{line:Restart1}

  Let $T$ be the tree output by $\SPMonteCarlo(\Gin,\sin)$


  \Return {$T$}\label{line:SSSPTreeReturned}
}

\lIf{$\SPMonteCarlo((G')^{+B},\sin)$ returns error} {restart $\SPLasVegas(\Gin,\sin)$}\label{line:Restart3}

Let $\phi(v) = \dist_{(G')^{+B}}(\sin,v)$ for all $v\in V$ be obtained from the tree output by $\SPMonteCarlo((G')^{+B},\sin)$\label{line:PhiFromSPMonteCarlo}

$\Gplus \gets ((G')^{+B})_{\phi}$\label{line:GPlus} \tcp*[f]{(Lemma \ref{lem:price-equivalent}) edge-weights in $\Gplus$ are non-negative}

Obtain the subgraph $G_{{}\leq n}$ of $\Gplus$ consisting of edges of weight at most $n$

\lIf {$G_{{}\leq n}$ is acyclic} {restart $\SPLasVegas(\Gin,\sin)$}\label{line:Restart4}

Let $C$ be an \emph{arbitrary} cycle of $G_{{}\leq n}$ 

\lIf {$C$ is not negative in $\Gin$} {restart $\SPLasVegas(\Gin,\sin)$}\label{line:Restart5}

\Return {$C$}\label{line:CycleReturned}
\end{algorithm2e}

Correctness of $\SPLasVegas(\Gin,\sin)$ is trivial as the algorithm explicitly checks that its output is correct just prior to halting:
\begin{lemma}
If $\SPLasVegas(\Gin,\sin)$ outputs a cycle, that cycle is negative in $\Gin$. If $\SPLasVegas(\Gin,\sin)$ outputs a tree, that tree is a shortest path tree from $\sin$ in $\Gin$.
\end{lemma}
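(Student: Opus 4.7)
The plan is to observe that correctness follows immediately from two explicit safeguards embedded in Algorithm~\ref{alg:SPLasVegas}, which prevent any incorrect output from ever being returned (at the cost of possibly restarting instead). Because the lemma has no running-time claim and makes no probabilistic assertion, there is nothing to estimate or bound; we only have to trace the two exit lines of the pseudocode and verify that whatever is returned has already been certified correct.

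First I would handle the tree case. Suppose $\SPLasVegas(\Gin,\sin)$ outputs a tree $T$. Inspecting the pseudocode, the only exit point that returns a tree is Line~\ref{line:SSSPTreeReturned}, which is reached only after the conditional on Line~\ref{line:Restart1} has passed without triggering a restart, i.e.\ only when the invocation $\SPMonteCarlo(\Gin,\sin)$ did \emph{not} return an error message. By Theorem~\ref{thm:main}, whenever $\SPMonteCarlo(\Gin,\sin)$ does not return an error it returns a genuine shortest path tree from $\sin$ in $\Gin$ (the ``always outputs either an error message, or a (correct) shortest path tree'' clause). Hence $T$ is a valid shortest path tree in $\Gin$, as required.

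Next I would handle the cycle case. The only line that returns a cycle is Line~\ref{line:CycleReturned}, which is reached only after the check on Line~\ref{line:Restart5} has passed, i.e.\ only after the algorithm has explicitly verified that the candidate cycle $C$ is negative when its edges are weighted according to the original input graph $\Gin$. Thus $C$ is a negative-weight cycle in $\Gin$, which is exactly what the lemma demands.

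The main (and only) obstacle is simply to confirm that no other lines of the algorithm emit an output: every other line either performs internal computation, restarts the whole algorithm, or invokes a subroutine whose output is consumed internally. Since both output lines are gated by a correctness check (Theorem~\ref{thm:main} for the tree, and an explicit cycle-weight test for the cycle), no incorrect output can escape, proving the lemma.
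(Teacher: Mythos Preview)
Your proposal is correct and takes essentially the same approach as the paper: the paper's entire proof is the one-line observation that ``the algorithm explicitly checks that its output is correct just prior to halting,'' and you simply spell out the two exit points (Lines~\ref{line:SSSPTreeReturned} and~\ref{line:CycleReturned}) and the corresponding checks (Theorem~\ref{thm:main} for the tree, the explicit test in Line~\ref{line:Restart5} for the cycle). Nothing is missing.
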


\subsection{Running time}
Since all edge weights in $\Gin$ are $\geq -1$ (Assumption \ref{assum:main}), we have $W_{G'}\leq n^3$. It follows that if there is no restart of the algorithm then by Theorem~\ref{thm:main} and Lemma~\ref{lemma:FindThresh}, its running time is $O(m\log^8(n))$ where the call to $\FindThresh(G',\sin)$ dominates. It remains to show that w.h.p., no restart occurs.

When $B = 0$, Lemma~\ref{lemma:FindThresh} implies that w.h.p., $G'$ and hence $\Gin$ has no negative cycles. By Theorem~\ref{thm:main}, w.h.p.~no restart occurs in Line~\ref{line:Restart1}.

Similarly, when $B > 0$, Lemma~\ref{lemma:FindThresh} implies that w.h.p., $(G')^{+B}$ has no negative cycles, so by Theorem~\ref{thm:main}, w.h.p.~no restart occurs in Line~\ref{line:Restart3}.

To show that w.h.p., no restart occurs in Line~\ref{line:Restart4}, we need Corollary~\ref{corollary:NegCycleInSubgraph} below which shows that $G_{{}\leq n}$ preserves all negative cycles from $(G')^{+(B-1)}$.
\begin{lemma}\label{lemma:NegCycleImpliesSmallWeightCycle}
If Line~\ref{line:PhiFromSPMonteCarlo} is reached and $(G')^{+(B-1)}$ has a negative cycle then that cycle has weight less than $n$ in $\Gplus$.
\end{lemma}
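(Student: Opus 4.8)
The plan is to combine two elementary facts: price functions preserve the weight of every cycle (Lemma~\ref{lem:price-equivalent}), and passing from $(G')^{+(B-1)}$ to $(G')^{+B}$ raises the weight of a cycle by exactly its number of edges. First I would observe that Line~\ref{line:PhiFromSPMonteCarlo} is reached only if the algorithm did not already return in the $B=0$ branch, so $B\neq 0$, and since $\FindThresh$ always outputs $B\geq 0$ we have $B\geq 1$; thus $(G')^{+(B-1)}$ is a well-defined graph with integer weights. Moreover, reaching Line~\ref{line:PhiFromSPMonteCarlo} means $\SPMonteCarlo((G')^{+B},\sin)$ did not return an error (otherwise the algorithm would have restarted at Line~\ref{line:Restart3}), so $\phi$ is a well-defined integral price function and $\Gplus=((G')^{+B})_{\phi}$ is exactly the graph $(G')^{+B}$ with this price function applied.

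Next, let $C$ be any negative cycle of $(G')^{+(B-1)}$; without loss of generality $C$ is a simple cycle (a negative closed walk always contains a negative simple cycle), so $C$ has $|E(C)|\leq n$ edges. By the definition of the operator $G\mapsto G^{+B}$, every edge weight in $(G')^{+B}$ is exactly one larger than the corresponding weight in $(G')^{+(B-1)}$; summing over the $|E(C)|$ edges of $C$, the weight of $C$ in $(G')^{+B}$ equals its weight in $(G')^{+(B-1)}$ plus $|E(C)|$, which is strictly less than $0+n=n$ because $C$ is negative in $(G')^{+(B-1)}$.

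Finally, by Lemma~\ref{lem:price-equivalent} the price function $\phi$ does not change the weight of any cycle, so the weight of $C$ in $\Gplus=((G')^{+B})_{\phi}$ equals its weight in $(G')^{+B}$, which we just bounded by strictly less than $n$. I do not anticipate a real obstacle here: the only points needing a line of justification are that $B\geq 1$ (so that $(G')^{+(B-1)}$ is meaningful) and the reduction to a simple cycle (so that $|E(C)|\leq n$). Note that the $n^{3}$-scaling introduced in Line~\ref{line:GPrime} plays no role in this particular lemma; it is needed only later, to guarantee that the cycle eventually extracted from $G_{\leq n}$ is genuinely negative in $\Gin$.
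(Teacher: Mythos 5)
Your proof is correct and takes essentially the same route as the paper: go from $(G')^{+(B-1)}$ to $(G')^{+B}$ by adding $1$ per edge (adding at most $n$ total), then invoke Lemma~\ref{lem:price-equivalent} to transfer the bound to $\Gplus$. You are somewhat more careful than the paper's two-line argument — explicitly checking that $B\geq 1$ so that $(G')^{+(B-1)}$ is meaningful, and reducing to a simple cycle so that $|E(C)|\leq n$ — but these are refinements of the same argument, not a different approach.
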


\begin{proof} Edge-weights in $(G')^{+B}$ and $(G')^{+(B-1)}$ differ by at most $1$, so if a cycle is negative in $(G)^{+(B-1)}$ it must have weight at most $n$ in $(G')^{+B}$. The Lemma then follows from the fact that $\Gplus$ and $(G')^{+B}$ are equivalent, so by Lemma \ref{lem:price-equivalent}, the weight of any cycle is the same in  $\Gplus$ and in $(G')^{+B}$.
\end{proof}

\begin{corollary}\label{corollary:NegCycleInSubgraph}
If Line~\ref{line:PhiFromSPMonteCarlo} is reached then every negative cycle in $(G')^{+(B-1)}$ is a cycle in $G_{{}\leq n}$.
\end{corollary}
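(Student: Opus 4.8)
The plan is to derive the statement from two ingredients: (i) $\Gplus$ has only non-negative edge weights, which is exactly the assertion annotating Line~\ref{line:GPlus}, and (ii) Lemma~\ref{lemma:NegCycleImpliesSmallWeightCycle}, which bounds the \emph{total} weight in $\Gplus$ of a negative cycle of $(G')^{+(B-1)}$ by $n$. Together these force every edge of such a cycle to have weight at most $n$ in $\Gplus$, hence to survive in $G_{{}\leq n}$.

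First I would record why $\Gplus$ is non-negatively weighted. Reaching Line~\ref{line:PhiFromSPMonteCarlo} means the call $\SPMonteCarlo((G')^{+B},\sin)$ made on Line~\ref{line:Restart3} did \emph{not} return an error (otherwise $\SPLasVegas$ would have restarted there). By Theorem~\ref{thm:main}, $\SPMonteCarlo$ returns an error whenever its input contains a negative-weight cycle; hence $(G')^{+B}$ contains no negative-weight cycle, the distances $\phi(v)=\dist_{(G')^{+B}}(\sin,v)$ read off the returned tree on Line~\ref{line:PhiFromSPMonteCarlo} are finite (using that $\sin$ reaches every vertex), and so Lemma~\ref{lem:price-source} applies to $(G')^{+B}$ together with $\phi$: every edge weight of $\Gplus=((G')^{+B})_{\phi}$ is non-negative.

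Next I would take an arbitrary negative-weight cycle $C$ of $(G')^{+(B-1)}$. Since $(G')^{+(B-1)}$, $(G')^{+B}$ and $\Gplus$ all have the same vertex set and the same edge set (adding a constant to edge weights, and then applying a price function, changes neither), $C$ is also a cycle of $\Gplus$. By Lemma~\ref{lemma:NegCycleImpliesSmallWeightCycle}, the total weight of $C$ in $\Gplus$ is less than $n$. As each edge of $C$ has non-negative weight in $\Gplus$, each individual edge of $C$ has weight at most that total, hence less than $n$ and in particular at most $n$; thus every edge of $C$ belongs to the subgraph $G_{{}\leq n}$. Since $G_{{}\leq n}$ retains all vertices of $\Gplus$, the cycle $C$ is present in $G_{{}\leq n}$, which is the claim.

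I do not expect a genuine obstacle here: the argument is essentially a one-line consequence of non-negativity of $\Gplus$ plus Lemma~\ref{lemma:NegCycleImpliesSmallWeightCycle}. The only point that deserves a moment of care is the finiteness of $\phi$ on all of $V$, so that $\Gplus$ really is a (finitely) non-negatively weighted graph and Lemma~\ref{lem:price-source} is applicable; this is where the standing assumption that $\sin$ can reach every vertex is used.
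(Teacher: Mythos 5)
Your proposal is correct and follows essentially the same argument as the paper: invoke Lemma~\ref{lemma:NegCycleImpliesSmallWeightCycle} to bound the total cycle weight in $\Gplus$ by $n$, use non-negativity of $\Gplus$ (which the paper asserts at Line~\ref{line:GPlus}) to conclude each edge has weight at most $n$, and hence the cycle survives in $G_{\leq n}$. You simply spell out more carefully than the paper why $\Gplus$ is non-negatively weighted (via the non-error return of $\SPMonteCarlo$ and Lemma~\ref{lem:price-source}), which the paper takes as already established.
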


\begin{proof}
Consider any negative cycle $C$ in $(G')^{+(B-1)}$. By the preceding lemma, we know that $C$ has weight $\leq n$ in $\Gplus$. We also know that all edge-weights of $\Gplus$ are non-negative. This implies the corollary. 
\end{proof}

If a restart occurs in Line~\ref{line:Restart4} then we simultaneously have that $B > 0$ and that $(G')^{+(B-1)}$ has no negative cycles by Corollary~\ref{corollary:NegCycleInSubgraph}, which means that the value of $B$ is not correct; Thus, by Lemma~\ref{lemma:FindThresh}, w.h.p.~no restart occurs in Line~\ref{line:Restart4}.

Finally, Corollary~\ref{corollary:CycleInSubgraphIsNegInG} below implies that w.h.p., no restart occurs in Line~\ref{line:Restart5}.

\begin{lemma}\label{lemma:FindThreshOutput}
If $\FindThresh(G',\sin)$ outputs a correct value $B > 0$ then $B\geq n^2$.
\end{lemma}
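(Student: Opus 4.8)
The plan is to trace through how the "correct" value $B$ is defined and use the fact that we scaled up edge weights by $n^3$ in Line~\ref{line:GPrime}. Recall that $G'$ is $\Gin$ with every edge weight multiplied by $n^3$, and that $\FindThresh(G',\sin)$ outputs a correct $B > 0$ only when $G'$ has a negative cycle, in which case $(G')^{+(B-1)}$ has a negative cycle but $(G')^{+B}$ does not. So let $C$ be a negative cycle in $(G')^{+(B-1)}$; I want to lower-bound $B$ using the structure of such a cycle.

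First I would observe that every edge weight in $G'$ is an integer multiple of $n^3$, so for any cycle $C$ in $G'$ the weight $w_{G'}(C)$ is a multiple of $n^3$; since $C$ is a cycle it has at most $n$ edges. Because $C$ is negative in $(G')^{+(B-1)}$ we have $w_{G'}(C) + (B-1)|C| < 0$, i.e. $w_{G'}(C) < -(B-1)|C| \leq 0$, so $w_{G'}(C)$ is a negative multiple of $n^3$ and hence $w_{G'}(C) \leq -n^3$. On the other hand, $C$ is \emph{not} negative in $(G')^{+B}$ (since $(G')^{+B}$ has no negative cycle at all), so $w_{G'}(C) + B|C| \geq 0$, giving $B \geq -w_{G'}(C)/|C| \geq n^3/|C| \geq n^3/n = n^2$. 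This is exactly the claim.

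The only subtlety — and the one step worth stating carefully rather than the calculation itself — is that $(G')^{+B}$ genuinely has no negative cycle when $B$ is the correct value: this is precisely the second bullet of Lemma~\ref{lemma:FindThresh} (``$H^{+B}$ does not [contain a negative cycle]''), applied with $H = G'$. Everything else is the elementary observation that scaling weights by $n^3$ forces every cycle weight in $G'$ to be a multiple of $n^3$, combined with $|C| \leq n$. I do not expect any real obstacle here; the lemma is a short bookkeeping argument whose whole purpose is to guarantee that the threshold $B$ is large enough that the $+1$-scale gap between $(G')^{+(B-1)}$ and $(G')^{+B}$ cannot hide or distort a negative cycle, which is what makes the check in Line~\ref{line:Restart5} safe.
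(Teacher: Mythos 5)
Your proof is correct and takes essentially the same route as the paper: you exhibit a negative cycle $C$, use the fact that edge weights in $G'$ are multiples of $n^3$ to conclude $w_{G'}(C)\leq -n^3$, and combine this with $(G')^{+B}$ having no negative cycle and $|C|\leq n$ to get $B\geq n^2$. The only cosmetic difference is that you obtain $C$ via the negative cycle in $(G')^{+(B-1)}$, while the paper takes a negative cycle of $G'$ directly, but the calculation is the same.
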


\begin{proof}
Since we are assuming that $B > 0$ is correct, it must be the case that $G'$ has some negative-weight cycle $C$, but $(G')^{+B}$ does not. Since all edge-weights in $G'$ are integer multiples of $n^3$, $w_{G'}(C) < 0$ implies that $w_{G'}(C) \leq -n^3$. We know that $w_{(G')^{+B}}(C) \geq 0$, but also by definition of $(G')^{+B}$, we have that $w_{(G')^{+B}}(C) = w_{G'}(C) + B|C| \leq  w_{G'}(C) + Bn$. Combining these inequalities yields $Bn \geq n^3$, so $B \geq n^2$.

\end{proof}

\begin{corollary}\label{corollary:CycleInSubgraphIsNegInG}
If $\FindThresh(G',\sin)$ outputs a correct value $B > 0$, every cycle in $G_{{}\leq n}$ is a negative cycle in $\Gin$.
\end{corollary}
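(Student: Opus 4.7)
\textbf{Proof plan for Corollary \ref{corollary:CycleInSubgraphIsNegInG}.}
The plan is to chase the weight of an arbitrary cycle $C$ in $G_{{}\leq n}$ through the sequence of transformations $\Gin \to G' \to (G')^{+B} \to \Gplus$, and show that it ends up strictly negative in $\Gin$ using the gap between $B \geq n^2$ (given by Lemma \ref{lemma:FindThreshOutput}) and the at-most-$n$ edge weight threshold defining $G_{{}\leq n}$.

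First, since every edge of $C$ lies in $G_{{}\leq n}$, each edge has weight at most $n$ in $\Gplus$, so $w_{\Gplus}(C) \leq n \cdot |C|$, where $|C|$ is the number of edges on the cycle. Next, because $\Gplus = ((G')^{+B})_{\phi}$ is obtained from $(G')^{+B}$ via a price function, Lemma~\ref{lem:price-equivalent} gives that cycle weights are preserved: $w_{\Gplus}(C) = w_{(G')^{+B}}(C)$. By the definition of $(\cdot)^{+B}$, adding $B$ to every edge weight contributes exactly $B \cdot |C|$ to the total cycle weight, so
\[
w_{G'}(C) \;=\; w_{(G')^{+B}}(C) - B \cdot |C| \;\leq\; n \cdot |C| - B \cdot |C| \;=\; (n-B)\cdot |C|.
\]

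Now I invoke Lemma \ref{lemma:FindThreshOutput}, which tells us that the correct value $B$ satisfies $B \geq n^2$, so in particular $n - B < 0$. Since $|C| \geq 1$ and $(n-B) < 0$, we have $(n-B)\cdot |C| \leq n - B \leq n - n^2 < 0$ (assuming $n \geq 2$, which we may freely assume). Therefore $w_{G'}(C) < 0$. Finally, recall from Line~\ref{line:GPrime} that $G'$ is obtained from $\Gin$ by multiplying every edge weight by $n^3$, so $w_{G'}(C) = n^3 \cdot w_{\Gin}(C)$, and hence $w_{\Gin}(C) < 0$, i.e., $C$ is a negative cycle in $\Gin$.

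The proof is essentially a one-line weight-chasing argument, so there is no real obstacle; the only subtlety is checking that the slack $B \geq n^2$ strictly dominates the per-edge budget $n$ in the definition of $G_{{}\leq n}$, and that equivalence under a price function preserves cycle weights (Lemma~\ref{lem:price-equivalent}) so that the argument can be transferred back from $\Gplus$ to $G'$ and finally to $\Gin$.
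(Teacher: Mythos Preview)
Your proof is correct and follows essentially the same approach as the paper's: bound the cycle weight in $\Gplus$, transfer it back to $G'$ via the price-function invariance of cycle weights and the definition of $(\cdot)^{+B}$, then invoke $B \geq n^2$ from Lemma~\ref{lemma:FindThreshOutput}. The only cosmetic difference is that you bound $w_{\Gplus}(C) \leq n\,|C|$ and conclude $(n-B)|C| < 0$, whereas the paper bounds $w_{G_{\leq n}}(C) \leq n^2$ (using $|C|\leq n$) together with $|C|\geq 2$ to get $n^2 - 2B < 0$; both arguments are equivalent.
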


\begin{proof}
Consider a cycle $C$ in $G_{\leq n}$. We have $w_{G_{\leq n}}(C) \leq n^2$ because every edge of $G_{\leq n}$ has weight $\leq n$. We now show that $C$ is a negative-weight cycle in $G'$, which also implies that it is negative in $\Gin$.
Observe that 
\[
w_{(G')^{+B}}(C)=
w_{((G')^{+B})_\phi}(C)=w_{G_{nonneg}}(C)=w_{G_{\leq n}}(C),\]
where 
the first equality is because price functions do not change weights of cycles, 
the second equality is because $G_{nonneg}=((G')^{+B})_\phi$ and the last equality is because  $G_{\leq n}$ is a subgraph of $G_{nonneg}$ with the same edge weights.
It follows that 
$$w_{G'}(C) = w_{(G')^{+B}}(C) - B|C| = w_{G_{\leq n}}(C) - B|C| \leq w_{G_{\leq n}}(C) - 2B \leq n^2-2n^2 < 0,$$ 
as desired. 
(In the inequalities, we use the facts that every cycle has at least two edges and $B \geq n^2$ by the previous lemmas.)

\end{proof}

A union bound over the constant number of restart-lines now shows that w.h.p., no restart occurs. We conclude that w.h.p., $\SPLasVegas(\Gin,\sin)$ runs in time $O(m\log^8(n))$.

\subsection{Proof of Lemma \ref{lemma:FindThresh} -- FindThresh}
\label{sec:FindThresh}

In this section, we describe algorithm $\FindThresh$ from Lemma \ref{lemma:FindThresh}. Recall the definition of $W_H$ from the beginning of Section \ref{sec:las-vegas}. The algorithm (\Cref{alg:FindThresh}) is a simple binary search.

\begin{algorithm2e}[t]
\caption{Algorithm for $\FindThresh(H,s)$}
\label{alg:FindThresh}

$\ell \gets 0$ and $r \gets W_H$

\While(\tcp*[f]{Repeat loop until a value is returned}){TRUE}
{

If $\ell = r$ \Return $r$

$q \gets \lfloor (\ell + r) / 2 \rfloor$ 

Execute $\SPMonteCarlo(H^{+q},s)$.

\If(\tcp*[f]{w.h.p $H^{+q}$ has a negative cycle}){$\SPMonteCarlo$ returned an error}
{
$\ell \gets q+1$ \label{line:findthresh-ell}
}

\Else(\tcp*[f]{$H^{+q}$ does not contain a negative cycle}){
$r \gets q$  \label{line:findthresh-r}
}

}

\end{algorithm2e}

\paragraph{Runtime.}
By a standard binary-search argument, the loop is executed $O(\log(W_H))$ times. Each iteration is dominated by the call to $\SPMonteCarlo$, which takes time $O(m\log^6(n)\log(W_H))$. The total runtime of $\FindThresh$ is thus $O(m\log^6(n)\log^2(W_H))$, as desired.

\paragraph{Correcntess.}
We now show that w.h.p. $\FindThresh(H,s)$ returns a correct value of $B$. 
The only probabilistic component of $\FindThresh$ is $\SPMonteCarlo(G,s)$. Recall the guarantees of $\SPMonteCarlo(G,s)$ from Theorem~\ref{thm:main}: if $G$ contains a negative-weight cycle then the algorithm \emph{always} returns an error message; if $G$ does not contain a negative-weight cycle then with high probability it does not. We say that an execution of $\SPMonteCarlo(G,s)$ is \emph{bad} if $G$ does not contain a negative-weight cycle, but the algorithm returns an error; otherwise we say the execution is \emph{good}.

We know from Theorem \ref{thm:main} that an exeuction of $\SPMonteCarlo$ is good w.h.p. Since $\FindThresh$ only executes $\SPMonteCarlo$ a total of $\log(W_H)$ times, and we will only run $\FindThresh$ on a graph with weights polynomial in $n$, a union bound implies that \emph{every} execution of $\SPMonteCarlo$ is good. We will show that as long as this holds, $\FindThresh$ returns a correct value of $B$. So from now on we assume that every execution of $\SPMonteCarlo$ is good.

We now show that if $\FindThresh(H,s)$ returns $B$, then $G^{+B}$ does not contain a negative-weight cycle, as desired. To see this, note that the algorithm guarantees the invariant that $H^{+r}$ contains no negative-weight cycles: this is true initially because we start with $r = W_H$, so by definition of $W_H$, all edges in $H^{+r}$ are non-negative; it remains true because the algorithm only changes $r$ in Line \ref{line:findthresh-r}, where the else-condition guarantees that $H^{+r}$ contains no negative-weight cycle. Since in the end the algorithm returns $B = r$, we know that $G^{+B}$ does not contain a negative-weight cycle.

Next, we show that if $H$ does contain a negative cycle, then $\FindThresh(H,s)$ returns $B = 0$. This is because $H^{+q}$ will also never contain a negative-weight cycle (because $q$ is always positive), so the algorithm will repeatedly execute the else-statement in Line \ref{line:findthresh-r}, eventually returning $B = \ell = r = 0$.

Finally, we need to show that if $H$ does contain a negative-weight cycle, then $H^{+(B-1)}$ contains a negative-weight cycle. Note that whenever the algorithm changes $\ell$ in Line \ref{line:findthresh-ell}, the preceding if-condition guarantees that $H^{+(\ell-1)}$ contains a negative cycle. The desired claim then follows from the fact that upon termination the algorithm returns $B = \ell = r$.

\ignore{

\If{$\Delta\leq 2$}{
 \label{line:ScaleDown:BaseCase}
Let $\phi_2=0$ and jump to Phase 3 (\Cref{line:ScaleDown:Phase 3}) 
}

Let $d=\Delta/2$. Let $\GB_{\geq 0}:=(V, E, \wB_{\geq 0})$ where
$\wB_{\geq 0}(e) := \max \{0, \wB(e) \}$ for all $e \in E$\label{line:ScaleDown:def}

\BlankLine

\tcp*[h]{\textcolor{blue}{Phase 0: Decompose $V$ to SCCs $V_1, V_2, \ldots$ with weak diameter $dB$ in $G$}}

$\esep \gets \SCCDecomposition(\GB_{\geq 0}, dB)$ (\Cref{lem:SCCDecomposition})
\label{line:ScaleDown:LowDiamDecomposition}

Compute Strongly Connected Components (SCCs) of $\GB \setminus \esep$, denoted by $V_1, V_2, \ldots$ \label{line:ScaleDown:SCC decomposition} \\ \tcp*[h]{Properties: (\Cref{thm:ScaleDown:Weak Diameter}) For each $u,v\in V_i$, $\dist_G(u,v)\leq dB$.} \\ \tcp*[f]{(\Cref{thm:ScaleDown:expected esep}) If $\eta(\GB)\leq \Delta$, then for every $v\in V_i$, $E[P_{\GB}(v)\cap \esep]=O(\log^2 n)$}

\BlankLine

\tcp*[h]{\textcolor{blue}{Phase 1: Make edges inside the SCCs $\GB[V_i]$ non-negative}}

Let $H=\bigcup_i G[V_i]$, i.e. $H$ only contains edges inside the SCCs. \\ \tcp*[f]{(\Cref{thm:phase 1 works}) If $G$ has no negative-weight cycle, then $\eta(H^{+B})\leq d=\Delta/2$.}

$\phi_1\gets \ScaleDown(H, \Delta/2, B)$ \label{line:ScaleDown:Recursion} \tcp*[f]{(\Cref{thm:ScaleDown:Phase 1 conclusion}) $w_{H^{+B}_{\phi_1}}(e)\geq 0$ for all $e\in H$}

\tcp*[h]{\textcolor{blue}{Phase 2: Make all edges in $\GB \setminus \esep$ non-negative}}

$\psi \gets \FixAlmostDag(\GB_{\phi_1} \setminus \esep, \{V_1, V_2, \ldots \})$ (\Cref{lem:almostDag}) \label{line:ScaleDown:callFixAlmostDAG}

$\phi_{2} \gets \phi_1 + \psi$  \tcp*[f]{(\Cref{thm:ScaleDown:Phase 2 conclusion}) All edges  in $(\GB \setminus \esep)_{\phi_2}$ are non-negative}

\BlankLine

\tcp*[h]{\textcolor{blue}{Phase 3: Make all edges in $\GB$ non-negative}}

$\psi'\gets \SPaverage((\GB_s)_{\phi_2}, s)$ (\Cref{lem:spaverage}) \label{line:ScaleDown:Phase 3} \tcp*[f]{(\Cref{thm:ScaleDown:main runtime}) expected time $O(m \log^3 m)$. (To define $(\GB_s)_{\phi_2}$ here, we define $\phi_2(s)=0$.)}

$\phi_3=\phi_2+\psi'$  \tcp*[f]{(\Cref{thm:ScaleDown:main output correctness}) All edges in $\GB_{\phi_3}$ are non-negative.}

\BlankLine
\BlankLine

\Return $\phi_3$ \tcp*{Since $\wB_{\phi_3}(e)\geq 0$, we have $w_{\phi_3}(e)\geq -B$}

\end{algorithm2e}

}

\section{Proof of Lemma \ref{lem:eboundary}}
\label{sec:app-eboundary}
We start with some notation that sets up the main argument.

\paragraph{Notation:}
\begin{itemize}
    \item Throughout the proof we fix edge $(u,v)$ and bound $\Pr[(u,v) \in \eboundary]$.
    \item Whenever the algorithm executes the While loop in Line \ref{line:ldd-main-loop}, it picks some light vertex. We will assume it chooses the next vertex to process according to some arbitrary ordering $s_1, s_2, ...$. That is, there is some ordering $s_1, s_2, ...,$ of the vertices that are marked in-light or out-light (Line \ref{line:ldd-marking-loop}), such that every time the algorithm executes the while loop in Line \ref{line:ldd-main-loop}, it does so by picking the first $s_i$ in this ordering that has not yet been removed from $G$. Note that the ordering can be adversarially chosen; our analysis works with any ordering. Note also that once $s_i$ is chosen, the direction of the ball (in-ball or out-ball) is uniquely determined because every vertex only receives one marking (see loop in Line \ref{line:ldd-marking-loop}). 
    \item When the algorithm picks $s_i$, it grows the ball up to radius $R_i:=R_{s_i}$ from $s_i$, where $R_i \sim Geo(p)$ is a random variable. (See Line \ref{line:ldd-threshold}). (Note: $R_1, ...R_n$ are the \emph{only} source of randomness in this proof.) 
    \item We now make a small technical change that does not affect the algorithm but simplifies the analysis. Let $W_{max}$ be the heaviest edge weight in the graph and note that all shortest distances are $\leq (n-1)W_{max}$. Define $R_{max} = nW_{max}$. Note that if $R_i \geq R_{max}$ then $\stargball(s_i,R_i) = \stargball(s_i,R_{max})$. Thus, whenever $R_i > R_{max}$, we instead set $R_i = R_{max}$. This has zero effect on the behaviour of the algorithm or the set $\eboundary$ but will be convenient for the analysis because it ensures that we are now working with a \emph{finite} probabilistic space: there are a finite number of variables $R_i$ and each one is an integer in the bounded set $[1,R_{max}]$.
    \item Recall that $R_i \sim \Geo(p)$ where $p = \min\{1,40\log(n)/D\}$ (Line \ref{line:ldd-threshold}). We will use this variable $p$ in our analysis.
    
    \item We define $G_i:=G_i(R_1...R_{i-1})$ as follows. If $s_i$ is already removed from $G$ when it is considered by the while loop in Line \ref{line:ldd-main-loop} then we define $G_i$ as the empty set. Else, we define $G_i$ to be the graph after \textit{every} $s_j \in \{s_1, s_2, \ldots s_{i-1}\}$ has either been processed by the while loop (i.e. a ball was grown from this $s_j$) or removed from $G$ by the algorithm. Note that $G_i$ is a random variable whose value depends on $R_1,...,R_{i-1}.$
    \item Define $B_i:=\ball^*_{G_i}(s_i, R_{i})$, where the $*$ refers to the direction (in or out) uniquely determined by the choice of $s_i$. 
    $B_i$ is a random variable whose value depends on $R_1, ..., R_{i}$. Note that if $G_i$ is empty then so is $B_i$.
    \item Define $I_i$ to be the event that $u,v \notin B_1, ..., B_{i-1}$ AND $u \in B_i$ if the algorithm grows an out-ball $B_i=\outgball(s_i,R_i)$; otherwise (if the algorithm grows an in-ball $B_i=\ingball(s_i,R_i)$), $I_i$ is defined to be the event that $u,v \notin B_1, ..., B_{i-1}$ AND $v \in B_i.$
    (I stands for ``included''.)
    \item Define $X_i$ to be the event $v \notin B_i$ if the algorithm grows an out-ball $B_i=\outgball(s_i,R_i)$; otherwise (if the algorithm grows an in-ball $B_i=\ingball(s_i,R_i)$), $X_i$ is defined to be the event  $u \notin B_i.$  
    (X stand for ``excluded''.)
\end{itemize}

\begin{observation} \label{obs:eboundary-proof}

\begin{itemize}
    \item $I_i$ and $X_i$ are independent from $R_j$ for $j > i$; in other words, $I_i$ and $X_i$ only depend on $R_1, ..., R_{i-1},R_i$.
    \item The events $I_i$ are disjoint and thus $\sum \Pr[I_i] \leq 1$.
    \item $\Pr[(u,v) \in \eboundary] = \sum_{i\geq 0} Pr_{R_1, ..., R_i}[I_i \land X_i]$.
\end{itemize}
\end{observation}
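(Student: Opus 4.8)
The plan is to verify the three bullets directly from the notation established above; the first two are essentially bookkeeping, and the real content is the third.

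First I would handle the first bullet (measurability). By induction on $i$ I would show that $G_i$ is a deterministic function of $(R_1,\dots,R_{i-1})$: the while loop in Line~\ref{line:ldd-main-loop} processes the light vertices in the fixed order $s_1,s_2,\dots$, so $G_i$ — the graph seen just before $s_i$ is considered — is obtained from $G$ by successively carving out $B_1,\dots,B_{i-1}$ (an $s_j$ already removed simply contributes $B_j=\emptyset$ and no carving). Since $B_j=\ball^*_{G_j}(s_j,R_j)$ is a deterministic function of $G_j$ and $R_j$, the induction closes, so $B_i$, and hence the events $I_i$ (a statement about $B_1,\dots,B_i$ and the fixed vertices $u,v$) and $X_i$ (a statement about $B_i$ alone), depend only on $R_1,\dots,R_i$; in particular they are independent of every $R_j$ with $j>i$.

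For the second bullet (disjointness of the $I_i$), fix $i<j$. On $I_i$ at least one of $u,v$ lies in $B_i$ (namely $u$ if $B_i$ is an out-ball, $v$ if it is an in-ball), whereas $I_j$ demands $u,v\notin B_1,\dots,B_{j-1}$, and $B_i$ is among those balls; hence $I_i$ and $I_j$ are mutually exclusive. Therefore the $\{I_i\}_i$ are pairwise disjoint and $\sum_i\Pr[I_i]=\Pr[\bigcup_i I_i]\le 1$.

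The third bullet is the crux, and I would obtain it by establishing the exact identity of events
\[
\{(u,v)\in\eboundary\}=\bigcup_i\big(I_i\wedge X_i\big);
\]
combined with the previous bullet (which makes the $I_i\wedge X_i$ disjoint as well) this yields $\Pr[(u,v)\in\eboundary]=\sum_i\Pr[I_i\wedge X_i]$, and the measurability bullet lets each term be written as a probability over $R_1,\dots,R_i$ only. To prove the identity: in this single call, $(u,v)\in\eboundary$ iff $(u,v)\in\boundary(B_i)$ for some ball $B_i$ grown in Line~\ref{line:ldd-boundary}. By the definition of $\boundary$, for an out-ball this means $u\in B_i,\ v\notin B_i$, and for an in-ball it means $v\in B_i,\ u\notin B_i$; in either case $(u,v)$ must still be an edge of $G_i$, which forces $u,v\notin B_1,\dots,B_{i-1}$ (once a ball containing an endpoint is carved out in Line~\ref{line:ldd-remove}, that endpoint — and the edge $(u,v)$ — leaves $G$, so $(u,v)$ can lie in $\boundary(B_j)$ for at most one index $j$). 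Reading off the definitions, ``$u,v\notin B_1,\dots,B_{i-1}$, the direction-appropriate endpoint of $(u,v)$ lies in $B_i$, and the other does not'' is precisely $I_i\wedge X_i$; conversely $I_i\wedge X_i$ guarantees both endpoints survive into $G_i$ (so $(u,v)\in E(G_i)$) and that the in/out pattern matches the definition of $\boundary(B_i)$, so $(u,v)\in\eboundary$.

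The step I expect to be the main obstacle is this third bullet: it requires carefully aligning the asymmetric definition of $\boundary(\cdot)$ for in- versus out-balls with the asymmetric definitions of $I_i$ and $X_i$, and disposing of the degenerate cases cleanly — an already-removed $s_i$ giving $B_i=\emptyset$ (so $I_i$ fails), an endpoint already absorbed by an earlier ball (so $I_i$ fails for the later index), and both endpoints landing in $B_i$ at once (in which case $(u,v)$ is an internal edge of $B_i$, not a boundary edge, and $X_i$ fails). Everything else is routine.
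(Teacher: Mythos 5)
Your proposal is correct and follows essentially the same route as the paper's (admittedly very terse) proof: measurability by tracing the deterministic dependence of $G_i$ and $B_i$ on $R_1,\dots,R_i$, disjointness because a ball that triggers $I_i$ removes an endpoint of $(u,v)$, and the third bullet by identifying $\{(u,v)\in\eboundary\}$ with the disjoint union $\bigcup_i(I_i\wedge X_i)$. You are actually a bit more careful than the paper: its proof of the second bullet says ``if $I_i$ is true then $u\in B_i$,'' which is only literally correct for out-balls, whereas you track the in-ball/out-ball asymmetry explicitly, and you also spell out why the events $I_i\wedge X_i$ are disjoint and why $(u,v)\in E(G_i)$ follows from $u,v\notin B_1,\dots,B_{i-1}$, which the paper leaves implicit.
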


\begin{proof}
The first property is clear. For the second property, note that if $I_i$ is true then $u \in B_i$, so $u$ will be removed from the graph (Line \ref{line:ldd-remove}) and not be present in any $G_j$, $j > i$, so all $I_j$, $j > i$ are false. For the third property, note that $(u,v)$ is added to $\eboundary$ if and only if for some $i$, both $u$ and $v$ are present in $G_i$ and $(u,v)$ is in the boundary of ball $B_i$ (Line \ref{line:ldd-boundary}), which is precisely captured by $I_i \land X_i$.
\end{proof}

Before proceeding with the proof, we will state a common assumption about probabilistic notation that greatly simplifies the presentation

\begin{assumption}\label{assum:cond-prob}
Given probabilistic events $A,B$, we define $\Pr[A|B] = 0$ if $\Pr[B] = 0$. 
\end{assumption}

This leads to significantly simpler notation because it allows us to write $\Pr[A \land B] = \Pr[B] \cdot \Pr[A|B]$ and separately bound $\Pr[A|B]$ without worrying about undefined conditional probabilities when $\Pr[B]=0$.
This notational simplification is commonly used to state the following law of total probability (which we will use later): If $
\left\{B_1, B_2, \ldots \right\}$ is a finite or countably infinite partition of a sample space (i.e. it is a set of pairwise disjoint events whose union is the entire sample space) and $A$ is another event, then
\begin{align}\label{eq:law of total probability}
    \Pr[A]
    =\sum_i \Pr[A\wedge B_i]  
    = \sum_i \Pr[A\mid B_i]\Pr[B_i].  
\end{align}
(Without \Cref{assum:cond-prob} the sum on the right-hand side and its applications below must be only over $i$ such that $\Pr[B_i]>0$.)
The assumption is justified because we are dealing with a finite probability space, so all zero-probability events combined still have zero probability mass.\footnote{By contrast, in an infinite space one would have to be more careful about such an assumption: it is possible for \textit{every} individual instantiation of the random variables to have probability $0$, but all infinity of them combined to have probability mass $1$.} 

This notational assumption allows us to state the following lemma.

\begin{lemma}\label{lem:eboundary-proof}
For any $i$, $\Pr_{R_1, \ldots, R_{i}}[X_i | I_i] \leq pw(u,v)$. (Recall that $p = \min\{1,40\log(n)/D\}$.)
\end{lemma}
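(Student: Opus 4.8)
## Proof plan for Lemma \ref{lem:eboundary-proof}

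The plan is to analyze the ball-growing process conditioned on $I_i$, i.e. on the event that $u$ (or $v$, depending on the direction) is the ``included'' endpoint that lands in $B_i = \stargball(s_i,R_i)$ while both $u,v$ survived all earlier balls. I will treat the out-ball case $B_i = \outgball(s_i,R_i)$; the in-ball case is symmetric, swapping the roles of $u$ and $v$. Under $I_i$ we have $u,v \notin B_1,\dots,B_{i-1}$, so in particular $s_i$ was not removed and the graph $G_i$ is a fixed function of $R_1,\dots,R_{i-1}$; moreover $u \in \outgball_{G_i}(s_i,R_i)$. The event $X_i$ is ``$v \notin \outgball_{G_i}(s_i,R_i)$''. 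The key observation is that $\dist_{G_i}(s_i,v) \le \dist_{G_i}(s_i,u) + w_{G_i}(u,v) = \dist_{G_i}(s_i,u) + w(u,v)$, using that $(u,v)$ is an edge of $G_i$ (both endpoints are present) with its original weight. Hence if $R_i \ge \dist_{G_i}(s_i,u) + w(u,v)$, then $v \in \outgball_{G_i}(s_i,R_i)$, so $X_i$ fails. Contrapositively, $X_i$ forces $\dist_{G_i}(s_i,u) \le R_i < \dist_{G_i}(s_i,u) + w(u,v)$, i.e. $R_i$ lies in a window of length exactly $w(u,v)$ just above the value $r^\star := \dist_{G_i}(s_i,u)$.

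The next step is to exploit the memorylessness of the geometric distribution. Fix any instantiation $\rho = (R_1,\dots,R_{i-1})$ with $\Pr[\rho] > 0$ that is consistent with $I_i$ being possible — this determines $G_i$ and hence $r^\star = r^\star(\rho)$. Conditioned on $\rho$, the event $I_i$ is precisely ``$R_i \ge r^\star$'' (that is exactly what it takes for $u$ to enter the out-ball, given $u,v$ already survived), and $I_i \wedge X_i$ is ``$r^\star \le R_i < r^\star + w(u,v)$''. By the memoryless property of $\Geo(p)$,
\[
\Pr\bigl[\,R_i < r^\star + w(u,v)\ \big|\ R_i \ge r^\star\,\bigr] \;=\; \Pr\bigl[\,R_i < w(u,v)+1\,\bigr]\ \text{shifted appropriately} \;\le\; 1 - (1-p)^{w(u,v)} \;\le\; p\,w(u,v),
\]
where the last step is the standard union-bound / Bernoulli inequality $1-(1-p)^k \le pk$. (One must be slightly careful with the truncation at $R_{\max}$ introduced in the notation paragraph, but since $r^\star + w(u,v) \le R_{\max}$ whenever the window is non-degenerate — all relevant distances are at most $(n-1)W_{\max} < R_{\max}$ — the truncation does not interfere; if $r^\star > R_{\max}$ then $I_i$ itself has conditional probability $0$ and Assumption \ref{assum:cond-prob} makes the bound vacuous.) This gives $\Pr[X_i \mid I_i, \rho] \le p\,w(u,v)$ for every fixed $\rho$.

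Finally I combine over all $\rho$ using the law of total probability \eqref{eq:law of total probability}, partitioning the space by the value of $(R_1,\dots,R_{i-1})$:
\[
\Pr_{R_1,\dots,R_i}[X_i \mid I_i] \;=\; \sum_{\rho} \Pr[X_i \mid I_i, \rho]\,\Pr[\rho \mid I_i] \;\le\; p\,w(u,v)\sum_{\rho}\Pr[\rho \mid I_i] \;=\; p\,w(u,v),
\]
which (since $p = \min\{1, 40\log(n)/D\}$, though the exact constant does not matter here) is the claimed bound. The step I expect to be the main obstacle is making the conditioning rigorous: one has to check carefully that conditioning on $I_i$ together with $\rho$ really does reduce, on the fiber $\{R_1,\dots,R_{i-1}=\rho\}$, to conditioning the single independent variable $R_i$ on $\{R_i \ge r^\star(\rho)\}$ — this uses the independence of the $R_j$'s and the fact (noted in Observation \ref{obs:eboundary-proof}) that $I_i$ and $X_i$ depend only on $R_1,\dots,R_i$ — and that the degenerate cases ($G_i$ empty, $r^\star$ exceeding the truncation threshold, $\Pr[I_i,\rho]=0$) are all handled consistently by Assumption \ref{assum:cond-prob}. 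Everything else is a short deterministic distance-inequality argument plus the memoryless property.
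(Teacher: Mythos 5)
Your proof is correct and follows essentially the same route as the paper's: the paper also reduces to a fixed instantiation of $R_1,\dots,R_{i-1}$ (its Claim \ref{claim:eboundary-proof}), applies the triangle inequality $\dist_{G_i}(s_i,v)\le\dist_{G_i}(s_i,u)+w(u,v)$, invokes memorylessness of $\Geo(p)$ to get the $pw(u,v)$ bound on the conditional probability, and then recombines via the law of total probability under Assumption \ref{assum:cond-prob}. Your handling of the $R_{\max}$ truncation and the degenerate $\rho$ cases matches the paper's remarks as well.
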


We first show that Lemma  \ref{lem:eboundary-proof} completes the proof:

\begin{proof}[Proof of Lemma \ref{lem:eboundary} given Lemma \ref{lem:eboundary-proof}]
By Observation \ref{obs:eboundary-proof} and \eqref{eq:law of total probability} we have\footnote{Here we use \Cref{assum:cond-prob}; otherwise, some of the sums must be over all $i$ such that $\Pr[I_i]>0$.} 
\begin{align*}
   \Pr[(u,v) \in \eboundary]  & = \sum_{i\geq 0} Pr_{R_1, ..., R_i}[I_i \land X_i] 
  && = \sum_{i \geq 0} \Pr[I_i] \cdot \Pr[X_i | I_i] \\
  &\leq p w(u,v) \cdot \sum_{i\geq 0} \Pr[I_i] &&\leq p w(u,v) = O(\log(n) \cdot w(u,v) / D).\qedhere
\end{align*}
\end{proof}

To prove Lemma \ref{lem:eboundary-proof} we use the following claim, which is the crux of the analysis. The claim allows us to fix the first $i-1$ random variables $R_1,\ldots, R_{i-1}$ and only treat $R_i$ as random.

\begin{claim}\label{claim:eboundary-proof} 
Fix \emph{any} instantiations of the first $i-1$ random variables $R_1=r_1, ..., R_{i-1}=r_{i-1}$ such that the resulting graph $G_i = G(r_1, ..., r_{i-1})$ contains both $u$ and $v$. Then, $Pr_{R_i}[X_i|I_i] \leq p w(u,v)$.
\end{claim}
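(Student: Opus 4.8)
The plan is to prove Claim~\ref{claim:eboundary-proof} by a direct coupling argument on the geometric random variable $R_i$, exploiting the memorylessness of the geometric distribution. Fix instantiations $R_1 = r_1, \ldots, R_{i-1} = r_{i-1}$ so that the graph $G_i$ is determined and contains both $u$ and $v$; from now on only $R_i$ is random. Without loss of generality assume the algorithm grows an out-ball $B_i = \outgball_{G_i}(s_i, R_i)$ (the in-ball case is symmetric, swapping the roles of $u$ and $v$ and reversing edge directions). In this case $I_i$ is the event that $u \in B_i$ and $X_i$ is the event that $v \notin B_i$, i.e.\ $\dist_{G_i}(s_i, u) \le R_i$ and $\dist_{G_i}(s_i, v) > R_i$.

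First I would observe that $v \in \outgball_{G_i}(s_i, R_i)$ whenever $\dist_{G_i}(s_i, v) \le R_i$, and that the triangle inequality gives $\dist_{G_i}(s_i, v) \le \dist_{G_i}(s_i, u) + w(u,v)$ (using edge $(u,v)$ itself). Hence, conditioned on $I_i$ (so $\dist_{G_i}(s_i,u) \le R_i$), if additionally $R_i \ge \dist_{G_i}(s_i,u) + w(u,v)$ then $v \in B_i$, i.e.\ $X_i$ fails. Therefore, on the event $I_i$, the event $X_i$ can only occur if $\dist_{G_i}(s_i, u) \le R_i < \dist_{G_i}(s_i, u) + w(u,v)$. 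Writing $\ell = \dist_{G_i}(s_i, u)$ (a fixed number now that $R_1,\dots,R_{i-1}$ and hence $G_i$ are fixed), I want to bound $\Pr_{R_i}[\ell \le R_i < \ell + w(u,v) \mid R_i \ge \ell]$, since $I_i \subseteq \{R_i \ge \ell\}$ and more precisely the event $I_i$ is exactly $\{R_i \ge \ell\}$ in the out-ball case. By the memorylessness of $\Geo(p)$, conditioned on $R_i \ge \ell$ the variable $R_i - \ell$ is again $\Geo(p)$-distributed (shifted appropriately), so $\Pr[R_i < \ell + w(u,v) \mid R_i \ge \ell] = \Pr[\Geo(p) \le w(u,v)] = 1 - (1-p)^{w(u,v)} \le p\, w(u,v)$ by the union bound (or Bernoulli's inequality). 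This gives $\Pr_{R_i}[X_i \mid I_i] \le p\, w(u,v)$, proving the claim. (A minor technical point: I should handle the truncation at $R_{\max}$ introduced in the notation, but since $w(u,v) \le W_{\max} < R_{\max}$ and $\ell \le (n-1)W_{\max}$, the truncation only ever makes balls larger, so it can only decrease $\Pr[X_i \mid I_i]$; alternatively one checks the relevant window $[\ell, \ell+w(u,v))$ lies below $R_{\max}$ whenever $I_i$ has positive probability. I would state this briefly rather than belabor it.)

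Finally I would deduce Lemma~\ref{lem:eboundary-proof} from the claim by the law of total probability over the instantiations of $R_1, \ldots, R_{i-1}$: partition according to the value of $(r_1, \ldots, r_{i-1})$; for those making $G_i$ not contain both $u$ and $v$ we have $\Pr[I_i \mid \cdot] = 0$ (if $G_i$ is empty or misses $u$ or $v$, the event $I_i$ — which requires $u,v \notin B_1,\dots,B_{i-1}$ and $u \in B_i$ — cannot hold, using Assumption~\ref{assum:cond-prob} for the conditional), and for the rest the claim gives $\Pr[X_i \mid I_i, r_1,\dots,r_{i-1}] \le p\,w(u,v)$ uniformly; averaging yields $\Pr_{R_1,\dots,R_i}[X_i \mid I_i] \le p\, w(u,v)$ as claimed.

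The main obstacle I anticipate is getting the conditioning bookkeeping exactly right: $I_i$ and $X_i$ both depend on $R_i$ as well as on $R_1,\dots,R_{i-1}$, and the key simplification is that once $R_1,\dots,R_{i-1}$ are frozen, $I_i$ in the out-ball case is literally the threshold event $\{R_i \ge \dist_{G_i}(s_i,u)\}$, which is exactly the event on which memorylessness of $\Geo(p)$ can be applied cleanly. Making sure the "$u,v \notin B_1,\dots,B_{i-1}$" part of $I_i$ is fully absorbed into the conditioning on $(r_1,\dots,r_{i-1})$ — i.e.\ that it depends only on the frozen variables — is the subtle point, but it holds because $B_1,\dots,B_{i-1}$ are determined by $R_1,\dots,R_{i-1}$, so this part of $I_i$ is either surely true or surely false given the instantiation, and in the "surely false" case the conditional probability is $0$ and contributes nothing.
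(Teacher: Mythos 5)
Your proof is correct and takes essentially the same route as the paper: reduce to the out-ball case, observe that once $G_i$ is frozen the event $I_i$ is exactly $\{R_i \ge \dist_{G_i}(s_i,u)\}$, apply the triangle inequality $\dist_{G_i}(s_i,v)\le\dist_{G_i}(s_i,u)+w(u,v)$, and then invoke memorylessness of $\Geo(p)$ plus a union/Bernoulli bound to get $pw(u,v)$. The only slight wobble is the aside that truncation at $R_{\max}$ "makes balls larger"; in fact the truncation has \emph{zero} effect on the balls (any $R_i\ge R_{\max}$ already captures everything, since all distances are $\le (n-1)W_{\max}$), but your alternative check that the window $[\ell,\ell+w(u,v))$ stays below $R_{\max}$ is exactly what the paper uses and suffices.
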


\begin{proof}[Proof of Lemma \ref{lem:eboundary-proof} assuming Claim \ref{claim:eboundary-proof}]

Applying the law of total probability \eqref{eq:law of total probability} over all possible instantiations $r_1, ..., r_{i-1}$ of $R_1, ..., R_{i-1}$ we have 
\begin{equation}\label{eq:full-sum}
\Pr_{R_1, ..., R_{i}}[X_i | I_i] = \sum_{r_1, ..., r_{i-1}} \Pr_{R_i}[X_i | I_i \land R_1 = r_1 \land \ldots \land R_{i-1} = r_{i-1}] \cdot \Pr[R_1 = r_1 \land \ldots \land R_{i-1} = r_{i-1}] 
\end{equation}

To bound the above, we consider two cases. If $R_1 = r_1, \ldots, R_{i-1} = r_{i-1}$ are such that $G_i = G(r_1, ..., r_{i-1})$ does not contain $u$ or does not contain $v$ then by definition $I_i$ is false, so $Pr_{R_i}[X_i | I_i \land R_1 = r_1 \land \ldots \land R_{i-1} = r_{i-1}] = 0$. (Here we are using Assumption \ref{assum:cond-prob}.) The second case is that $G_i = G(r_1, ..., r_{i-1})$ contains both $u$ and $v$, in which case we can apply Claim \ref{claim:eboundary-proof}. Thus, in either case, we have
\begin{equation} \label{eq:cond-prob}
\Pr_{R_i}[X_i | I_i \land R_1 = r_1 \land \ldots \land R_{i-1} = r_{i-1}] \leq p w(u,v)
\end{equation}

Combining (\ref{eq:full-sum}) and (\ref{eq:cond-prob}) we have
$$\Pr_{R_1, ..., R_{i}}[X_i | I_i] \leq p w(u,v) \sum_{r_1, ..., r_{i-1}} \Pr[R_1 = r_1 \land \ldots \land R_{i-1} = r_{i-1}] = p w(u,v).\qedhere$$
\end{proof}

All that remains is to prove Claim \ref{claim:eboundary-proof}.

\begin{proof}[Proof of Claim \ref{claim:eboundary-proof}]

Recall that in this claim we have a fixed graph $G_i$ that contains both $u$ and $v$ and that the only randomness now comes from $R_i \sim \Geo(p)$. Note also that the LDD algorithm (Algorithm \ref{alg:LDD}) is removing vertices and edges from the graph $G$ over time, and that by when the algorithm reaches $s_i$, the remaining graph $G$ that it is working on is by definition $G = G_i$.

For the rest of this proof, we will assume that the algorithm grows an out-ball $\outgball(s_i,R_i)$; the case for an in-ball is exactly analogous. We thus have: 
\begin{align*}
\Pr[X_i | I_i] &= \Pr[v \notin \outgball(s_i,R_i)) \mid u \in \outgball(s_i,R_i))  \\ 
&= \Pr[R_i < \dist_G(s,v) \mid R_i \geq \dist_G(s,u)] \\ 
&\leq \Pr[R_i < \dist_G(s,u) + w(u,v) | R_i \geq \dist_G(s,u)]
\end{align*}
We can bound the last line by observing that the geometric distribution observes the memorylessness property: for any $j > i$ we have $Pr[R_i \leq i+j | R_i\geq i]\leq Pr[R_i \leq j]$. Combining this with the equation above we have
\begin{equation}
\label{eqn:use-memoryless}
\Pr[X_i \mid I_i] \leq \Pr[R_i < \dist_G(s,u) + w(u,v) \mid R_i \geq \dist_G(s,u)] \leq \Pr[R_i \leq w(u,v)] \leq pw(u,v),
\end{equation}
where the last inequality follows from a simple analysis of the geometric distribution: each coin is head with probability $p$, so $\Pr[R_i \leq w(u,v)]$ is the probability that one of the first $w(u,v)$ coins is a head, which by the union bound is at most $pw(u,v)$.

\begin{remark}
The second inequality of (\ref{eqn:use-memoryless}) used the memorylessness property of the geometric distribution, but we have to be a bit careful because recall that in our analysis, in order to maintain a finite probability space, we argued the algorithm sampling $R_i \sim \Geo(p)$ is equivalent to the algorithm sampling  $R_i \sim \Geo(p)$ but then rounding down to $R_i = R_{max} = nW_{max}$ if $R_i > R_{max}$. This means that we can only apply the memorlyessness property in Equation (\ref{eqn:use-memoryless}) if $\dist_G(s,u) + w(u,v) \leq R_{max}$. Fortunately, this is indeed the case because $W_{max}$ is the maximum edge weight, so $w(u,v) \leq W_{max}$ and $\dist_G(s,u) \leq (n-1)W_{max}$.\qedhere
\end{remark}
\end{proof}

We have thus completed the proof of Lemma \ref{lem:eboundary}.

\end{appendices}

\end{document}